\newcolumntype{Y}{>{\centering\arraybackslash}X} 
\DeclarePairedDelimiter\floor{\lfloor}{\rfloor}
\newcommand*{\MyLaw}{\mathrm{law}}
\newcommand*{\eqlawU}{\ensuremath{\mathop{\overset{\MyLaw}{=}}}} 
\newcommand*{\eqlaw}{\mathop{\overset{\MyLaw}{\resizebox{\widthof{\eqlawU}}{\heightof{=}}{=}}}}
\newcommand\independent{\protect\mathpalette{\protect\independenT}{\perp}}
  \def\independenT#1#2{\mathrel{\rlap{$#1#2$}\mkern2mu{#1#2}}}
\newcommand{\RR}{\mathbb{R}}
\DeclareMathOperator*{\plim}{lim\vphantom{p}}
\DeclareMathOperator{\E}{\mathbb{E}}
\DeclareMathOperator{\Prob}{\mathbb{P}}
\DeclareMathOperator{\Ind}{\mathds{1}}
\newtheorem{theorem}{Theorem}[section]
\newtheorem{proposition}[theorem]{Proposition}
\newtheorem{remark}[theorem]{Remark}
\numberwithin{equation}{section}
\title{Convergence of an Euler scheme for a hybrid stochastic-local\\ volatility model with stochastic rates in foreign exchange markets}
\author{\scshape{andrei cozma}\thanks{\footnotesize\scshape{Mathematical Institute, University of Oxford, Oxford OX2 6GG, United Kingdom}\newline
\hspace*{1.8em}\{andrei.cozma, matthieu.mariapragassam, christoph.reisinger\}@maths.ox.ac.uk} \and \scshape{matthieu mariapragassam\footnotemark[1]} \and \scshape{christoph reisinger\footnotemark[1]}}
\date{}
\begin{document}
\maketitle

\begin{abstract}
\noindent
We study the Heston--Cox--Ingersoll--Ross\scalebox{.9}{\raisebox{.5pt}{++}} stochastic-local volatility model in the context of foreign exchange markets and propose a Monte Carlo simulation scheme which combines the full truncation Euler scheme for the stochastic volatility component and the stochastic domestic and foreign short interest rates with the log-Euler scheme for the exchange rate. We establish the exponential integrability of full truncation Euler approximations for the Cox--Ingersoll--Ross process and find a lower bound on the explosion time of these exponential moments. Under a full correlation structure and a realistic set of assumptions on the so-called leverage function, we prove the strong convergence of the exchange rate approximations and deduce the convergence of Monte Carlo estimators for a number of vanilla and path-dependent options. Then, we perform a series of numerical experiments for an autocallable barrier dual currency note.

\vspace{1em}
\noindent
\textbf{Keywords:} Heston model, shifted square-root process, calibration, Monte Carlo simulation, exponential integrability, strong convergence

\vspace{.6em}
\noindent
\textbf{Mathematics Subject Classification (2010):} 60H35, 65C05, 65C30, 65N21

\vspace{.6em}
\noindent
\textbf{JEL Classification:} C15, C63, G13
\end{abstract}

\section{Introduction}\label{sec:intro}

The class of stochastic-local volatility (SLV) models have become very popular in the financial sector in recent years. They contain a stochastic volatility component as well as a local volatility component (called the leverage function) and combine advantages of the two. According to Ren et al. \cite{Ren:2007}, Tian et al. \cite{Tian:2015} and van der Stoep et al. \cite{Stoep:2014}, the general SLV model allows for a better calibration to European options and improves the pricing and risk-management performance when compared to pure local volatility (LV) or pure stochastic volatility (SV) models. We focus on the Heston SLV model because the Cox--Ingersoll--Ross (CIR) process for the variance is widely used in the industry due to its desirable properties, such as mean-reversion and non-negativity, and since semi-analytic formulae are available for calls and puts under Heston's model and can help calibrate the parameters easily. The local volatility component allows a perfect calibration to the market prices of vanilla options. At the same time, the stochastic volatility component already provides built-in smiles and skews which give a rough calibration, so that a relatively flat leverage function suffices for a perfect calibration.

In order to improve the pricing and hedging of foreign exchange (FX) options, we introduce stochastic domestic and foreign short interest rates into the model. Assuming constant interest rates is appealing due to its simplicity and does not lead to a serious mispricing of options with short maturities. However, empirical results \cite{Haastrecht:2009} have confirmed that the constant interest rate assumption is inappropriate for long-dated FX products, and the effect of interest rate volatility can be as relevant as that of the FX rate volatility for longer maturities. There has been a great deal of research carried out in the area of option pricing with stochastic volatility and interest rates in the past couple of years. For instance, Van Haastrecht et al. \cite{Haastrecht:2009} extended the model of Sch\"obel and Zhu \cite{Schobel:1999} to currency derivatives by including stochastic interest rates, a model that benefits from analytical tractability even in a full correlation setting due to the processes being Gaussian, whereas Ahlip and Rutkowski \cite{Ahlip:2013}, Grzelak and Oosterlee \cite{Grzelak:2011} and Van Haastrecht and Pelsser \cite{Haastrecht:2011} examined the Heston--2CIR/Vasicek hybrid models and concluded that a full correlation structure gives rise to a non-affine model even under a partial correlation of the driving Brownian motions.

There has also been an increasing interest in the calibration to vanilla options of models with stochastic and local volatility and stochastic interest rate dynamics. For instance, Guyon and Labord\'ere \cite{Guyon:2011} examined Monte Carlo-based calibration methods for a 3-factor SLV equity model with a stochastic domestic rate and discrete dividends, while Cozma, Mariapragassam and Reisinger \cite{Mariapragassam:2016} and Deelstra and Rayee \cite{Deelstra:2013} examined 4-factor hybrid SLV models with stochastic domestic and foreign rates and proposed different approaches to calibrate the leverage function. Hambly, Mariapragassam and Reisinger \cite{Mariapragassam:2015} took a different path and discussed the calibration of a 2-factor SLV model to barriers. As an aside, in this paper, the spot FX rate is defined as the number of units of domestic currency per unit of foreign currency.

The model of Cox et al. \cite{Cox:1985} is very popular when modeling short rates because the square-root (CIR) process admits a unique strong solution, is mean-reverting and analytically tractable. Cox et al. found the conditional distribution to be noncentral chi-squared and Broadie and Kaya \cite{Broadie:2006} proposed an efficient exact simulation scheme for the square-root process based on acceptance-rejection sampling. However, their algorithm presents a number of disadvantages such as complexity and lack of speed, and it is not fit to price strongly path-dependent options that require the value of the FX rate at a large number of time points. Furthermore, in the context of a stochastic-local volatility model, the correlations between the underlying processes make it difficult to simulate a noncentral chi-squared increment together with a correlated increment for the FX rate and the short rates, if applicable.

As of late, the non-negativity of the CIR process is considered to be less desirable when modeling short rates. On the one hand, the central banks have significantly reduced the interest rates since the 2008 financial crisis and it is now commonly accepted that interest rates need not be positive. On the other hand, if interest rates dropped too far below zero, then large amounts of money would be withdrawn from banks and government bonds, putting a severe squeeze on deposits. Hence, we model the domestic and foreign short rates using the shifted CIR (CIR\scalebox{.9}{\raisebox{.5pt}{++}}) process of Brigo and Mercurio \cite{Brigo:2001}. The CIR\scalebox{.9}{\raisebox{.5pt}{++}} model allows the short rates to become negative and can fit any observed term structure exactly while preserving the analytical tractability of the original model for bonds, caps, swaptions and other basic interest rate products.

In the present work, we put forward the Heston--Cox--Ingersoll--Ross\scalebox{.9}{\raisebox{.5pt}{++}} stochastic-local volatility (Heston--2CIR\scalebox{.9}{\raisebox{.5pt}{++}} SLV) model to price FX options. Independent of the correlation structure, the model is non-affine and hence a closed-form solution to the European option valuation problem is not available. Finite-difference methods are popular in finance and when the evolution of the exchange rate is governed by a complex system of stochastic differential equations (SDEs), it all comes down to solving a higher-dimensional partial differential equation (PDE). This can prove to be difficult due to the curse of dimensionality, because the number of grid points required increases exponentially with the number of dimensions. Monte Carlo algorithms are often preferred due to their ability to handle path-dependent features easily and there are numerous discretization schemes available, like the simple Euler--Maruyama scheme, see, e.g., Glasserman \cite{Glasserman:2003}. However, there are several disadvantages of this discretization, such as the fact that the approximation process can become negative with non-zero probability. In practice, one can set the process equal to zero when it turns negative -- called an absorption fix -- or reflect it in the origin -- referred to as a reflection fix. An overview of the Euler schemes considered thus far in the literature, including the full truncation scheme, can be found in Lord et al. \cite{Lord:2010}.

The usual theorems in Kloeden and Platen \cite{Kloeden:1999} on the convergence of numerical simulations require the drift and diffusion coefficients to be globally Lipschitz and satisfy a linear growth condition, whereas Higham et al. \cite{Higham:2002} extended the analysis to a simple Euler scheme for a locally Lipschitz SDE. The standard convergence theory does not apply to the CIR process since the square-root is not locally Lipschitz around zero. Consequently, alternative approaches have been employed to prove the weak or strong convergence of various discretizations for the square-root process. Deelstra and Delbaen \cite{Deelstra:1998}, Alfonsi \cite{Alfonsi:2005}, Higham and Mao \cite{Higham:2005} and Lord et al. \cite{Lord:2010} examined the strong global approximation and found either a logarithmic convergence rate or none at all. Strong convergence of order $1/2$ of the symmetrized and the backward (drift-implicit) Euler--Maruyama (BEM) schemes was established in Berkaoui et al. \cite{Berkaoui:2008} and Dereich et al. \cite{Dereich:2012}, respectively, albeit in a very restricted parameter regime for the symmetrized scheme. Alfonsi \cite{Alfonsi:2013} and Neuenkirch and Szpruch \cite{Szpruch:2014} recently showed that the BEM scheme for the SDE obtained through a Lamperti transformation is strongly convergent with rate one in the case of an inaccessible boundary point, while Hutzenthaler et al. \cite{Hutzenthaler:2014a} established a positive strong order of convergence in the case of an accessible boundary point.

Hutzenthaler et al. \cite{Hutzenthaler:2014b} identified a class of stopped increment-tamed Euler approximations for nonlinear systems of SDEs with locally Lipschitz drift and diffusion coefficients and proved that they preserve the exponential integrability of the exact solution under some mild assumptions, unlike the explicit, the linear-implicit or some tamed Euler schemes, which rarely do. However, the results of Hutzenthaler et al. do not apply to the present work beacuse the diffusion coefficient in the square-root model is not locally Lipschitz. In this work, we first prove that an explicit Euler scheme for the CIR process retains exponential integrability properties, which plays a key role in establishing the boundedness of moments of Euler approximations for SDE systems with CIR dynamics in one or more dimensions.

To the best of our knowledge, the convergence of Monte Carlo algorithms in a stochastic-local volatility context has yet to be established. Higham and Mao \cite{Higham:2005} considered an Euler simulation of the Heston model with a reflection fix in the diffusion coefficient to avoid negative values. They studied convergence properties of the stopped approximation process and used the boundedness of payoffs to prove weak convergence for a European put and an up-and-out call option. However, Higham and Mao mentioned that the arguments cannot be extended to cope with unbounded payoffs. We work under a different Euler scheme and overcome this problem by proving the uniform boundedness of moments of the true solution and its approximation, and then the strong convergence of the latter. The existence of moment bounds for Euler approximations of the Heston model is important for deriving strong convergence \cite{Higham:2002} and it has been a long-standing open problem until now. Furthermore, the existence of moment bounds for hybrid Heston-type stochastic-local volatility models plays an important role in the calibration routine \cite{Mariapragassam:2016}.

In this paper, we focus on the Heston stochastic-local volatility model with CIR\scalebox{.9}{\raisebox{.5pt}{++}} short interest rates and examine convergence properties of the Monte Carlo algorithm with the full truncation Euler (FTE) discretization for the squared volatility and the two short rates, and the log-Euler discretization for the exchange rate. We prefer the full truncation scheme introduced by Lord et al. \cite{Lord:2010} because it preserves the positivity of the original process, is easy to implement and is found empirically to produce the smallest bias of all explicit Euler schemes and to outperform the quasi-second order schemes of Kahl and J\"ackel \cite{Kahl:2006} and Ninomiya and Victoir \cite{Ninomiya:2008}. We also choose the FTE scheme over the BEM scheme since the latter works only in a restricted parameter regime that is often unrealistic, since the BEM scheme is equivalent to the implicit Milstein method employed by Kahl and J\"ackel to first order in the time step, and since the
quanto correction term in the dynamics of the foreign short rate would lead to technical challenges in the convergence analysis.

The quadratic exponential (QE) scheme of Andersen \cite{Andersen:2008} uses moment-matching techniques to approximate the noncentral chi-squared distributed square-root process and is an efficient alternative to the FTE scheme that typically produces a smaller bias at the expense of a more complex implementation. Empirical results suggest that the QE scheme outperforms the FTE scheme in the case of a severely violated Feller condition \cite{Haastrecht:2010}, whereas when the Feller condition is satisfied, the two schemes perform comparably well because the bias with the FTE scheme is small even when only a few number of time steps are used per year. Since there is no natural extension to the QE scheme to multi-dimensional Heston-type models that involve several correlated square-root processes such as the ones studied here, we prefer the FTE scheme in the subsequent convergence analysis.

In summary, to the best of our knowledge, this paper is the first to establish: (1) the exponential integrability of the FTE scheme for the CIR process; (2) the boundedness of moments of order greater than 1 of approximation schemes for Heston-type models; (3) the strong convergence of approximation schemes for models with stochastic-local volatility (with a bounded and Lipschitz leverage function) and stochastic CIR\scalebox{.9}{\raisebox{.5pt}{++}} short rates; (4) the weak convergence for options with unbounded payoffs, in particular, for European, Asian and barrier contracts (up to a critical time).

The remainder of this paper is structured as follows. In Section~\ref{sec:setup}, we introduce the model and discuss the postulated assumptions and an efficient calibration. In Section~\ref{sec:analysis}, we first define the simulation scheme and discuss the main theorem. Then, we investigate the uniform exponential integrability of the full truncation scheme for the square-root process and prove convergence of the exchange rate approximations. We conclude the section by establishing the convergence of Monte Carlo simulations for computing the expected discounted payoffs of European, Asian and barrier options. In Section~\ref{sec:numerics}, we carry out numerical experiments to justify our choice of model and to demonstrate convergence. Finally, Section~\ref{sec:conclusion} contains a short discussion.

\section{Preliminaries}\label{sec:setup}

\subsection{Model definition}\label{subsec:model}

In its most general form, we have in mind a model in an FX market, for the spot FX rate $S$, the squared volatility of the FX rate $v$, the domestic short interest rate $r^{d}$ and the foreign short interest rate $r^{f}$. Unless otherwise stated, in this paper, the subscripts and superscripts ``$d$'' and ``$f$'' indicate domestic and foreign, respectively. Consider a filtered probability space $\left(\Omega,\mathcal{F},\mathbb{P}\right)$ and suppose that the dynamics of the underlying processes are governed by the following system of SDEs under the domestic risk-neutral measure $\mathbb{Q}$:
\begin{align}\label{eq2.1}
	\begin{dcases}
	dS_{t} = \big(r^{d}_{t}-r^{f}_{t}\big)S_{t}dt + \sigma(t,S_t)\sqrt{v_{t}}\hspace{.5pt}S_{t}\hspace{1pt}dW^{s}_{t},\hspace{.75em} S_{0}>0, \\[2pt]
	dv_{t} \hspace{1pt} = k(\theta-v_{t})dt + \xi\sqrt{v_{t}}\,dW^{v}_{t},\hspace{.75em} v_{0}>0, \\[0.5pt]
	\hspace{5.5pt} r^{d}_{t} = g^{d}_{t} + h_{d}(t),  \\[-4pt]
	dg^{d}_{t} \hspace{-.5pt} = k_{d}(\theta_{d}-g^{d}_{t})dt + \xi_{d}\sqrt{g^{d}_{t}}\,dW^{d}_{t},\hspace{.75em} g^{d}_{0}>0, \\[-1.5pt]
	\hspace{5.5pt} r^{f}_{t} \hspace{-.5pt} = g^{f}_{t} + h_{f}(t),  \\[-4.5pt]
	dg^{f}_{t} \hspace{-1pt} = \big(k_{f}\theta_{f}-k_{f}g^{f}_{t}-\rho_{s\hspace{-.7pt}f}\xi_{f}\sigma(t,S_t)\sqrt{v_{t}g^{f}_{t}}\hspace{1pt}\big)dt + \xi_{f}\sqrt{g^{f}_{t}}dW^{f}_{t},\hspace{.75em} g^{f}_{0}>0,
	\end{dcases}
\end{align}
where $\{W^{s},W^{v},W^{d},W^{f}\}$ are standard Brownian motions, $\sigma$ is the leverage function and $h_{d}$ and $h_{f}$ are deterministic functions of time. The mean-reversion parameters $k$, $k_{d}$ and $k_{f}$, the long-term mean parameters $\theta$, $\theta_{d}$ and $\theta_{f}$, and the volatility parameters $\xi$, $\xi_{d}$ and $\xi_{f}$ are positive real numbers. The quanto correction term in the drift of the foreign short rate in \eqref{eq2.1} comes from changing from the foreign to the domestic risk-neutral measure \cite{Clark:2011}. As an aside, for Hull--White short rate processes, changing from the domestic spot measure to the domestic $T$-forward measure leads to a dimension reduction of the problem because the diffusion coefficients of the short rates do not depend on the level of the rates \cite{Grzelak:2012}. Since this is not the case for CIR short rate processes, we prefer to work under the domestic spot measure. Note that the above system can collapse to the Heston--2CIR\scalebox{.9}{\raisebox{.5pt}{++}} model if we set $\sigma=1$, or to a local volatility model with stochastic short rates if we set $k=\xi=0$. The standard Heston SLV model is the special case $k_{d}=\xi_{d}=k_{f}=\xi_{f}=h_{d,f}=0$. We can also think of \eqref{eq2.1} as a model in an equity market with stock price process $S$, stochastic interest rate $r^{d}$ and stochastic dividend yield $r^{f}$, in which case the quanto correction term vanishes. We consider a full correlation structure between the Brownian drivers $\{W^{s},W^{v},W^{d},W^{f}\}$, i.e., no assumptions on the constant correlation matrix $\Sigma$ are made, where
\begin{equation}\label{eq2.1.2}
\Sigma = \begin{bmatrix}
												1 & \rho_{sv} & \rho_{sd} & \rho_{s\hspace{-.7pt}f} \\
												\rho_{sv} & 1 & \rho_{v\hspace{.2pt}d} & \rho_{v\hspace{-.7pt}f} \\
												\rho_{sd} & \rho_{v\hspace{.2pt}d} & 1 & \rho_{d\hspace{.01pt}f} \\
												\rho_{s\hspace{-.7pt}f} & \rho_{v\hspace{-.7pt}f} & \rho_{d\hspace{.01pt}f} & 1
								\end{bmatrix}.
\end{equation}
Furthermore, we work under the following assumptions:

\vspace{.75em}
\noindent
{\rm ($\mathcal{A}$1)} The leverage function is bounded, i.e., there exists a non-negative constant $\sigma_{max}$ such that, for all $t \in [0,T]$ and $x \in [0,\infty)$, we have
\begin{equation}\label{eq2.2}
0 \leq \sigma(t,x) \leq \sigma_{max}\,.
\end{equation}
\noindent
{\rm ($\mathcal{A}$2)} There exist non-negative constants $A$, $B$ and a positive real number $\alpha$ such that, for all $t,u \in [0,T]$ and $x,y \in [0,\infty)$, we have
\begin{equation}\label{eq2.3}
\left|\sigma(t,x)-\sigma(u,y)\right| \leq A\left|t-u\right|^{\alpha} + B\left|x-y\right|.
\end{equation}
\noindent
{\rm ($\mathcal{A}$3)} There exists a non-negative constant $h_{max}$ such that, for all $t \in [0,T]$ and $i \in \left\{d,f\right\}$, we have
\begin{equation}\label{eq2.3.1}
\left|h_{i}(t)\right| \leq h_{max}\,.
\end{equation}

Hence, we assume that $\sigma$ is bounded, H\"older continuous in $t$ and Lipschitz in $S_t$. According to \cite{Mariapragassam:2016}, for the leverage function to be consistent with call and put prices, it has to be given by the formula \eqref{eq2.3.c}, which depends on the calibrated Dupire local volatility. In practice, the local volatility function usually arises as the interpolation of discrete values obtained from a discretized version of Dupire's formula. Hence, there is no loss of generality from a practical point of view in assuming that the leverage function is Lipschitz continuous and bounded on a compact subset of $\mathbb{R}_{+}^{2}$ of the form $[0,T]\times[x_{min},x_{max}]$, and furthermore that
\begin{equation}\label{eq2.4}
\sigma(t,x) = \sigma\big(t\wedge T,\hspace{1.5pt} x_{min}\Ind_{x\leq x_{min}} +\hspace{2pt} x\Ind_{x\in(x_{min},x_{max})} +\hspace{2pt} x_{max}\Ind_{x\geq x_{max}}\hspace{-2pt}\big).
\end{equation}
Then $\sigma$ is globally Lipschitz continuous and the second assumption holds with $\alpha=1$. We also assume that $h_{d,f}$ are bounded. According to \cite{Brigo:2001}, for a perfect fit to the initial term structure of interest rates, each of the two shift functions must be given by the difference between the (flat-forward) market instantaneous forward rate and a continuous function of time. Then $h_{d,f}$ are piecewise continuous and the third assumption holds.

\subsection{Model calibration}\label{subsec:calibration}

The calibration of the 4-factor Heston--2CIR\scalebox{.9}{\raisebox{.5pt}{++}} SLV model \eqref{eq2.1} is of paramount importance and represents a mandatory step for the efficient pricing of derivative contracts. In \cite{Mariapragassam:2016}, we consider the 4-factor SLV model and propose a new calibration approach that builds on the particle method of \cite{Guyon:2011}, combined with a novel and efficient variance reduction technique that takes advantage of PDE calibration to increase its stability and accuracy. The numerical experiments in \cite{Mariapragassam:2016} suggest that this method almost recovers the calibration speed from the associated 2-factor SLV model with deterministic rates. We assume here a partial correlation structure where only $\rho_{sv}$, $\rho_{sd}$ and $\rho_{s\hspace{-.7pt}f}$ may be non-zero, and denote by $D^{d}$ and $D^{f}$ the domestic and foreign discount factors associated with their respective money market accounts, i.e., for any $t\in[0,T]$,
\begin{equation}\label{eq2.1.c}
D_{t}^{d}=e^{-\int_{0}^{t}{r^{d}_{u}}du} \hspace{3pt}\text{ and }\hspace{3pt} D_{t}^{f}=e^{-\int_{0}^{t}{r^{f}_{u}}du}.
\end{equation}
In addition to \eqref{eq2.1}, we consider a pure local volatility (LV) model,
\begin{equation}\label{eq2.2.c}
dS^{\text{LV}}_{t} = \big(f^{d}_{t}-f^{f}_{t}\big)S^{\text{LV}}_{t}dt + \sigma_{\text{LV}}\big(t,S^{\text{LV}}_t\big)S^{\text{LV}}_{t}\hspace{1pt}dW^{s}_{t},\hspace{.75em} S^{\text{LV}}_{0}=S_{0},
\end{equation}
where for $i\in\{d,f\}$, $f^{i}_{t} = -\hspace{1pt}\frac{\partial}{\partial t}\hspace{1pt}\log P^{i}(0,t)$ is the market instantaneous forward rate at time $0$ for a maturity $t$ and $P^{i}(0,t)$ is the market zero-coupon bond price at time $0$ for a maturity $t$.

Suppose that the LV model \eqref{eq2.2.c} has been calibrated and that $\sigma_{\text{LV}}$ has been determined. The main building block of the calibration routine is expressing the leverage function $\sigma$ in terms of the local volatility function $\sigma_{\text{LV}}$. We state the necessary and sufficient condition for a perfect calibration to vanilla options in the following theorem, proved in \cite{Mariapragassam:2016}.

\begin{theorem}[Theorem 1 in \cite{Mariapragassam:2016}]\label{Thm2.1}
If the spot process marginal density function under model \eqref{eq2.1} is continuous in space and under assumptions {\rm ($\mathcal{A}$1)} and {\rm ($\mathcal{A}$3)}, the call price under model \eqref{eq2.1} matches the market quote for any strike $K$ and maturity $T<T^{*}$ if and only if
\begin{align}\label{eq2.3.c}
\sigma^{2}(T,K) &=
\frac{\E\!\left[D_{T}^{d}\,|\,S_{T}=K\right]}{\E\!\left[D_{T}^{d}v_{T}\,|\,S_{T}=K\right]}
\Bigg\{\sigma_{\emph{\text{LV}}}^{2}(T,K) + \frac{2}{K^{2}\frac{\partial^{2}C_{\emph{\text{LV}}}}{\partial K^{2}}}\bigg(\E\!\Big[D_{T}^{d}\big(r^{f}_{T}-f^{f}_{T}\big)\big(S_{T}-K\big)^{+}\Big] \nonumber\\[2pt]
&-K\E\!\left[D_{T}^{d}\big(r^{d}_{T}-f^{d}_{T}\big)\Ind_{S_{T}\geq K}\right]+K\E\!\left[D_{T}^{d}\big(r^{f}_{T}-f^{f}_{T}\big)\Ind_{S_{T}\geq K}\right]\bigg)\Bigg\},
\end{align}
where $C_{\emph{\text{LV}}}$ is the call price under model \eqref{eq2.2.c}, all expectations are under the domestic risk-neutral measure, $\varphi=2+\sqrt{2}$, $\zeta=\xi\sigma_{max}$ and $T^{*}$ is as given below.
\begin{enumerate}[(1)]
\item{When $k<\varphi\zeta$,
\begin{equation}\label{eq2.4.c}
T^{*} = \frac{2}{\sqrt{\varphi^{2}\zeta^{2}-k^{2}}}\bigg[\frac{\pi}{2}+\arctan\bigg(\frac{k}{\sqrt{\varphi^{2}\zeta^{2}-k^{2}}}\bigg)\bigg].
\end{equation}}
\item{When $k\geq\varphi\zeta$,
\begin{equation}\label{eq2.5.c}
T^{*} = \infty\hspace{.5pt}.
\end{equation}}
\end{enumerate}
\end{theorem}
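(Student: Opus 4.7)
The plan is to prove \eqref{eq2.3.c} via a Dupire--Gy\"ongy matching of call surfaces between the SLV model \eqref{eq2.1} and the calibrated LV model \eqref{eq2.2.c}, and to identify $T^*$ through an explosion analysis of exponential moments of the CIR variance. I would begin by writing $C(T,K) = \E[D_T^d(S_T-K)^+]$ and applying the product rule to $D_T^d(S_T-K)^+$, combined with the Tanaka--Meyer formula for $(S_T-K)^+$ under the dynamics of $S$ in \eqref{eq2.1}. After taking expectations the local-martingale part vanishes and, invoking the assumed continuity of the marginal density of $S_T$ to pass from local time at $K$ to the density there, one obtains a representation of $\partial_T C(T,K)$ as a diffusion term $\tfrac12 K^2 \sigma^2(T,K)\,\E[D_T^d v_T \mid S_T=K]\,\partial^2_K C(T,K)$ plus drift corrections involving $\E[D_T^d(r_T^{d,f}-f_T^{d,f})(\cdots)]$ from the stochastic rates and from the difference between $r^{d,f}$ and the market forward rates.

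Running the same Tanaka--Meyer computation on \eqref{eq2.2.c} yields the classical Dupire identity with $\sigma_{\text{LV}}^2(T,K)K^2\partial^2_K C_{\text{LV}}$ and analogous drift terms in $f_t^d-f_t^f$. Equating the two call surfaces pointwise in $(T,K)$ (so that, by the market calibration, $C=C_{\text{LV}}$), differencing the two identities, and solving algebraically for $\sigma^2(T,K)$ produces \eqref{eq2.3.c}; reversing the steps gives the ``if'' direction while uniqueness of the LV surface consistent with market quotes supplies the ``only if''. Assumptions $(\mathcal{A}1)$ and $(\mathcal{A}3)$ are used precisely here, to ensure that $\sigma(T,S_T)\sqrt{v_T}S_T$ and $D_T^d$ are integrable enough to carry out Fubini and the localisation argument underlying Tanaka--Meyer.

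The remaining task is to identify $T^*$ as the largest horizon guaranteeing finiteness of every expectation appearing in \eqref{eq2.3.c}, most notably the conditional moment $\E[D_T^d v_T \mid S_T=K]$. Since the discount factor is bounded under $(\mathcal{A}3)$, the binding constraint is on $\E[S_T^p\,v_T]$-type quantities, which after a Girsanov-type disentangling of $W^s$ and $W^v$ reduce via Cauchy--Schwarz to an exponential moment $\E[\exp(\lambda\int_0^T v_s\,ds)]$ with $\lambda$ proportional to $\sigma_{\max}^2\xi^2 = \zeta^2$. This moment is governed by a scalar Riccati ODE whose solution blows up precisely when $\varphi^2\zeta^2 > k^2$, with explosion time given by \eqref{eq2.4.c}; in the regime $k\geq\varphi\zeta$ the Riccati solution is global, yielding $T^*=\infty$ as in \eqref{eq2.5.c}. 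The sharp exponent $\varphi = 2+\sqrt{2}$ emerges from optimising the Cauchy--Schwarz splitting needed to simultaneously bound $S_T$ and $v_T$ while preserving finiteness of the conditional expectation. The main obstacle in the whole argument is making the Tanaka--Meyer step and the $K$-differentiation under the expectation rigorous in the presence of three unbounded square-root factors $v$, $g^d$, $g^f$ and the correlated quanto drift in the $g^f$-dynamics of \eqref{eq2.1}; this is resolved by combining the continuity-of-density hypothesis with $(\mathcal{A}1)$--$(\mathcal{A}3)$ and the uniform moment control that holds strictly up to $T^*$, after which everything reduces to algebraic bookkeeping of the interest-rate correction terms inside the braces of \eqref{eq2.3.c}.
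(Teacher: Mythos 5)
Note first that this paper does not actually contain a proof of Theorem~\ref{Thm2.1}: it is imported verbatim as ``Theorem 1 in \cite{Mariapragassam:2016}'', and the paper only records two facts about that proof, namely that the local martingale
$\int_{0}^{T}\Ind_{S_{t}\geq K}\sigma(t,S_{t})\sqrt{v_{t}}\hspace{.5pt}D^{d}_{t}S_{t}\hspace{1pt}dW^{s}_{t}$
must be shown to be a \emph{true} martingale, and that $T^{*}$ is the lower bound on the explosion time of the \emph{second moment of the discounted spot} $D_{t}^{d}S_{t}$ supplied by Proposition~\ref{Prop3.4.3} with $\alpha=2$ (so that $\varphi(2)=2+\sqrt{2}$). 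Your overall strategy --- Tanaka--Meyer plus the product rule on $D_T^d(S_T-K)^+$, comparison with the analogous Dupire identity under the LV model \eqref{eq2.2.c}, and algebraic solution for $\sigma^2(T,K)$ --- is indeed the mechanism by which formula \eqref{eq2.3.c} is derived, so the first two paragraphs of your sketch are on track.

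The gap is in your identification of $T^{*}$. You attribute it to the finiteness of the conditional expectation $\E[D_T^d v_T\mid S_T=K]$ and claim the exponent $\varphi=2+\sqrt{2}$ comes from ``optimising the Cauchy--Schwarz splitting needed to simultaneously bound $S_T$ and $v_T$.'' This is not where the constraint comes from. The binding requirement is that the stopped local martingale written above must vanish in expectation, i.e.\ must be a true martingale, for which one needs $\E\!\int_0^T v_t\big(D^d_t S_t\big)^2\,dt<\infty$, hence a moment of $D^d_t S_t$ of order slightly above $2$. That is precisely what Proposition~\ref{Prop3.4.3} with $\alpha=2$ gives under $T<T^*$ (it yields $\omega_1>2$ with finite $\omega$-th moments for $\omega<\omega_1$), and the number $2+\sqrt{2}=\varphi(2)=2+\sqrt{(2-1)\cdot 2}$ arises from optimising the H\"older pair $(p,q)=\big(1+\sqrt{(\omega-1)/\omega},\,1+\sqrt{\omega/(\omega-1)}\big)$ inside the proof of that proposition to push the Novikov bound and the Andersen--Piterbarg exponential CIR moment to their joint critical point. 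Your sketch never names the martingale property as the step to be justified, and the reason you give for $\varphi$ does not match the Riccati/H\"older mechanism in Propositions~\ref{Prop3.3.1} and \ref{Prop3.4.3}; if you fix those two points the argument aligns with what the paper reports.
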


This result generalizes the formula in \cite{Guyon:2011} to a stochastic foreign short rate. Deelstra and Rayee \cite{Deelstra:2013} obtained a similar formula for a 4-factor SLV model with Hull--White short rate processes. An important step in the derivation is that the stochastic integral
\begin{equation*}
\int_{0}^{T}{\Ind_{S_{t}\geq K}\sigma(t,S_{t})\sqrt{v_{t}}\hspace{.5pt}D^{d}_{t}S_{t}\hspace{1pt}dW^{s}_{t}}
\end{equation*}
is a true martingale. On the one hand, $T^{*}$ is a lower bound on the explosion time of the second moment of the discounted spot process $D_{t}^{d}S_{t}$ (see Proposition \ref{Prop3.4.3}). On the other hand, moments of the Heston model can explode in finite time \cite{Andersen:2007}, a property that is inherited by the Heston-type model \eqref{eq2.1}. Note also Theorems \ref{Thm3.2.1} and \ref{Thm3.3.2} for possible moment explosions of the numerical approximations. Therefore, the formula \eqref{eq2.3.c} may not hold for some values of the model parameters and for large maturities $T$. However, in practice, $T^{*}$ is very large. For instance, following our calibration routine to EURUSD market data from \cite{Mariapragassam:2016}, we found $T^{*}=28.6$.

As an aside, note that if the volatility in model \eqref{eq2.1} is purely local, i.e., if $v=1$, we recover the formula in \cite{Clark:2011}, whereas if the two short rates are deterministic, i.e., if $g^{d}=g^{f}=0$, we recover the formula in \cite{Guyon:2011,Ren:2007}, i.e.,
\begin{equation}\label{eq2.6.c}
\sigma(T,K)=\frac{\sigma_{\text{LV}}(T,K)}{\sqrt{\E\!\left[v_{T}\,|\,S_{T}=K\right]}}\hspace{1pt}.
\end{equation}

In the computations in Section~\ref{sec:numerics}, we assume a simple correlation structure where only $\rho_{sv}$ may be non-zero. We use a five-step calibration routine which is explained in detail in \cite{Mariapragassam:2016}. First, we calibrate the LV model \eqref{eq2.2.c} and find $\sigma_{\text{LV}}$. We also calibrate the associated 2-factor SLV model with deterministic rates via PDE methods, using formula \eqref{eq2.6.c} and results from the literature \cite{Clark:2011,Ren:2007}. Independently, we calibrate the two CIR\scalebox{.9}{\raisebox{.5pt}{++}} processes under their respective markets. Then, we find the Heston parameters by calibrating the associated 4-factor SV model, i.e., the Heston--2CIR\scalebox{.9}{\raisebox{.5pt}{++}} SV model, using the closed-form solution for the call price in \cite{Ahlip:2013}, extended to CIR\scalebox{.9}{\raisebox{.5pt}{++}} processes, with other parameters from the above calibration. Finally, we calibrate $\sigma$ in the Heston--2CIR\scalebox{.9}{\raisebox{.5pt}{++}} SLV model \eqref{eq2.1} by working with the particle method of \cite{Guyon:2011} and extending it using the calibrated 2-factor SLV model with deterministic rates as a control variate. For a partial correlation structure with possibly non-zero $\rho_{sd}$ and $\rho_{s\hspace{-.7pt}f}$, we may use the historical values of the correlations between the spot FX rate and the short-term zero-coupon bonds, and extend the approximation formulae of \cite{Grzelak:2011} for the call price from a 3-factor Heston--CIR model to a 4-factor Heston--2CIR\scalebox{.9}{\raisebox{.5pt}{++}} model.


\section{Convergence analysis}\label{sec:analysis}

\subsection{The simulation scheme}\label{subsec:scheme}

We employ the full truncation Euler (FTE) scheme from \cite{Lord:2010} to discretize the variance and the two short rate processes. Consider the CIR process
\begin{equation}\label{eq2.5}
dy_{t} = k_{y}(\theta_{y}-y_{t})dt + \xi_{y}\sqrt{y_{t}}\,dW^{y}_{t}.
\end{equation}
Let $T$ be the maturity of the option under consideration and create an evenly spaced grid
\begin{equation*}
T = N\delta t,\;\, t_{n}=n\delta t,\hspace{1em} \forall\hspace{1pt} n\in\{0,1,...,N\}.
\end{equation*}
First, we introduce the discrete-time auxiliary process
\begin{equation}\label{eq2.6}
\tilde{y}_{t_{n+1}} = \tilde{y}_{t_{n}} + k_{y}(\theta_{y}-\tilde{y}_{t_{n}}^{+})\delta t + \xi_{y}\sqrt{\tilde{y}_{t_{n}}^{+}}\,\delta W^{y}_{t_{n}},
\end{equation}
where $y^{+} = \max\left(0,y\right)$ and $\delta W^{y}_{t_{n}} = W^{y}_{t_{n+1}} - W^{y}_{t_{n}}$, and its continuous-time interpolation
\begin{equation}\label{eq2.7}
\tilde{y}_{t} = \tilde{y}_{t_{n}} + k_{y}(\theta_{y}-\tilde{y}_{t_{n}}^{+})(t-t_{n}) + \xi_{y}\sqrt{\tilde{y}_{t_{n}}^{+}}\big(W^{y}_{t}-W^{y}_{t_{n}}\big),
\end{equation}
for any $t \in [t_{n},t_{n+1})$, as suggested in \cite{Higham:2005}. Then, we define the non-negative processes
\begin{numcases}{}
\hat{y}_{t} = \tilde{y}_{t}^{+} \label{eq2.8a} \\[2pt]
\bar{y}_{t} = \tilde{y}_{t_{n}}^{+} \label{eq2.8b}
\end{numcases}
whenever $t \in [t_{n},t_{n+1})$. Let $\bar{v}$ and $\bar{g}^{d}$ be the FTE discretizations of $v$ and $g^{d}$, respectively. Taking into account the presence of the quanto correction term in the drift of the foreign short rate, we similarly define
\begin{equation}\label{eq2.8c}
\tilde{g}_{t}^{f} = \tilde{g}_{t_{n}}^{f} + \Big[k_{f}\theta_{f}-k_{f}\big(\tilde{g}_{t_{n}}^{f}\big)^{+}-\rho_{s\hspace{-.7pt}f}\xi_{f}\sigma\big(t_{n},\bar{S}_{t_{n}}\big)\sqrt{\tilde{v}_{t_{n}}^{+}\big(\tilde{g}^{f}_{t_{n}}\big)^{+}}\hspace{1.5pt}\Big](t-t_{n}) + \xi_{f}\sqrt{\big(\tilde{g}_{t_{n}}^{f}\big)^{+}}\hspace{.5pt}\big(W^{f}_{t}-W^{f}_{t_{n}}\big),
\end{equation}
where $\bar{S}$ is the continuous-time approximation of $S$ defined below, as well as
\begin{numcases}{}
\hat{g}^{f}_{t} = \big(\tilde{g}_{t}^{f}\big)^{+} \label{eq2.8d}\\[2pt]
\bar{g}^{f}_{t} = \big(\tilde{g}_{t_{n}}^{f}\big)^{+} \label{eq2.8e}
\end{numcases}
whenever $t \in [t_{n},t_{n+1})$. For $i \in \left\{d,f\right\}$, the domestic and foreign short rate discretizations are
\begin{equation}\label{eq2.8f}
\bar{r}^{i}_{t} = \bar{g}^{i}_{t} + h_{i}(t).
\end{equation}
Finally, we use an Euler--Maruyama scheme to discretize the log-exchange rate. Let $x$ and $\bar{x}$ be the actual and the approximated log-processes, and let $\bar{S}=e^{\bar{x}}$ be the continuous-time approximation of $S$. Moreover, define $h = h_{d}-h_{f}$. Then the discrete method reads:
\begin{align}\label{eq2.8}
\bar{x}_{t_{n+1}} = \bar{x}_{t_{n}} + \int_{t_{n}}^{t_{n+1}}{\hspace{-2pt}h(u)du} + \Big(\bar{g}^{d}_{t_{n}} - \bar{g}^{f}_{t_{n}} - \frac{1}{2}\hspace{1pt}\sigma^{2}\big(t_{n},\bar{S}_{t_{n}}\big)\bar{v}_{t_{n}}\Big)\delta t + \sigma\big(t_{n},\bar{S}_{t_{n}}\big)\sqrt{\bar{v}_{t_{n}}}\,\delta W^{s}_{t_{n}}.
\end{align}
However, we find it convenient to work with the continuous-time approximation
\begin{align}\label{eq2.9}
\bar{x}_{t} = \bar{x}_{t_{n}} + \int_{t_{n}}^{t}{h(u)du} + \Big(\bar{g}^{d}_{t} - \bar{g}^{f}_{t} - \frac{1}{2}\hspace{1pt}\bar{\sigma}^{2}\big(t,\bar{S}_{t}\big)\bar{v}_{t}\Big)\big(t-t_{n}\big) + \bar{\sigma}\big(t,\bar{S}_{t}\big)\sqrt{\bar{v}_{t}}\,\Delta W^{s}_{t},
\end{align}
where $\Delta W^{s}_{t} = W^{s}_{t}-W^{s}_{t_{n}}$ and $\bar{\sigma}\big(t,\bar{S}_{t}\big)=\sigma\big(t_{n},\bar{S}_{t_{n}}\big)$ whenever $t\in[t_{n},t_{n+1})$. Hence,
\begin{align}\label{eq2.10}
\bar{x}_{t} = x_{0} + \int_{0}^{t}{\Big(\bar{r}^{d}_{u} - \bar{r}^{f}_{u} - \frac{1}{2}\hspace{1pt}\bar{\sigma}^{2}\big(u,\bar{S}_{u}\big)\bar{v}_{u}\Big) du} + \int_{0}^{t}{\bar{\sigma}\big(u,\bar{S}_{u}\big)\sqrt{\bar{v}_{u}}\,dW^{s}_{u}}.
\end{align}
Note that the convergence of the continuous-time approximation ensures that the discrete method approximates the true solution accurately at the gridpoints. Using It\^o's formula, we obtain
\begin{align}
\bar{S}_{t} = S_{0} + \int_{0}^{t}{\big(\bar{r}^{d}_{u} - \bar{r}^{f}_{u}\big)\bar{S}_{u}\hspace{1pt}du} + \int_{0}^{t}{\bar{\sigma}\big(u,\bar{S}_{u}\big)\sqrt{\bar{v}_{u}}\,\bar{S}_{u}\hspace{1.5pt}dW^{s}_{u}}. \label{eq2.12}
\end{align}
We prefer the log-Euler scheme to the standard Euler scheme to discretize the exchange rate process because the former preserves positivity and produces no discretization bias in the $S$-direction when $\sigma$ is constant. Furthermore, if $v$, $g^{d}$ and $g^{f}$ are also constant, then the log-Euler scheme is exact.

\subsection{The main theorem}\label{subsec:theorem}

Define the arbitrage-free price of an option and its approximation under \eqref{eq2.12},
\begin{align}
U &= \E\Big[e^{-\int_{0}^{T}{r^{d}_{t} dt}}f(\hspace{.25pt}S\hspace{.75pt})\Big], \label{eq2.12a}\\[2pt]
\bar{U} &= \E\Big[e^{-\int_{0}^{T}{\bar{r}^{d}_{t} dt}}f(\hspace{.25pt}\bar{S}\hspace{.75pt})\Big]. \label{eq2.12b}
\end{align}
The payoff function $f$ may depend on the entire path of the underlying process and the expectation is under the domestic risk-neutral measure.

\begin{theorem}\label{Thm3.2.1} If $2k_{f}\theta_{f}>\xi_{f}^{2}$ and under assumptions {\rm ($\mathcal{A}$1)} to {\rm ($\mathcal{A}$3)}, the following statements hold.
\begin{enumerate}[(1)]
\item{The approximations to the values of the European put, the up-and-out barrier call and any barrier put option defined in \eqref{eq2.12b} converge as $\delta t \to 0$.}
\item{If $\zeta=\xi\sigma_{max}$ and
\begin{equation}\label{eq2.12c}
T < T^{*} \equiv \hspace{1pt}\frac{4k}{\zeta^{2}}\hspace{.5pt}\Ind_{\zeta<\hspace{.5pt}2k} \hspace{1pt}+\hspace{3pt} \frac{1}{\zeta-k}\hspace{.5pt}\Ind_{\zeta\geq\hspace{.5pt}2k}\hspace{1pt},
\end{equation}
then the approximations to the values of the European call, Asian options, the down-and-in/out and the up-and-in barrier call option defined in \eqref{eq2.12b} converge as $\delta t \to 0$.}
\end{enumerate}
\end{theorem}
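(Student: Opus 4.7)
The plan is to combine the pathwise strong convergence $\bar{S}\to S$ established earlier in this section with suitable uniform integrability bounds that upgrade pathwise convergence of the integrand in \eqref{eq2.12b} to convergence of expectations. Before splitting by payoff class, I would handle the discount factor once and for all: since $r^d_t=g^d_t+h_d(t)$ with $g^d_t\geq 0$ and $|h_d(t)|\leq h_{max}$ by assumption~($\mathcal{A}$3), and analogously for $\bar{r}^d$, both $e^{-\int_0^T r^d_u du}$ and $e^{-\int_0^T \bar{r}^d_u du}$ lie in $[0,e^{h_{max} T}]$. Strong convergence of $\bar{g}^d$ to $g^d$ together with bounded convergence then yields $L^p$ convergence of the discount factor for every $p\geq 1$.

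For part~(1), the European put, the up-and-out barrier call, and every barrier put have payoffs uniformly bounded by a constant depending only on the strike and, where applicable, the upper barrier. The strong convergence of $\bar{S}_T$ to $S_T$ and, for barrier payoffs, of $\sup_{t\leq T}|\bar{S}_t-S_t|$ to zero, combined with the uniformly bounded discount factor, give $\bar{U}\to U$ by dominated convergence. A short additional argument is needed for the barrier payoff functionals themselves: using that the true spot process under \eqref{eq2.1} admits a density continuous in $S$, the continuous path of $S$ almost surely does not touch the barrier tangentially, so the barrier functional is $\mathbb{P}$-a.s.\ continuous at $S$ and the continuous mapping theorem applies.

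For part~(2), the payoffs $(S_T-K)^+$, the arithmetic Asian, and the remaining barrier call payoffs are unbounded, so I would invoke the Vitali convergence theorem, reducing matters to a uniform integrability bound of the form $\sup_{\delta t>0}\E\bigl[\sup_{t\leq T}\bar{S}_t^{\,p}\bigr]<\infty$ for some $p>1$. Applying It\^o's formula to \eqref{eq2.12} one writes $\bar{S}_t^p$ as the product of the stochastic exponential of $p\int_0^t \bar{\sigma}\sqrt{\bar{v}_u}\,dW^s_u$ and a multiplicative drift factor involving $\exp\bigl(\tfrac{p(p-1)}{2}\int_0^t\bar{\sigma}^2\bar{v}_u\,du\bigr)$; a Cauchy--Schwarz split after an exponential tilt turns the first factor into a genuine $L^2$ martingale and reduces the whole estimate to controlling an exponential moment of the form $\E\exp\!\bigl(\alpha\int_0^T \bar{v}_u\,du\bigr)$ with $\alpha$ depending on $p$, $\xi$ and $\sigma_{max}$. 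This is exactly what the exponential integrability result for the FTE scheme on the CIR variance delivers, and the resulting explosion time of that exponential moment, uniformly in $\delta t$, is precisely the $T^*$ appearing in \eqref{eq2.12c}, with the two cases $\zeta<2k$ and $\zeta\geq 2k$ reflecting whether the associated quadratic in $\alpha$ has real or complex roots.

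The main obstacle I anticipate is obtaining the supremum bound $\sup_{\delta t}\E\bigl[\sup_{t\leq T}\bar{S}_t^{\,p}\bigr]<\infty$ with an explicit threshold in $T$ that matches \eqref{eq2.12c} and does not deteriorate as $\delta t\to 0$: the FTE exponential integrability bound must be combined carefully with the martingale estimate, and the suprema inside the expectation require a Doob-type inequality applied to the continuous-time interpolation in \eqref{eq2.9}--\eqref{eq2.12}. A secondary technical point is the up-and-in call, whose payoff $(S_T-K)^+\Ind_{\sup_{t\leq T}S_t\geq B}$ is both unbounded and discontinuous in the path; here the moment bound just described, together with the almost-sure continuity of the barrier functional at $S$, lets Vitali close the argument simultaneously for discontinuity and for unboundedness.
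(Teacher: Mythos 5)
Your plan for part~(1) mirrors the paper's strategy and is essentially sound: bounded payoffs, convergence in probability of $\bar S$ to $S$ (Proposition~\ref{Prop3.4.2}), and the observation that the barrier functionals are $\mathbb{P}$-a.s.\ continuous at $S$, combined via the event decompositions in Theorems~\ref{Thm4.4.1}, \ref{Thm4.4.4} and \ref{Thm4.4.5}. That part is fine.

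The genuine gap is in part~(2), where you propose to reduce uniform integrability to the bound
$\sup_{\delta t}\E\bigl[\sup_{t\leq T}\bar S_t^{\,p}\bigr]<\infty$ for the \emph{undiscounted} process, and then obtain this by writing $\bar S_t^p$ as a stochastic exponential times $\exp\bigl(\tfrac{p(p-1)}{2}\int_0^t\bar\sigma^2\bar v_u\,du\bigr)$. This omits the drift contribution $\exp\bigl(p\int_0^t(\bar r^d_u-\bar r^f_u)\,du\bigr)$ that necessarily appears when you apply It\^o's formula to \eqref{eq2.12}. The term $p\int_0^t\bar r^d_u\,du$ is unbounded above, and controlling its exponential moment would pull the CIR parameters $k_d,\xi_d$ of the domestic short rate into the explosion threshold, whereas the $T^*$ in \eqref{eq2.12c} depends only on $k$, $\xi$ and $\sigma_{max}$. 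The argument as you set it up cannot deliver that threshold. The paper avoids the issue by working with the discounted process $\bar R_t = e^{-\int_0^t\bar r^d_u\,du}\,\bar S_t$ (Propositions~\ref{Prop3.4.3}, \ref{Prop3.4.4}): discounting cancels the $+\int\bar r^d$ growth exactly, and the remaining $-\int\bar r^f$ factor is bounded by $e^{h_{max}T}$ since $\bar r^f\geq -h_{max}$, so the exponential moment is governed solely by $\int_0^T\bar v_u\,du$ and Theorem~\ref{Thm3.3.2}. Notice also that the paper only establishes $\sup_{\delta t}\sup_{t}\E[\bar R_t^{\,\omega}]<\infty$, with the supremum over $t$ \emph{outside} the expectation; this weaker bound is all one needs, because for the European and barrier calls the relevant integrand is dominated by $\bar R_T$, and for the Asian call the payoff splits into $\frac{1}{T}\int_0^T D^d_{t,T}\bar R_t\,dt$ so that Fubini reduces everything to pointwise moments. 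Requiring $\E[\sup_t\bar S_t^p]$ is both stronger than necessary and not achievable with the stated $T^*$.

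A secondary remark: the dichotomy $\zeta<2k$ versus $\zeta\geq 2k$ does not come from the quadratic in $\alpha$ having real or complex roots. In the proof of Theorem~\ref{Thm3.3.2} it arises from whether the quadratic inequality $\eta^2\lambda\xi_y^2T^2-2\eta(1+k_yT)+2\leq 0$ admits a solution $\eta\geq 1$, and the two regimes correspond to which constraint on $T$ binds. This is cosmetic, but worth getting right if you write out the moment-bound step.
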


\begin{remark}\label{Rmk3.2.2}
{\rm If assumption {\rm ($\mathcal{A}$1)} holds, then for the purpose of this paper we choose the smallest upper bound on the leverage function, namely 
\begin{equation}\label{eq2.13}
\sigma_{max}=\sup\big\{\sigma(t,x)\,|\hspace{2pt} t\in[0,T],\hspace{1pt} x\in[0,\infty)\big\}.
\end{equation}%
}
\end{remark}

\begin{remark}\label{Rmk3.2.3}
{\rm If the exchange rate and the foreign short rate dynamics are independent of each other, i.e., if $\rho_{s\hspace{-.7pt}f}=0$, then the quanto correction term in the drift of the foreign short rate vanishes and Theorem \ref{Thm3.2.1} holds independent of the Feller condition $2k_{f}\theta_{f}>\xi_{f}^{2}$, albeit with a simpler proof. If the domestic and foreign short rates are constant throughout the lifetime of the option and if, moreover, $\sigma(t,x)=1$ for all $t \in [0,T]$ and $x \in [0,\infty)$, then the system of equations \eqref{eq2.1} collapses to the Heston model and Theorem \ref{Thm3.2.1} applies with $\zeta=\xi$. This extends the convergence results of Higham and Mao \cite{Higham:2005} to options with unbounded payoffs.%
}
\end{remark}

\begin{remark}\label{Rmk3.2.4}
{\rm Tian et al. \cite{Tian:2015} calibrated the term-structure Heston SLV model parameters to the market implied volatility data on EURUSD using 10 term structures and maturities up to 5 years. The data in Table \ref{table2.1} suggest that the condition in \eqref{eq2.12c} is typically satisfied in practice, both for shorter and longer maturities.%
}
\begin{table}[htb]
\begin{center}
\caption{The calibrated Heston SLV parameters for EURUSD market data from August 23, 2012 \cite{Tian:2015}.}\label{table2.1}
\begin{tabularx}{\textwidth}{@{}YYYYYY@{}}
  \addlinespace[-5pt]
	\toprule[.1em]
  $T$ & $k$ & $\xi$ & $\sigma_{max}$ & $\zeta$ & $T^{*}$ \\
  \midrule
  $\hspace{.6em}1$ month &	$0.885$ &	$0.342$ &	$1.600$ & $0.547$ & $11.8$ years \\[2pt]
  $5$ years &	$0.978$ &	$0.499$ &	$1.300$ & $0.649$ &\hspace{1pt} $\hspace{1pt}9.3$ years \\
	\bottomrule[.1em]
	\addlinespace[3pt]
\end{tabularx}
\end{center}
\end{table}

{\rm In \cite{Mariapragassam:2016}, the Heston--2CIR\scalebox{.9}{\raisebox{.5pt}{++}} SLV model is calibrated to the EURUSD market data from March 18, 2016, for maturities up to $5$ years. The data in Table \ref{table2.2} indicate that the condition in \eqref{eq2.12c} is satisfied even for very long maturities. Furthermore, the following parameter values for the foreign short rate process are recovered: $k_{f}=0.011$, $\theta_{f}=1.166$, $\xi_{f}=0.037$, so that the Feller condition also holds, i.e., $2k_{f}\theta_{f}=0.0257\gg0.0014=\xi_{f}^{2}$.%
}
\begin{table}[htb]
\begin{center}
\caption{The calibrated Heston--2CIR++ SLV parameters for EURUSD market data from March 18, 2016 \cite{Mariapragassam:2016}.}\label{table2.2}
\begin{tabularx}{\textwidth}{@{}YYYYYY@{}}
  \addlinespace[-5pt]
	\toprule[.1em]
  $T$ & $k$ & $\xi$ & $\sigma_{max}$ & $\zeta$ & $T^{*}$ \\
  \midrule
  $5$ years & $1.412$ & $0.299$ & $1.399$ & $0.418$ & $32.3$ years \\
	\bottomrule[.1em]
	\addlinespace[3pt]
\end{tabularx}
\end{center}
\end{table}

{\rm In equity markets, the mean-reversion speed is usually several times greater than the volatility of volatility. For instance, Hurn et al. \cite{Hurn:2014} calibrated the Heston model for the S\&P 500 index from January 1990 to December 2011 using a combination of two out-of-the-money options, and found that $k=1.977\gg0.456=\xi$. Furthermore, we typically have $\zeta<1$, so that the condition in \eqref{eq2.12c} holds even for longer maturities.%
}
\end{remark}

\subsection{The square-root process}\label{subsec:cir}

In order to prove the convergence of the approximation scheme in \eqref{eq2.12}, we first need to examine the stability of the moments of order higher than 1 of the actual and the discretized processes. However, this problem is directly related to the exponential integrability of the CIR process and its approximation. Let $y$ be the CIR process in \eqref{eq2.5} and $\bar{y}$ be the piecewise constant FTE interpolant as per \eqref{eq2.8b}.

\begin{proposition}\label{Prop3.3.1}
Let $\lambda>0$ and define the stochastic process
\begin{equation}\label{eq3.1}
\Theta_{t} \equiv \exp\bigg\{\lambda\int_{0}^{t}{y_{u}\,du}\bigg\},\hspace{1em} \forall\hspace{1pt}t\geq0.
\end{equation}
If $T<T^{*}$, then the first moment of $\Theta_{T}$ is bounded, i.e.,
\begin{equation}\label{eq3.2}
\E\big[\Theta_{T}\big] < \infty,
\end{equation}
where $T^{*}$ is as given below.
\begin{enumerate}[(1)]
\item{When $k_{y}<\sqrt{2\lambda}\hspace{1pt}\xi_{y}$,
\begin{equation}\label{eq3.3}
T^{*} = \frac{2}{\sqrt{\smash[b]{2\lambda\xi_{y}^{2}-k_{y}^{2}}}}\left[\frac{\pi}{2}+\arctan\bigg(\frac{k_{y}}{\sqrt{\smash[b]{2\lambda\xi_{y}^{2}-k_{y}^{2}}}}\bigg)\right].
\end{equation}}
\item{When $k_{y}\geq\sqrt{2\lambda}\hspace{1pt}\xi_{y}$,
\begin{equation}\label{eq3.4}
T^{*} = \infty\hspace{.5pt}.
\end{equation}}
\end{enumerate}
\end{proposition}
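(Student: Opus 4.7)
The plan is to exploit the affine structure of the CIR dynamics, which implies that the Laplace transform of the time-integral is exponential-affine in the initial state. My first step is to posit the ansatz $g(\tau, y) \equiv \E\!\left[\Theta_\tau \mid y_0 = y\right] = \exp\!\left(A(\tau) + B(\tau)\, y\right)$ and substitute it into the associated Feynman--Kac PDE
\begin{equation*}
\partial_\tau g = k_y(\theta_y - y)\,\partial_y g + \tfrac{1}{2}\xi_y^2 \hspace{.5pt} y\,\partial_{yy} g + \lambda\hspace{.5pt} y\hspace{.5pt} g, \qquad g(0, y) = 1.
\end{equation*}
Matching powers of $y$ yields the coupled system of ODEs
\begin{equation*}
B'(\tau) = \tfrac{1}{2}\xi_y^2\hspace{.5pt} B(\tau)^2 - k_y B(\tau) + \lambda, \qquad A'(\tau) = k_y \theta_y B(\tau), \qquad A(0) = B(0) = 0.
\end{equation*}
To avoid any a priori regularity assumption on $g$, I would take $A$ and $B$ as the unique local solutions of these ODEs and apply It\^o's formula to the process $M_t = \exp\!\left(A(T-t) + B(T-t)\,y_t\right)\Theta_t$ to verify that it is a non-negative local martingale on $[0,T]$. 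Localizing by $\tau_n = \inf\{t \geq 0 : y_t \geq n\}$ and invoking optional stopping for the bounded martingale $M_{\cdot \wedge \tau_n}$, then passing to the limit $n \to \infty$ via Fatou's lemma and the non-explosion of the CIR process ($\tau_n \to \infty$ a.s.), will give
\begin{equation*}
\E[\Theta_T] \leq \exp\!\left(A(T) + B(T)\,y_0\right),
\end{equation*}
which is finite precisely when $A$ and $B$ are well-defined on all of $[0,T]$.

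The second step is a phase-plane analysis of the Riccati equation, with the dichotomy governed by the discriminant $k_y^2 - 2\lambda\xi_y^2$ of the quadratic right-hand side. When $k_y \geq \sqrt{2\lambda}\,\xi_y$, there are two positive real roots $0 < B_- \leq B_+$; since the RHS equals $\lambda > 0$ at $B = 0$ and is strictly negative on $(B_-, B_+)$, the solution starting from $B(0) = 0$ is trapped below $B_-$, increases monotonically towards it, and hence stays bounded for all $\tau \geq 0$, yielding $T^* = \infty$. When $k_y < \sqrt{2\lambda}\,\xi_y$, completing the square gives
\begin{equation*}
B'(\tau) = \tfrac{1}{2}\xi_y^2\bigl(B(\tau) - k_y/\xi_y^2\bigr)^{2} + \frac{\omega^2}{2\xi_y^2}, \qquad \omega = \sqrt{2\lambda\xi_y^2 - k_y^2} > 0,
\end{equation*}
so the RHS is uniformly bounded below by a positive constant and $B$ must blow up in finite time. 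The substitution $B(\tau) = k_y/\xi_y^2 + (\omega/\xi_y^2)\tan\phi(\tau)$ linearises this to $\phi'(\tau) = \omega/2$ with $\phi(0) = -\arctan(k_y/\omega)$, whose solution first reaches the singularity $\phi = \pi/2$ at exactly the claimed $T^*$.

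The step I expect to be the main obstacle is converting the formal Feynman--Kac identity into the rigorous inequality $\E[\Theta_T] < \infty$, since $M$ is a priori only a local martingale and could fail to be a true martingale without careful control. The localization argument works because, for any $T < T^*$, the ODE analysis ensures that $A$ and $B$ are bounded on $[0,T]$; combined with $y_{t\wedge\tau_n} \leq n$ and $\Theta_{t\wedge\tau_n} \leq e^{\lambda n T}$, this makes $M_{\cdot\wedge\tau_n}$ a uniformly bounded martingale on $[0,T]$, so that optional stopping yields $\E[M_{T\wedge\tau_n}] = M_0$, and Fatou's lemma delivers the required upper bound. In summary, the Riccati blow-up time plays a dual role as the maximal existence time of the ODE solution and as a lower bound on the explosion time of the exponential moment, which is exactly the content of the proposition.
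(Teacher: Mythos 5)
Your proof is correct, and it reconstructs in full detail the standard affine/Riccati argument that the paper invokes implicitly: the paper's own ``proof'' is a one-line citation to Proposition 3.1 of Andersen and Piterbarg (2007), which establishes exactly this result by the same Feynman--Kac / Riccati-blow-up reasoning, so you have essentially supplied the argument the authors chose to cite rather than reproduce. Your localization step (stopping at $\tau_n$, using boundedness of $A,B$ on $[0,T]$ together with $y_{t\wedge\tau_n}\le n$ to get a genuine bounded martingale, then Fatou) is precisely what is needed to make the formal exponential-affine ansatz rigorous, and the phase-plane dichotomy on the discriminant $k_y^2-2\lambda\xi_y^2$ correctly reproduces both the finite blow-up time $T^*$ in \eqref{eq3.3} and the non-explosion case \eqref{eq3.4}.
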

\begin{proof}
Follows directly from Proposition 3.1 in \cite{Andersen:2007}.
\qquad
\end{proof}

\begin{theorem}\label{Thm3.3.2}
Let $\lambda>0$ and define the stochastic process
\begin{equation}\label{eq3.7}
\bar{\Theta}_{t} \equiv \exp\bigg\{\lambda\int_{0}^{t}{\bar{y}_{u}\,du}\bigg\},\hspace{1em} \forall\hspace{1pt}t\geq0.
\end{equation}
If $T\leq T^{*}$ and $\delta_{T}<k_{y}^{-1}$, then the first moment of $\bar{\Theta}_{T}$ is uniformly bounded, i.e.,%
\begin{equation}\label{eq3.8}
\sup_{\delta t \in (0,\delta_{T})}\hspace{1.5pt}\E\big[\bar{\Theta}_{T}\big] < \infty,
\end{equation}
where $T^{*}$ is as given below.
\begin{enumerate}[(1)]
\item{When $k_{y}\leq\sqrt{0.5\lambda}\hspace{1pt}\xi_{y}$,
\begin{equation}\label{eq3.8.1}
T^{*} = \frac{1}{\sqrt{2\lambda}\hspace{1pt}\xi_{y}-k_{y}}\hspace{1pt}.
\end{equation}}
\item{When $k_{y}>\sqrt{0.5\lambda}\hspace{1pt}\xi_{y}$,
\begin{equation}\label{eq3.8.2}
T^{*} = \frac{2k_{y}}{\lambda\xi_{y}^{2}}\hspace{1pt}.
\end{equation}}
\end{enumerate}
\end{theorem}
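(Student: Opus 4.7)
The plan is to reduce the exponential moment to a backward induction on the time grid, exploiting the piecewise-constant structure of $\bar y$. Since $\bar{y}_u=\tilde{y}_{t_n}^+$ for $u\in[t_n,t_{n+1})$, the integral collapses to the Riemann sum $\int_0^T\bar{y}_u\,du=\delta t\sum_{n=0}^{N-1}\tilde{y}_{t_n}^+$, so it suffices to bound $\E\bigl[\exp(\lambda\delta t\sum_{n=0}^{N-1}\tilde{y}_{t_n}^+)\bigr]$ uniformly in $\delta t$. I would postulate the supersolution ansatz
\begin{equation*}
\E\Bigl[\exp\Bigl(\lambda\delta t\sum_{j=n}^{N-1}\tilde{y}_{t_j}^+\Bigr)\Bigm|\mathcal{F}_{t_n}\Bigr]\le K_n\exp\bigl(B_n\tilde{y}_{t_n}^+\bigr),\qquad B_N=0,\;K_N=1,
\end{equation*}
and identify backward recursions for $(B_n,K_n)$ keeping both sequences uniformly bounded as $\delta t\to0$.

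The induction step reduces to controlling $\E[\exp(B_{n+1}\tilde{y}_{t_{n+1}}^+)\mid\mathcal{F}_{t_n}]$. On $\{\tilde{y}_{t_n}\ge0\}$, $\tilde{y}_{t_{n+1}}$ is Gaussian with mean $\mu_n=(1-k_y\delta t)\tilde{y}_{t_n}+k_y\theta_y\delta t$ and variance $\sigma_n^2=\xi_y^2\tilde{y}_{t_n}\delta t$, so splitting the integral at $\tilde{y}_{t_{n+1}}=0$ gives the exact expression $\Phi(-\mu_n/\sigma_n)+\exp(B_{n+1}\mu_n+\tfrac12 B_{n+1}^2\sigma_n^2)\,\Phi((\mu_n+B_{n+1}\sigma_n^2)/\sigma_n)$, while the case $\{\tilde{y}_{t_n}<0\}$ is deterministic and handled separately. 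Matching the leading Gaussian exponential with $K_n\exp(B_n\tilde{y}_{t_n}^+)/K_{n+1}$ forces the Riccati-type backward recursion
\begin{equation*}
B_n=B_{n+1}(1-k_y\delta t)+\tfrac12\xi_y^2 B_{n+1}^2\delta t+\lambda\delta t.
\end{equation*}
The hard part is absorbing the residual $\Phi(-\mu_n/\sigma_n)$ into $K_n$ without picking up a constant multiplicative factor per step, which would make $K_0\sim C^{T/\delta t}$ diverge. Here the Gaussian tail bound $\Phi(-\mu_n/\sigma_n)\le\exp(-\mu_n^2/(2\sigma_n^2))$ combined with the elementary lower bound $\mu_n^2/(2\sigma_n^2)\ge(1-k_y\delta t)^2\tilde{y}_{t_n}/(2\xi_y^2\delta t)+(1-k_y\delta t)k_y\theta_y/\xi_y^2$ (obtained by expanding the square) is essential: under the assumption $\delta t<k_y^{-1}$, the product $e^{\lambda\delta t\tilde{y}_{t_n}}\Phi(-\mu_n/\sigma_n)$ decays exponentially in $\tilde{y}_{t_n}$ and can be reabsorbed into the exponential ansatz at the cost of only a factor $1+O(\delta t)$ per step, keeping $K_0$ uniformly bounded.

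Finally, to extract the explicit expression for $T^*$ I would analyse the forward iterates $\tilde{B}_m:=B_{N-m}$, which satisfy $\tilde{B}_0=0$ and $\tilde{B}_{m+1}=\Psi(\tilde{B}_m)$ with $\Psi(b)=b(1-k_y\delta t)+\tfrac12\xi_y^2 b^2\delta t+\lambda\delta t$ and are dominated, via discrete Gronwall, by the solution of the Riccati ODE $\dot B=\tfrac12\xi_y^2 B^2-k_y B+\lambda$, $B(0)=0$. Two complementary a priori bounds on the quadratic term produce the two cases. In Case 2, $k_y>\xi_y\sqrt{\lambda/2}$, the elementary bound $\tfrac12\xi_y^2 B^2\le k_y B$ (valid for $B\le2k_y/\xi_y^2$) gives $\dot B\le\lambda$, hence $B(t)\le\lambda t$, which remains self-consistent up to $T=2k_y/(\lambda\xi_y^2)$. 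In Case 1, $k_y\le\xi_y\sqrt{\lambda/2}$, the sharper bound $\tfrac12\xi_y^2 B^2\le\xi_y\sqrt{2\lambda}\,B$ (valid for $B\le2\sqrt{2\lambda}/\xi_y$) gives $\dot B\le\alpha B+\lambda$ with $\alpha=\xi_y\sqrt{2\lambda}-k_y>0$, and Gronwall yields $B(t)\le(\lambda/\alpha)(e^{\alpha t}-1)$, which stays within the self-consistency regime up to $T=1/\alpha$. In both cases $B_0=\tilde{B}_N$ is uniformly bounded; combined with the boundedness of $K_0$ from the previous paragraph, this gives $\E[\bar\Theta_T]\le K_0\exp(B_0 y_0)<\infty$ uniformly in $\delta t\in(0,\delta_T)$.
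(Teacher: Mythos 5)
Your backward induction with the exponential supersolution ansatz $K_n\exp(B_n\tilde y_{t_n}^+)$ and the exact Riccati update $B_n=B_{n+1}(1-k_y\delta t)+\tfrac12\xi_y^2B_{n+1}^2\delta t+\lambda\delta t$ is structurally the same induction the paper runs, just phrased more transparently: the paper's exponent $\eta m\lambda\delta t\,\bar y_{t_{N-m}}$ is a linear majorant of exactly these Riccati iterates, and the polynomial condition \eqref{eq3.0.1} on $T$ is precisely what keeps that majorant valid. Your derivation of $T^*$ by comparing $\tilde B_m$ to the Riccati ODE $\dot B=\tfrac12\xi_y^2B^2-k_yB+\lambda$ and using the two piecewise linearizations ($\tfrac12\xi_y^2B^2\le k_yB$, resp.\ $\le\xi_y\sqrt{2\lambda}\,B$, in their self-consistent regimes) is a genuinely different and arguably cleaner route to the same expressions \eqref{eq3.8.1}--\eqref{eq3.8.2}, and it does work at the discrete level because the self-consistency bounds are applied directly to the discrete recursion, not to the ODE solution.

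The gap is in the residual absorption, and it is the heart of the theorem. You bound the tail $\Phi(-\mu_n/\sigma_n)$ alone via $\Phi(-a)\le e^{-a^2/2}$ and then claim that $e^{\lambda\delta t\,w}\Phi(-\mu_n/\sigma_n)$ can be reabsorbed at cost $1+O(\delta t)$ per step. That is false: using your own lower bound on $\mu_n^2/(2\sigma_n^2)$, one finds
\[
\sup_{w>0}\;e^{\lambda\delta t\,w}\,\Phi(-\mu_n/\sigma_n)
\;\le\;\sup_{w>0}\;\exp\!\Big((\lambda\delta t-c_1)w-c_2\Big)=e^{-c_2},
\qquad c_2=\frac{(1-k_y\delta t)k_y\theta_y}{\xi_y^{2}},
\]
attained as $w\downarrow0$, and $e^{-c_2}$ is a fixed constant in $(0,1)$ independent of $\delta t$ (indeed the sharper bound $\Phi(-\mu_n/\sigma_n)\le\Phi\!\big(-2\sqrt{(1-k_y\delta t)k_y\theta_y}/\xi_y\big)$ gives the same conclusion). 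So your per-step multiplicative factor is $1+O(1)$, not $1+O(\delta t)$, and $K_0$ blows up like $C^{T/\delta t}$, which is exactly the divergence you warned against. The fix is to stop bounding $\Phi(z_0)$ in isolation and instead exploit the cancellation between the two Gaussian tails. Writing $z_0=-\mu_n/\sigma_n$ and $z_1=-(\mu_n+B_{n+1}\sigma_n^2)/\sigma_n<z_0<0$ (this ordering is where $\delta t<k_y^{-1}$ enters), and using $\Phi(-z_1)=1-\Phi(z_1)$, one gets
\[
e^{-B'_{n+1}w}\Phi(z_0)+e^{B_{n+1}k_y\theta_y\delta t}\big(1-\Phi(z_1)\big)
\;\le\;e^{B_{n+1}k_y\theta_y\delta t}\big(1+\Phi(z_0)-\Phi(z_1)\big),
\]
and it is the \emph{difference} $\Phi(z_0)-\Phi(z_1)\le(z_0-z_1)\phi(z_0)=B_{n+1}\xi_y\sqrt{w\delta t}\,\phi(z_0)$ that is small: the paper's estimate \eqref{eq3.12}--\eqref{eq3.13} shows $\sup_{w>0}\sqrt{w}\,\phi(z_0)\le\nu_y\sqrt{\delta t}$, whence $\Phi(z_0)-\Phi(z_1)\le B_{n+1}\xi_y\nu_y\,\delta t=O(\delta t)$ once $B_{n+1}$ is uniformly bounded. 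That mean-value-theorem step is what your sketch is missing; without it the induction cannot close.
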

\begin{proof}
First, we prove that there exists $\eta\geq1$ independent of $\delta t$ such that, for all $\omega\in[0,1]$,
\begin{equation}\label{eq3.0.1}
\eta^{2}\omega^{2}\lambda\xi_{y}^{2}T^{2} - 2\eta\omega k_{y}T - 2\eta + 2 \leq 0.
\end{equation}
Fix any $\eta\geq1$ and define the polynomial
\begin{equation}\label{eq3.0.2}
f_{\eta}(\omega) = \omega^{2}\eta^{2}\lambda\xi_{y}^{2}T^{2} - 2\omega\eta k_{y}T - 2(\eta-1),
\end{equation}
with two distinct real roots
\begin{equation}\label{eq3.0.3}
\omega_{\pm} = \frac{k_{y}\pm\sqrt{\smash[b]{k_{y}^{2}+2(\eta-1)\lambda\xi_{y}^{2}}}}{\eta\lambda\xi_{y}^{2}T}\hspace{1pt}.
\end{equation}
Since $\omega_{-}\leq0<\omega_{+}$, we have that $f_{\eta}([0,1])\leq0$ if and only if $f_{\eta}(1)\leq0$, i.e.,
\begin{equation}\label{eq3.0.4}
\eta^{2}\lambda\xi_{y}^{2}T^{2} - 2\eta(1+k_{y}T) + 2 \leq 0.
\end{equation}
This holds for some $\eta\geq1$ if and only if the quadratic polynomial in $\eta$ on the left-hand side has a real root greater or equal to one. Hence, we find the necessary and sufficient conditions:
\begin{equation*}
\big(\sqrt{2\lambda}\hspace{1pt}\xi_{y}-k_{y}\big)T \leq 1, \hspace{2pt}\text{ and }\hspace{4pt}
2\lambda\xi_{y}^{2}T \leq k_{y}+\sqrt{k_{y}^{2}+4\lambda\xi_{y}^{2}} \hspace{4pt}\text{ or }\hspace{4pt}
k_{y}+\sqrt{k_{y}^{2}+4\lambda\xi_{y}^{2}} < 2\lambda\xi_{y}^{2}T \leq 4k_{y},
\end{equation*}
which are equivalent to $T\leq T^{*}$, with $T^{*}$ defined in \eqref{eq3.8.1} -- \eqref{eq3.8.2}. Fix any $\eta$ satisfying \eqref{eq3.0.1}.

Next, we prove by induction on $0\hspace{-1pt}\leq\hspace{-1pt}m\hspace{-1pt}\leq\hspace{-1pt}N$ that, for all $\delta t<\delta_{T}$,
\begin{align}\label{eq3.9}
\E\big[\bar{\Theta}_{T}\big] &\leq
\exp\bigg\{\frac{\eta}{2}\big(k_{y}\theta_{y} + \nu_{y}\xi_{y}\big)\lambda(\delta t)^{2}(m-1)m\bigg\} \E\bigg[\exp\bigg\{\lambda\delta t \sum_{i=0}^{N-m-1}{\hspace{-.2em}\bar{y}_{t_{i}}} + \eta m\lambda\delta t\hspace{1pt}\bar{y}_{t_{N-m}}\bigg\}\bigg],
\end{align}
where
\begin{equation}\label{eq3.9.1}
\nu_{y} = \sqrt{\frac{\xi_{y}^{2}}{4\pi(1-k_{y}\delta_{T})^{2}}+\frac{1}{2\pi}\hspace{1pt}\sqrt{\frac{\xi_{y}^{4}}{4(1-k_{y}\delta_{T})^{4}} + \frac{k_{y}^{2}\theta_{y}^{2}}{(1-k_{y}\delta_{T})^{2}}}}\hspace{2pt}.
\end{equation}
Let $\left\{\mathcal{G}_{t}^{y},\hspace{1.5pt} 0\hspace{-1pt}\leq\hspace{-1pt}t\hspace{-1pt}\leq\hspace{-1pt}T\right\}$ be the natural filtration generated by the Brownian motion $W^{y}$ and consider the shorthand notation $\E_{t}^{y}\big[\hspace{.5pt}\cdot\hspace{.5pt}\big]=\E\big[\hspace{.5pt}\cdot\hspace{.5pt}|\mathcal{G}_{t}^{y}\big]$ for the conditional expectation. Note that \eqref{eq3.9} clearly holds when $m\in\{0,1\}$. Next, let us assume that \eqref{eq3.9} holds for $1\hspace{-1pt}\leq\hspace{-1pt}m\hspace{-1pt}<\hspace{-1pt}N$ and prove the inductive step. Conditioning on the $\sigma$-algebra $\mathcal{G}_{t_{N-m-1}}^{y}$, we obtain
\begin{align}
\E\big[\bar{\Theta}_{T}\big] &\leq
\exp\bigg\{\frac{1}{2}\hspace{1pt}\eta\big(k_{y}\theta_{y} + \nu_{y}\xi_{y}\big)\lambda(\delta t)^{2}(m-1)m\bigg\} \nonumber\\[0pt]
&\times \E\bigg[\exp\bigg\{\lambda\delta t\sum_{i=0}^{N-m-1}{\hspace{-.2em}\bar{y}_{t_{i}}}\bigg\}\E_{t_{N-m-1}}^{y}\hspace{-2pt}\bigg[\exp\bigg\{\eta m\lambda\delta t\hspace{1pt}\bar{y}_{t_{N-m}}\bigg\}\bigg]\bigg]. \label{eq3.10}
\end{align}
For convenience, define $w = \bar{y}_{t_{N-m-1}}$. If $Z\sim \mathcal{N}\left(0,1\right)$, then $\mathcal{G}_{t_{N-m-1}}^{y} \independent \delta W^{y}_{t_{N-m-1}} \eqlaw \sqrt{\delta t}\hspace{1pt} Z$. Let $\mathcal{I}$ be the inner expectation in \eqref{eq3.10}, then
\begin{align*}
\mathcal{I} \leq \E_{0,w}\left[\exp\left\{\eta m\lambda\delta t\hspace{1pt}\max\!\left[0,\,w + k_{y}(\theta_{y} - w)\delta t + \xi_{y}\sqrt{w\delta t}\hspace{1pt}Z\right]\right\}\right]. \label{eq3.11}
\end{align*}
There are two possible outcomes, namely $w=0$, in which case
\begin{equation}\label{eq3.11.1}
\mathcal{I} \leq \exp\Big\{\eta m k_{y}\theta_{y}\lambda(\delta t)^{2}\Big\},
\end{equation}
and $w>0$, which is treated now:
\begin{align}\label{eq3.11.2}
\mathcal{I} &\leq \int_{z_{0}}^{\infty}{\frac{1}{\sqrt{2\pi}}\exp\left\{-\frac{1}{2}z^{2} + \eta m\xi_{y}\lambda\sqrt{w}\hspace{1pt}(\delta t)^{3/2}z + \eta m\lambda\delta t \big[w + k_{y}(\theta_{y}-w)\delta t\big]\right\}dz} \nonumber\\[3pt]
&+ \int_{-\infty}^{z_{0}}{\frac{1}{\sqrt{2\pi}}\exp\left\{-\frac{1}{2}z^{2}\right\}dz},
\end{align}
where
\begin{equation}\label{eq3.11.3}
z_{0} = -\frac{k_{y}\theta_{y}\delta t + (1-k_{y}\delta t) w}{\xi_{y}\sqrt{w\delta t}}\hspace{1pt}.
\end{equation}
Recall that $\delta_{T}<k_{y}^{-1}$ and define
\begin{equation}\label{eq3.11.4}
z_{1} = z_{0} - \eta m\xi_{y}\lambda\sqrt{w}\hspace{1pt}(\delta t)^{3/2}.
\end{equation}
Then $z_{1} < z_{0} < 0$. If $\phi$ and $\Phi$ are the standard normal PDF and CDF, then
\begin{align*}
\mathcal{I} \leq \Phi\left(z_{0}\right) + \exp\bigg\{\eta m\lambda\delta t\big[k_{y}\theta_{y}\delta t + (1-k_{y}\delta t) w\big] + \frac{1}{2}\hspace{1pt}\eta^{2}m^{2}\xi_{y}^{2}\lambda^{2}w(\delta t)^{3}\bigg\}\Big\{1 - \Phi\left(z_{1}\right)\hspace{-2pt}\Big\},
\end{align*}
and hence
\begin{align}\label{eq3.11.5}
\mathcal{I} \leq \exp\bigg\{\eta m k_{y}\theta_{y}\lambda(\delta t)^{2} + \frac{1}{2}\hspace{1pt}a\lambda w\delta t\bigg\}\Big\{1 + \Phi\left(z_{0}\right) - \Phi\left(z_{1}\right)\hspace{-2pt}\Big\},
\end{align}
where
\begin{equation}\label{eq3.11.6}
a = 2\eta m(1-k_{y}\delta t) + \eta^{2}m^{2}\xi_{y}^{2}\lambda(\delta t)^{2}>0.
\end{equation}
Applying the mean value theorem to $\Phi\in C^{1}$, we can find $z \in \left[z_{1},z_{0}\right]$ such that
\begin{equation*}
\Phi(z_{0}) - \Phi(z_{1}) = \left(z_{0}-z_{1}\right)\phi\left(z\right) \leq \left(z_{0}-z_{1}\right)\phi\left(z_{0}\right).
\end{equation*}
Hence, from \eqref{eq3.11.3} and \eqref{eq3.11.4},
\begin{equation}\label{eq3.12}
\Phi(z_{0}) - \Phi(z_{1}) \leq \underbrace{\frac{1}{\sqrt{2\pi}}\hspace{1pt}\eta m\xi_{y}\lambda(\delta t)^{3/2}}_{= \, b\hspace{.5pt}, \text{ constant w.r.t. } w} \cdot\, \sqrt{w}\hspace{1pt}\exp\left\{-\frac{\big[k_{y}\theta_{y}\delta t + (1-k_{y}\delta t)w\big]^2}{2\hspace{1pt}\xi_{y}^{2}w\delta t}\right\}.
\end{equation}
We can think of the right-hand side as a function of $w$, say $f:(0,\infty)\mapsto\mathbb{R}$. Next, we show that
\begin{equation}\label{eq3.13}
f(w) \leq \eta m\nu_{y}\xi_{y}\lambda(\delta t)^{2},\hspace{1em} \forall\hspace{.5pt}w > 0.
\end{equation}
In order to find its global maximum, we need to compute the first derivative.
\begin{equation*}
f'(w) = b\hspace{.5pt}e^{-z_{0}^{2}/2} \left\{\frac{1}{2\sqrt{w}} + \sqrt{w}\left[\frac{k_{y}^{2}\theta_{y}^{2}\delta t}{2\hspace{1pt}\xi_{y}^{2}w^{2}} - \frac{(1 - k_{y}\delta t)^{2}}{2\hspace{1pt}\xi_{y}^2\delta t}\right]\right\}.
\end{equation*}
Therefore,
\begin{equation*}
f'(w) = 0 \hspace{5pt}\Leftrightarrow\hspace{5pt} -(1-k_{y}\delta t)^{2}w^{2} + \xi_{y}^{2}w\delta t + k_{y}^{2}\theta_{y}^{2}\left(\delta t\right)^{2} = 0.
\end{equation*}
To solve this quadratic, we divide throughout by $\left(\delta t\right)^{2}$ and introduce a new variable, $\hat{w} = w/\delta t$. Then there exists a unique, positive solution $\hat{w}_{0}$ for which the derivative is zero, namely
\begin{equation*}
\hat{w}_{0} = \frac{\xi_{y}^{2}}{2(1-k_{y}\delta t)^{2}}+\sqrt{\frac{\xi_{y}^{4}}{4(1-k_{y}\delta t)^{4}} + \frac{k_{y}^{2}\theta_{y}^{2}}{(1-k_{y}\delta t)^{2}}}\hspace{2pt}.
\end{equation*}
From \eqref{eq3.9.1}, we have that
\begin{equation*}
\hat{w}_{0} < 2\pi\nu_{y}^{2} \hspace{5pt}\Rightarrow\hspace{5pt} w_{0} < 2\pi\nu_{y}^{2}\delta t.
\end{equation*}
Since the second root is negative, $f(w)$ must be increasing up to $w_{0}$ and decreasing after this point, so the function attains its global maximum at $w_{0}$. Hence,
\begin{equation*}
f(w) \leq f(w_{0}) = b\sqrt{w_{0}}\hspace{1pt}e^{-z_{0}^{2}/2} \leq b\sqrt{2\pi\nu_{y}^{2}\delta t} = \eta m\nu_{y}\xi_{y}\lambda(\delta t)^{2}.
\end{equation*}
Making use of the upper bound in \eqref{eq3.13}, we derive the following inequality:
\begin{equation}\label{eq3.16}
1 + \Phi(z_{0}) - \Phi(z_{1}) \leq \exp\Big\{\eta m\nu_{y}\xi_{y}\lambda(\delta t)^{2}\Big\}.
\end{equation}
Substituting back into \eqref{eq3.11.5} with \eqref{eq3.16}, we get
\begin{equation*}
\mathcal{I} \leq \exp\bigg\{\eta m\big(k_{y}\theta_{y}+\nu_{y}\xi_{y}\big)\lambda(\delta t)^{2} + \frac{1}{2}\hspace{1pt}a\lambda w\delta t\bigg\}.
\end{equation*}
Note from \eqref{eq3.11.1} that this holds when $w=0$ as well. Applying \eqref{eq3.0.1} with $\omega=\frac{m}{N}$ leads to
\begin{equation*}
\eta^{2}m^{2}\xi_{y}^{2}\lambda(\delta t)^{2} - 2\eta m k_{y}\delta t - 2\eta + 2 \leq 0.
\end{equation*}
Hence, from \eqref{eq3.11.6},
\begin{equation*}
a \leq 2\eta(m+1) - 2.
\end{equation*}
Therefore,
\begin{equation*}
\mathcal{I} \leq \exp\bigg\{\eta m\big(k_{y}\theta_{y}+\nu_{y}\xi_{y}\big)\lambda(\delta t)^{2} - \lambda w\delta t + \eta(m+1)\lambda w\delta t\bigg\}.
\end{equation*}
Substituting back into \eqref{eq3.10} gives the inductive step. Finally, taking $m=N$ in \eqref{eq3.9} leads to
\begin{equation}\label{eq3.17.2}
\E\big[\bar{\Theta}_{T}\big] < \exp\left\{\frac{1}{2}\hspace{1pt}\eta\lambda T^{2}\big(k_{y}\theta_{y}+\nu_{y}\xi_{y}\big) + \eta\lambda T y_{0}\right\}.
\end{equation}
The right-hand side is finite and independent of $\delta t$ and the conclusion follows. 
\qquad
\end{proof}

The next two results establish uniform moment bounds for both the original and the discretized variance and short rate processes. Note that under assumption {\rm ($\mathcal{A}$3)}, the moment boundedness of $g^{i}$ and $\hat{g}^{i}$, for $i\in\{d,f\}$, extends naturally to $|r^{i}|$ and $|\bar{r}^{i}|$.

\begin{proposition}\label{Prop3.3.3}
The following two statements hold.
\begin{enumerate}[(1)]
\item{The square-root process in \eqref{eq2.5} has uniformly bounded moments, i.e.,
\begin{equation}\label{eq3.18.1}
\E\bigg[\sup_{t\in[0,T]} y_{t}^{p}\bigg] < \infty,\hspace{1em} \forall\hspace{.5pt}p\geq1.
\end{equation}}
\item{The FTE scheme in \eqref{eq2.8a} for the square-root process has uniformly bounded moments, i.e.,
\begin{equation}\label{eq3.18.2}
\sup_{\delta t \in (0,\delta_{T})}\E\bigg[\sup_{t \in [0,T]}\hat{y}_{t}^p\bigg] < \infty,\hspace{1em} \forall\hspace{.5pt}p\geq1,\hspace{.5em} \forall\hspace{.5pt}\delta_{T}>0.
\end{equation}}
\end{enumerate}
\end{proposition}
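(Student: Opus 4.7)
The plan is to handle both parts in parallel, leveraging the exponential integrability results of Proposition~\ref{Prop3.3.1} for the exact process and Theorem~\ref{Thm3.3.2} for its FTE approximation. The key observation is that, in both statements, for any fixed $T>0$ one can choose $\lambda>0$ small enough that $T\leq T^{*}(\lambda)$: in Proposition~\ref{Prop3.3.1}, any $\lambda\leq k_{y}^{2}/(2\xi_{y}^{2})$ falls into Case~(2) and yields $T^{*}=\infty$, while in Theorem~\ref{Thm3.3.2} Case~(2) gives $T^{*}=2k_{y}/(\lambda\xi_{y}^{2})\to\infty$ as $\lambda\downarrow 0$. Hence for any $T,\delta_{T}>0$ there exists $\lambda>0$ such that
\begin{equation*}
\E\!\left[\exp\!\Big(\lambda\!\int_{0}^{T}\!y_{u}\,du\Big)\right]<\infty \quad\text{and}\quad \sup_{\delta t\in(0,\,k_{y}^{-1}\wedge\delta_{T})}\E\!\left[\exp\!\Big(\lambda\!\int_{0}^{T}\!\bar{y}_{u}\,du\Big)\right]<\infty.
\end{equation*}

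Next I would express each process in integral form: from \eqref{eq2.5} and \eqref{eq2.7},
\begin{equation*}
y_{t}=y_{0}+k_{y}\theta_{y}t-k_{y}\!\int_{0}^{t}\!y_{u}\,du+\xi_{y}\!\int_{0}^{t}\!\sqrt{y_{u}}\,dW_{u}^{y},\quad \tilde{y}_{t}=y_{0}+k_{y}\theta_{y}t-k_{y}\!\int_{0}^{t}\!\bar{y}_{u}\,du+\xi_{y}\!\int_{0}^{t}\!\sqrt{\bar{y}_{u}}\,dW_{u}^{y},
\end{equation*}
so that $0\leq\hat{y}_{t}=\tilde{y}_{t}^{+}\leq|\tilde{y}_{t}|$. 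Applying the inequality $|a_{1}+\cdots+a_{4}|^{p}\leq 4^{p-1}(|a_{1}|^{p}+\cdots+|a_{4}|^{p})$, taking the supremum over $t\in[0,T]$ followed by expectation, and using Burkholder--Davis--Gundy on the stochastic-integral term yields a constant $C_{p}>0$ for which
\begin{equation*}
\E\!\left[\sup_{t\in[0,T]}\hat{y}_{t}^{p}\right]\leq C_{p}\!\left(1+\E\!\left[\Big(\!\int_{0}^{T}\!\bar{y}_{u}\,du\Big)^{\!p}\right]+\E\!\left[\Big(\!\int_{0}^{T}\!\bar{y}_{u}\,du\Big)^{\!p/2}\right]\right),
\end{equation*}
together with the analogous bound for $\sup_{t\in[0,T]}y_{t}^{p}$ with $\bar{y}$ replaced by $y$.

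To close the estimate uniformly in $\delta t$, I would invoke the elementary inequality $x^{q}\leq (q/(e\lambda))^{q}e^{\lambda x}$, valid for all $x\geq 0$ and $q\geq 1$, so that both polynomial moments on the right-hand side are dominated by the exponential moment furnished by Theorem~\ref{Thm3.3.2} (resp.\ Proposition~\ref{Prop3.3.1} in the case of $y$). This settles the subrange $\delta t\in(0,k_{y}^{-1}\wedge\delta_{T})$. For the complementary range $\delta t\in[k_{y}^{-1}\wedge\delta_{T},\delta_{T})$, which is only nonempty when $\delta_{T}>k_{y}^{-1}$, the number of Euler steps $N=T/\delta t$ is bounded and the scheme \eqref{eq2.6} is a finite composition of Gaussian updates with conditionally bounded variance, so a direct induction on $n$ yields finite moments of all orders for $\tilde{y}_{t_{n}}$ uniformly over this compact subrange.

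The main obstacle is precisely the uniform-in-$\delta t$ control as $\delta t\downarrow 0$, which is exactly what Theorem~\ref{Thm3.3.2} supplies; a routine localisation by stopping times of the form $\tau_{n}=\inf\{t\in[0,T]:\int_{0}^{t}\bar{y}_{u}\,du\geq n\}\wedge T$ may be needed to legitimise the BDG step and the passage to the limit $n\to\infty$, after which monotone convergence gives the claimed bound on $\E[\sup_{t\in[0,T]}\hat{y}_{t}^{p}]$.
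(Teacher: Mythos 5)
Your argument is correct, but it follows a genuinely different route from the paper's and relies on heavier machinery. For part (2), the paper does \emph{not} invoke Theorem~\ref{Thm3.3.2} at all: it instead uses the integral representation of $\tilde{y}$ together with the elementary inequality $\max(0,x)^{p}\leq c(p,\epsilon)e^{\epsilon x}$ to bound $\hat{y}_{t}^{p}\leq c(p,\epsilon)e^{\epsilon\tilde{y}_{t}}$, and then observes that $\Gamma_{t}=\exp\{-\epsilon k_{y}\int_{0}^{t}\bar{y}_{u}\,du+\epsilon\xi_{y}\int_{0}^{t}\sqrt{\bar{y}_{u}}\,dW_{u}^{y}\}$ has non-increasing conditional expectations as soon as $\epsilon\leq 2k_{y}/\xi_{y}^{2}$ (a one-line Gaussian computation over each grid interval), giving $\sup_{t}\E[\Gamma_{t}]=1$. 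This yields $\sup_{t}\E[\hat{y}_{t}^{p}]\leq c(p,\epsilon)e^{\epsilon(y_{0}+k_{y}\theta_{y}T)}$ uniformly in $\delta t$ without any maturity or time-step restriction, after which the same H\"older/BDG lift you describe closes the argument. Similarly for part (1), the paper cites the known bound $\sup_{t}\E[y_{t}^{p}]<\infty$ from the exact CIR moment formulas rather than going through Proposition~\ref{Prop3.3.1}. Your route — pushing the exponential integrability results of Proposition~\ref{Prop3.3.1} and Theorem~\ref{Thm3.3.2} to their $\lambda\downarrow 0$ limit where $T^{*}\to\infty$, and then dominating polynomial moments of $\int\bar{y}\,du$ via $x^{q}\leq(q/(e\lambda))^{q}e^{\lambda x}$ — is valid, and it is conceptually appealing in that it presents the proposition as a corollary of the exponential integrability theory. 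However, it has two costs: it depends on the lengthy induction in Theorem~\ref{Thm3.3.2}, and, because that theorem assumes $\delta_{T}<k_{y}^{-1}$, it forces you into the case split over $\delta t\in[k_{y}^{-1}\wedge\delta_{T},\delta_{T})$, which the paper's self-contained supermartingale argument avoids entirely. Both routes are sound, but the paper's is strictly more elementary and avoids importing the step-size restriction.
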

\begin{proof}
(1) The polynomial moments of the square-root process can be expressed in terms of the confluent hypergeometric function and, according to Theorem 3.1 in \cite{Hurd:2008} or to \cite{Dereich:2012}, they are uniformly bounded, i.e.,
\begin{equation}\label{eq3.19.1}
\sup_{t \in [0,T]}\E\big[y_{t}^p\big] < \infty,\hspace{1em} \forall\hspace{.5pt}p>-\hspace{1pt}\frac{2k_{y}\theta_{y}}{\xi_{y}^{2}}\hspace{1pt}.
\end{equation}
On the other hand, using H\"older's inequality, we deduce that
\begin{equation}\label{eq3.19.2}
\sup_{t\in[0,T]} y_{t}^{p} \leq 3^{p-1}\big(y_{0} + k_{y}\theta_{y}T\big)^{p} + 3^{p-1}k_{y}^{p}T^{p-1}\int_{0}^{T}{y_{u}^{p}\,du} + 3^{p-1}\xi_{y}^{p}\sup_{t\in[0,T]}\bigg|\int_{0}^{t}{\sqrt{y_{u}}\,dW^{y}_{u}}\bigg|^{p}.
\end{equation}
Using the Burkholder--Davis--Gundy (BDG) inequality, we can find a constant $C_{p}>0$ such that
\begin{equation}\label{eq3.19.3}
\E\bigg[\sup_{t\in[0,T]}\bigg|\int_{0}^{t}{\sqrt{y_{u}}\,dW^{y}_{u}}\bigg|^{p}\hspace{1pt}\bigg] \leq C_{p}\E\bigg[\bigg(\int_{0}^{T}{y_{u}\,du}\bigg)^{p/2}\hspace{1pt}\bigg] \leq \frac{1}{2}\hspace{1pt}C_{p} + \frac{1}{2}\hspace{1pt}C_{p}T^{p-1}\E\bigg[\int_{0}^{T}{y_{u}^{p}\,du}\bigg].
\end{equation}
Taking expectations in \eqref{eq3.19.2} and employing Fubini's theorem, we get
\begin{equation}\label{eq3.19.4}
\E\bigg[\sup_{t\in[0,T]} y_{t}^{p}\bigg] \leq 3^{p-1}\big(y_{0} + k_{y}\theta_{y}T\big)^{p} + \frac{1}{2}\hspace{.5pt}3^{p-1}\xi_{y}^{p}C_{p} + 3^{p-1}\big(k_{y}^{p}+0.5\hspace{.5pt}\xi_{y}^{p}C_{p}\big)T^{p}\sup_{t \in [0,T]}\E\big[y_{t}^p\big].
\end{equation}
The right-hand side is finite by \eqref{eq3.19.1}, whence the conclusion.

(2) Integrating the auxiliary process $\tilde{y}$ defined in \eqref{eq2.7}, we deduce that
\begin{equation}\label{eq3.20}
\tilde{y}_{t} = y_{0} + k_{y}\int_{0}^{t}{\left(\theta_{y}-\bar{y}_{u}\right)du} + \xi_{y}\int_{0}^{t}{\sqrt{\bar{y}_{u}}\,dW^{y}_{u}}.
\end{equation}
For any $p,\epsilon>0$, there exists a constant $c(p,\epsilon)>0$ such that $\max(0,x)^{p}\leq c(p,\epsilon)e^{\epsilon x}$ for all $x\in \mathbb{R}$. In particular, this implies that $\hat{y}_{t}^{p}\leq c(p,\epsilon)e^{\epsilon \tilde{y}_{t}}$ for all $t\in[0,T]$. Hence,
\begin{equation}\label{eq3.20.1}
\sup_{t \in [0,T]}\hspace{-1pt}\E\big[\hat{y}_{t}^p\big] \leq c(p,\epsilon)\hspace{1pt}e^{\epsilon\left(y_{0}+k_{y}\theta_{y}T\right)}\hspace{-1pt}\sup_{t \in [0,T]}\hspace{-1pt}\E\big[\Gamma_{t}\big],
\end{equation}
where
\begin{equation}\label{eq3.20.2}
\Gamma_{t} \equiv \exp\bigg\{\hspace{-1.5pt}-\epsilon k_{y}\hspace{-1pt}\int_{0}^{t}{\bar{y}_{u}\,du} + \epsilon\hspace{.5pt}\xi_{y}\hspace{-1pt}\int_{0}^{t}{\hspace{-1.5pt}\sqrt{\bar{y}_{u}}\,dW^{y}_{u}}\bigg\}.
\end{equation}
Assuming that $t \in [t_{n}, t_{n+1}]$ and conditioning on the $\sigma$-algebra $\mathcal{G}_{t_{n}}^{y}$, we get
\begin{equation}\label{eq3.20.3}
\E_{t_{n}}^{y}\!\big[\Gamma_{t}\big] = \Gamma_{t_{n}}\hspace{.5pt}\exp\Big\{\big(0.5\epsilon^{2}\xi_{y}^{2}-\epsilon k_{y}\big)\big(t-t_{n}\big)\bar{y}_{t_{n}}\Big\}.
\end{equation}
However, we can find $\epsilon$ sufficiently small such that $k_{y}\geq0.5\epsilon\hspace{.5pt}\xi_{y}^{2}$. Then $\E_{t_{n}}^{y}\!\big[\Gamma_{t}\big] \leq \E_{t_{n}}^{y}\!\big[\Gamma_{t_{n}}\big]$ and the law of iterated expectations ensures that
\begin{equation*}
\E\big[\Gamma_{t}\big] \leq \E\big[\Gamma_{t_{n}}\big] \leq \E\big[\Gamma_{t_{n-1}}\big] \leq \hdots \leq \E\big[\Gamma_{0}\big] \;\,\Rightarrow\; \sup_{t \in [0,T]}\E\big[\Gamma_{t}\big] = 1.
\end{equation*}
Substituting back into \eqref{eq3.20.1}, we deduce that
\begin{equation}\label{eq3.20.4}
\sup_{t \in [0,T]}\hspace{-1pt}\E\big[\hat{y}_{t}^p\big] \leq c(p,\epsilon)\hspace{1pt}e^{\epsilon\left(y_{0}+k_{y}\theta_{y}T\right)},
\end{equation}
which is independent of $\delta t$. On the other hand, since $\hat{y}_{t}\leq|\tilde{y}_{t}|$ and using H\"older's inequality, we get
\begin{equation}\label{eq3.20.5}
\sup_{t \in [0,T]}\hat{y}_{t}^p \leq 3^{p-1}\big(y_{0} + k_{y}\theta_{y}T\big)^{p} + 3^{p-1}k_{y}^{p}T^{p-1}\int_{0}^{T}{\bar{y}_{u}^{p}\hspace{1.5pt}du} + 3^{p-1}\xi_{y}^{p}\sup_{t\in[0,T]}\bigg|\int_{0}^{t}{\sqrt{\bar{y}_{u}}\,dW^{y}_{u}}\bigg|^{p}.
\end{equation}
Taking expectations on both sides and using the BDG inequality as in \eqref{eq3.19.3} leads to
\begin{equation}\label{eq3.20.6}
\E\bigg[\sup_{t \in [0,T]}\hat{y}_{t}^p\bigg] \leq 3^{p-1}\big(y_{0} + k_{y}\theta_{y}T\big)^{p} + \frac{1}{2}\hspace{.5pt}3^{p-1}\xi_{y}^{p}C_{p} + 3^{p-1}\big(k_{y}^{p}+0.5\hspace{.5pt}\xi_{y}^{p}C_{p}\big)T^{p}\sup_{t \in [0,T]}\E\big[\hat{y}_{t}^p\big],
\end{equation}
for some constant $C_{p}>0$. However, $\sup_{t \in [0,T]}\E\big[\bar{y}_{t}^p\big]\leq\sup_{t \in [0,T]}\E\big[\hat{y}_{t}^p\big]$, and the conclusion follows from \eqref{eq3.20.4}.
\qquad
\end{proof}

\begin{proposition}\label{Prop3.3.4}
Under assumption {\rm ($\mathcal{A}$1)}, the following two statements hold.
\begin{enumerate}[(1)]
\item{The process $g^{f}$ has uniformly bounded moments, i.e.,
\begin{equation}\label{eq3.21.1}
\E\bigg[\sup_{t\in[0,T]}\big(g^{f}_{t}\big)^{p}\bigg] < \infty,\hspace{1em} \forall\hspace{.5pt}p\geq1.
\end{equation}}
\item{The process $\hat{g}^{f}$ from \eqref{eq2.8d} has uniformly bounded moments, i.e.,
\begin{equation}\label{eq3.21.2}
\sup_{\delta t\in(0,\delta_{T})}\E\bigg[\sup_{t\in[0,T]}\big(\hat{g}^{f}_{t}\big)^p\bigg] < \infty,\hspace{1em} \forall\hspace{.5pt}p\geq1,\hspace{.5em} \forall\hspace{.5pt}\delta_{T}>0.
\end{equation}}
\end{enumerate}
\end{proposition}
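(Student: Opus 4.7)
The plan is to mirror the two-step structure of Proposition \ref{Prop3.3.3}: first establish pointwise moment bounds $\sup_{t\in[0,T]}\E[(g^f_t)^p]<\infty$ (respectively uniform in $\delta t$ for $\hat g^f$) via It\^o / direct estimation together with a Gronwall argument, then upgrade to the supremum-in-time bound by integrating the SDE and applying the BDG inequality. The essential new ingredient relative to Proposition \ref{Prop3.3.3} is the quanto correction term $-\rho_{s\hspace{-.7pt}f}\xi_f\sigma(\cdot,S_\cdot)\sqrt{v_\cdot g^f_\cdot}$ in the drift of $g^f$, which couples $g^f$ to the squared-volatility process $v$.

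For part (1), I would apply It\^o's formula to $(g^f_{t\wedge\tau_N})^p$ with the localising sequence $\tau_N=\inf\{t\ge 0:g^f_t\ge N\}$ so that the stochastic integral $\int_0^{t\wedge\tau_N}(g^f_u)^{p-1/2}dW^f_u$ is a true martingale with zero mean. After taking expectations, the positive drift contributions are controlled via the elementary bound $(g^f_u)^{p-1}\le 1+(g^f_u)^p$ and Young's inequality
\[
(g^f_u)^{p-1/2}\sqrt{v_u}\;\le\;\tfrac{2p-1}{2p}(g^f_u)^p+\tfrac{1}{2p}v_u^p,
\]
with assumption ($\mathcal{A}$1) absorbing $\sigma(u,S_u)\le\sigma_{max}$. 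Invoking Proposition \ref{Prop3.3.3}(1) to bound $\int_0^T\E[v_u^p]\,du$ then produces an inequality of the form
\[
\E[(g^f_{t\wedge\tau_N})^p]\;\le\;C+C\int_0^t\sup_{s\le u}\E[(g^f_{s\wedge\tau_N})^p]\,du,
\]
from which Gronwall plus Fatou delivers $\sup_{t\in[0,T]}\E[(g^f_t)^p]<\infty$. The supremum-in-time bound \eqref{eq3.21.1} then follows by raising the integrated SDE to the $p$-th power and using H\"older on the Lebesgue drift integrals, AM-GM ($\sqrt{v_u g^f_u}\le\tfrac12(v_u+g^f_u)$) on the quanto integral, and BDG on the martingale part, all right-hand sides being finite thanks to the pointwise bound just obtained and Proposition \ref{Prop3.3.3}(1).

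For part (2), It\^o is no longer directly available on $(\hat g^f_t)^p$ because of the non-smooth $(\cdot)^+$ and because $\tilde g^f$ can dip below zero. Instead, I would use the pathwise inequality $\hat g^f_t\le|\tilde g^f_t|$ and estimate $|\tilde g^f_t|^p$ starting from the integrated version of \eqref{eq2.8c}. The triangle inequality and the same H\"older/AM-GM/BDG toolkit as in part (1), combined with Proposition \ref{Prop3.3.3}(2) to control $\sup_u\E[\bar v_u^p]$ uniformly in $\delta t$, and the identity $\bar g^f_u=\hat g^f_{t_n}$ on $[t_n,t_{n+1})$, give
\[
\E\bigl[|\tilde g^f_t|^p\bigr]\;\le\;C+C\int_0^t\sup_{s\le u}\E\bigl[(\hat g^f_s)^p\bigr]\,du,
\]
uniformly in $\delta t$. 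Gronwall then delivers the pointwise moment bound uniform in $\delta t$, after which the same estimate applied to $\sup_{t\in[0,T]}|\tilde g^f_t|^p$ (with BDG for $\E[\sup_t|\int_0^t\sqrt{\bar g^f_u}\,dW^f_u|^p]$) yields \eqref{eq3.21.2}.

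The main technical point to watch is the justification that the stochastic integrals can be treated as true martingales before any $L^p$-bound on $g^f$ (resp.\ $\bar g^f$) is in hand. This is handled in the standard way via the localising sequence $\tau_N$ in part (1) and, in part (2), via the grid-adapted analogue $\tau_N^{\delta t}=\inf\{t_n:\tilde g^f_{t_n}\ge N\}$, which ensures $\bar g^f_u\le N$ on $[0,\tau_N^{\delta t}\wedge T]$ so that the BDG estimates genuinely apply; a final Fatou passage $N\to\infty$ then preserves both the bound and, crucially, its $\delta t$-uniformity.
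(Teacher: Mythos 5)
Your proposal is correct, but it takes a more roundabout path than the paper. The paper does both parts in a single pass: it bounds $g^f_t$ pathwise using nonnegativity, $2\sqrt{|ab|}\le|a|+|b|$ on the quanto term and the triangle inequality on the stochastic integral, raises the resulting bound to the $p$-th power, takes the running supremum, applies the BDG estimate from \eqref{eq3.19.3}, and then runs Gronwall \emph{directly} on $\E\!\big[\sup_{s\le t}(g^f_s)^p\big]$; part (2) is word for word the same argument starting from $\hat g^f_t\le|\tilde g^f_t|$ and \eqref{eq3.23.1}, with Proposition~\ref{Prop3.3.3}(2) replacing \ref{Prop3.3.3}(1). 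Your two-step structure (pointwise bound first, then upgrade to a supremum bound) and, in particular, the It\^o-on-$(g^f)^p$ argument in Step~1 of part (1) are therefore redundant: your Step~2 is already the paper's whole proof once you observe that $\int_0^t(g^f_u)^p\,du\le\int_0^t\sup_{s\le u}(g^f_s)^p\,du$ makes the Gronwall loop close directly at the supremum level. Moreover, the It\^o detour is not just extra work; for $p\in(1,2)$ the map $x\mapsto x^p$ is not $C^2$ at the origin, and since Proposition~\ref{Prop3.3.4} does not assume a Feller condition on $g^f$, one would have to argue separately (mollification, or a Meyer--It\^o type argument exploiting the integrability of $x^{p-2}$ near $0$) that the formula still holds, an issue the paper's purely pathwise bound sidesteps entirely. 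On the plus side, your explicit localization via $\tau_N$ (and the grid-adapted $\tau_N^{\delta t}$) addresses the a~priori finiteness needed to invoke Gronwall, a point the paper passes over silently, so that part of your write-up is actually more careful than the published argument.
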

\begin{proof}
(1) From \eqref{eq2.1}, we know that
\begin{equation}\label{eq3.22.1}
g^{f}_{t} = g^{f}_{0} + k_{f}\theta_{f}t - k_{f}\int_{0}^{t}{g^{f}_{u}\,du} - \rho_{s\hspace{-.7pt}f}\xi_{f}\int_{0}^{t}{\sigma(u,S_{u})\sqrt{v_{u}g^{f}_{u}}\,du} + \xi_{f}\int_{0}^{t}{\sqrt{g^{f}_{u}}\,dW^{f}_{u}}.
\end{equation}
Since $g^{f}$ is non-negative and using the fact that $2\sqrt{|ab|}\leq |a|+|b|$, we find an upper bound
\begin{equation*}
g^{f}_{t} \leq g^{f}_{0} + k_{f}\theta_{f}t + \frac{1}{2}\hspace{1pt}|\rho_{s\hspace{-.7pt}f}|\xi_{f}\sigma_{max}\int_{0}^{t}{v_{u}\,du} + \frac{1}{2}\big(2k_{f}+|\rho_{s\hspace{-.7pt}f}|\xi_{f}\sigma_{max}\big)\int_{0}^{t}{g^{f}_{u}\,du} + \xi_{f}\bigg|\int_{0}^{t}{\sqrt{g^{f}_{u}}\,dW^{f}_{u}}\bigg|.
\end{equation*}
Fix $t\in[0,T]$. Using H\"older's inequality, we get
\begin{align}\label{eq3.22.2}
\sup_{s\in[0,t]}\big(g^{f}_{s}\big)^{p} &\leq 4^{p-1}\big(g^{f}_{0} + k_{f}\theta_{f}T\big)^{p} + 2^{p-2}|\rho_{s\hspace{-.7pt}f}|^{p}\xi_{f}^{p}\sigma_{max}^{p}T^{p-1}\int_{0}^{T}{v_{u}^{p}\,du} \nonumber\\[0pt]
&\hspace{-1.5em}+ 2^{p-2}\big(2k_{f}+|\rho_{s\hspace{-.7pt}f}|\xi_{f}\sigma_{max}\big)^{p}T^{p-1}\int_{0}^{t}{\big(g^{f}_{u}\big)^{p}du} + 4^{p-1}\xi_{f}^{p}\sup_{s\in[0,t]}\bigg|\int_{0}^{s}{\sqrt{g^{f}_{u}}\,dW^{f}_{u}}\bigg|^{p}.
\end{align}
Taking expectations on both sides and using the BDG inequality as in \eqref{eq3.19.3} leads to
\begin{align}\label{eq3.22.3}
\E\bigg[\sup_{s\in[0,t]}\big(g^{f}_{s}\big)^{p}\bigg] &\leq 4^{p-1}\big(g^{f}_{0} + k_{f}\theta_{f}T\big)^{p} + 2^{2p-3}\xi_{f}^{p}C_{p} + 2^{p-2}|\rho_{s\hspace{-.7pt}f}|^{p}\xi_{f}^{p}\sigma_{max}^{p}T^{p}\sup_{u\in[0,T]}\E\big[v_{u}^{p}\big] \nonumber\\[0pt]
&\hspace{-1.5em}+ \Big(2^{p-2}\big(2k_{f}+|\rho_{s\hspace{-.7pt}f}|\xi_{f}\sigma_{max}\big)^{p}T^{p-1} + 2^{2p-3}\xi_{f}^{p}C_{p}T^{p-1}\Big)\int_{0}^{t}{\E\bigg[\sup_{s\in[0,u]}\big(g^{f}_{s}\big)^{p}\bigg]du},
\end{align}
for some constant $C_{p}>0$. Applying Gronwall's inequality (see \cite{Dragomir:2003}), we get
\begin{align}\label{eq3.22.4}
\E\bigg[\sup_{t\in[0,T]}\big(g^{f}_{t}\big)^{p}\bigg] &\leq \Big(4^{p-1}\big(g^{f}_{0} + k_{f}\theta_{f}T\big)^{p} + 2^{2p-3}\xi_{f}^{p}C_{p} + 2^{p-2}|\rho_{s\hspace{-.7pt}f}|^{p}\xi_{f}^{p}\sigma_{max}^{p}T^{p}\sup_{u\in[0,T]}\E\big[v_{u}^{p}\big]\Big) \nonumber\\[0pt]
&\times \exp\left\{2^{p-2}\big(2k_{f}+|\rho_{s\hspace{-.7pt}f}|\xi_{f}\sigma_{max}\big)^{p}T^{p} + 2^{2p-3}\xi_{f}^{p}C_{p}T^{p}\right\}.
\end{align}
The conclusion follows from Proposition \ref{Prop3.3.3}.

(2)
Integrating the auxiliary process $\tilde{g}^{f}$ defined in \eqref{eq2.8c}, we deduce that
\begin{equation}\label{eq3.23.1}
\tilde{g}^{f}_{t} = g^{f}_{0} + k_{f}\theta_{f}t - k_{f}\int_{0}^{t}{\bar{g}^{f}_{u}\,du} - \rho_{s\hspace{-.7pt}f}\xi_{f}\int_{0}^{t}{\bar{\sigma}\big(u,\bar{S}_{u}\big)\sqrt{\bar{v}_{u}\bar{g}^{f}_{u}}\,du} + \xi_{f}\int_{0}^{t}{\sqrt{\bar{g}^{f}_{u}}\,dW^{f}_{u}}.
\end{equation}
Since $\hat{g}^{f}_{t}\leq|\tilde{g}^{f}_{t}|$, following the previous argument, we deduce that
\begin{align}\label{eq3.23.2}
\E\bigg[\sup_{s\in[0,t]}\big(\hat{g}^{f}_{s}\big)^{p}\bigg] &\leq 4^{p-1}\big(g^{f}_{0} + k_{f}\theta_{f}T\big)^{p} + 2^{2p-3}\xi_{f}^{p}C_{p} + 2^{p-2}|\rho_{s\hspace{-.7pt}f}|^{p}\xi_{f}^{p}\sigma_{max}^{p}T^{p}\sup_{u\in[0,T]}\E\big[\bar{v}_{u}^{p}\big] \nonumber\\[0pt]
&\hspace{-1em}+ \Big(2^{p-2}\big(2k_{f}+|\rho_{s\hspace{-.7pt}f}|\xi_{f}\sigma_{max}\big)^{p}T^{p-1} + 2^{2p-3}\xi_{f}^{p}C_{p}T^{p-1}\Big)\int_{0}^{t}{\E\Big[\big(\bar{g}^{f}_{u}\big)^{p}\Big]du}.
\end{align}
Since $\sup_{u\in[0,T]}\E\big[\bar{v}_{u}^{p}\big]\leq\sup_{u\in[0,T]}\E\big[\hat{v}_{u}^{p}\big]$ and $\bar{g}^{f}_{u}\leq\sup_{s\in[0,u]}\hat{g}^{f}_{s}$, using Gronwall's inequality, we get
\begin{align}\label{eq3.23.3}
\E\bigg[\sup_{t\in[0,T]}\big(\hat{g}^{f}_{t}\big)^{p}\bigg] &\leq \Big(4^{p-1}\big(g^{f}_{0} + k_{f}\theta_{f}T\big)^{p} + 2^{2p-3}\xi_{f}^{p}C_{p} + 2^{p-2}|\rho_{s\hspace{-.7pt}f}|^{p}\xi_{f}^{p}\sigma_{max}^{p}T^{p}\sup_{u\in[0,T]}\E\big[\hat{v}_{u}^{p}\big]\Big) \nonumber\\[0pt]
&\times \exp\left\{2^{p-2}\big(2k_{f}+|\rho_{s\hspace{-.7pt}f}|\xi_{f}\sigma_{max}\big)^{p}T^{p} + 2^{2p-3}\xi_{f}^{p}C_{p}T^{p}\right\}.
\end{align}
The conclusion follows from Proposition \ref{Prop3.3.3}.
\qquad
\end{proof}

Next, we establish the strong mean square convergence of the discretized variance and domestic short rate processes. Unlike in \cite{Lord:2010}, where the focus was on the continuous-time approximation $\hat{y}$, we are rather interested in the behaviour of $\bar{y}$ in the limit as the time step goes to zero. Note that the convergence of $\bar{g}^{d}$ extends naturally to $\bar{r}^{d}$.

\begin{proposition}\label{Prop3.3.5} The full truncation scheme in \eqref{eq2.8b} for the square-root process converges strongly in the $L^{2}$ sense, i.e.,
\begin{equation}\label{eq3.24}
\plim_{\delta t \to 0} \sup_{t \in \left[0,T\right]} \E \big[|y_{t}-\bar{y}_{t}|^{2}\big] = 0.
\end{equation}
\end{proposition}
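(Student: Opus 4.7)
The plan is to split the error into a piece that the literature has already handled and a piece that measures how far the piecewise-constant scheme $\bar{y}$ drifts from its continuous-time interpolant $\hat{y}$. Writing
\begin{equation*}
\E\big[|y_{t} - \bar{y}_{t}|^{2}\big] \leq 2\hspace{1pt}\E\big[|y_{t} - \hat{y}_{t}|^{2}\big] + 2\hspace{1pt}\E\big[|\hat{y}_{t} - \bar{y}_{t}|^{2}\big],
\end{equation*}
I would recall that Lord et al.~\cite{Lord:2010} already establish $\sup_{t\in[0,T]}\E[|y_{t}-\hat{y}_{t}|^{2}]\to 0$ as $\delta t \to 0$ for the continuous-time FTE interpolant. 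The authors flag this decomposition explicitly just before the statement, so the real work is to bound the second term uniformly in $t$.

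For that second term, fix $t \in [t_{n},t_{n+1})$ and use the non-expansiveness of $x \mapsto x^{+}$ together with \eqref{eq2.7} to obtain
\begin{equation*}
|\hat{y}_{t} - \bar{y}_{t}| = \big|\tilde{y}_{t}^{+} - \tilde{y}_{t_{n}}^{+}\big| \leq \big|\tilde{y}_{t} - \tilde{y}_{t_{n}}\big| = \Big|k_{y}(\theta_{y} - \bar{y}_{t_{n}})(t-t_{n}) + \xi_{y}\sqrt{\bar{y}_{t_{n}}}\,\big(W^{y}_{t} - W^{y}_{t_{n}}\big)\Big|.
\end{equation*}
Squaring, taking expectations, and exploiting that $\bar{y}_{t_{n}}$ is $\mathcal{G}^{y}_{t_{n}}$-measurable while the Brownian increment $W^{y}_{t}-W^{y}_{t_{n}}$ is independent of this $\sigma$-algebra with variance $t-t_{n}\leq\delta t$, one finds
\begin{equation*}
\E\big[|\hat{y}_{t} - \bar{y}_{t}|^{2}\big] \leq 2k_{y}^{2}(\delta t)^{2}\hspace{1pt}\E\big[(\theta_{y} - \bar{y}_{t_{n}})^{2}\big] + 2\xi_{y}^{2}\hspace{1pt}\delta t\;\E\big[\bar{y}_{t_{n}}\big].
\end{equation*}
The uniform moment bounds in Proposition \ref{Prop3.3.3} (applied to $\hat{y}_{t_{n}}\geq\bar{y}_{t_{n}}$) show that the expectations on the right-hand side are bounded by a constant independent of $\delta t$ and of $n\in\{0,\ldots,N\}$, so the whole expression is $O(\delta t)$ uniformly in $t\in[0,T]$, and therefore vanishes as $\delta t \to 0$.

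Combining the two estimates via the triangle inequality gives \eqref{eq3.24}. The main subtlety is that a direct Lipschitz/Gronwall attack on $y_{t}-\bar{y}_{t}$ fails because $\sqrt{\cdot}$ is not Lipschitz at the origin; the Lord et al.\ argument already circumvents this difficulty for $\hat{y}$ by virtue of the truncation, and the additional step needed here is only a one-step $L^{2}$ continuity estimate controlled by the moment bounds already in place.
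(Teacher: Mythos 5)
Your decomposition $\E[|y_{t}-\bar{y}_{t}|^{2}] \leq 2\E[|y_{t}-\hat{y}_{t}|^{2}] + 2\E[|\hat{y}_{t}-\bar{y}_{t}|^{2}]$ is exactly the paper's starting point (its (3.25.4)), and your one-step estimate for $\E[|\hat{y}_{t}-\bar{y}_{t}|^{2}]=O(\delta t)$ is correct and matches the content of Lemma~A.3 in \cite{Lord:2010}, which the paper cites. However, there is a genuine gap in how you dispose of the first term: you claim that \cite{Lord:2010} already establish $\sup_{t\in[0,T]}\E[|y_{t}-\hat{y}_{t}|^{2}]\to 0$, but their Theorem~4.2 gives only strong \emph{$L^{1}$} convergence, $\sup_{t}\E[|y_{t}-\hat{y}_{t}|]\to 0$. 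If the $L^{2}$ statement were already in the literature, the paper's own proof would not need to do anything beyond your two lines; instead it spends the bulk of the argument upgrading $L^{1}$ to $L^{2}$.

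The upgrade is nontrivial because the diffusion coefficient $\sqrt{\cdot}$ is not Lipschitz near the origin, which is precisely the obstruction you flag at the end of your proposal. The paper handles it with a Yamada--Watanabe-type reduction: starting from $|y_{t}-\hat{y}_{t}|\leq|y_{t}-\tilde{y}_{t}|$ and the SDE/integral representations, it applies Doob and uses $(\sqrt{a}-\sqrt{b})^{2}\leq|a-b|$ to convert the $L^{2}$ norm of the stochastic-integral increment into an $L^{1}$ bound on $|y-\bar{y}|$, then closes a Gronwall inequality to obtain (3.25.3), which expresses $\E[\sup_{t}|y_{t}-\hat{y}_{t}|^{2}]$ in terms of $\sup_{t}\E[|\hat{y}_{t}-\bar{y}_{t}|^{2}]$, $\sup_{t}\E[|\hat{y}_{t}-\bar{y}_{t}|]$ and $\sup_{t}\E[|y_{t}-\hat{y}_{t}|]$ --- all of which go to zero by Lemma~A.3 and Theorem~4.2 of \cite{Lord:2010}. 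An alternative fix you could use is a uniform-integrability argument: combine the $L^{1}$ convergence from \cite{Lord:2010} with the uniform-in-$\delta t$ $L^{p}$ moment bounds of Proposition~\ref{Prop3.3.3} (for some $p>2$) to conclude $L^{2}$ convergence. Either way, the step from $L^{1}$ to $L^{2}$ must be supplied; as written, the proposal treats it as already done.
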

\begin{proof}
First, fix $t\in[0,T]$. Since $|y_{t}-\hat{y}_{t}| \leq |y_{t}-\tilde{y}_{t}|$, using Cauchy's inequality, we get
\begin{equation}\label{eq3.25.1}
\sup_{s\in\left[0,t\right]}|y_{s}-\hat{y}_{s}|^{2} \leq 2k_{y}^{2}t\int_{0}^{t}{(y_{u}-\bar{y}_{u})^{2}du} + 2\xi_{y}^{2}\sup_{s\in\left[0,t\right]}\bigg|\int_{0}^{s}{\big(\sqrt{y_{u}}-\sqrt{\bar{y}_{u}}\hspace{1pt}\big)dW_{u}^{y}}\bigg|^{2}.
\end{equation}
Taking expectations and then using Doob's martingale inequality and Fubini's theorem leads to
\begin{align}\label{eq3.25.2}
\E\bigg[\sup_{s\in\left[0,t\right]}|y_{s}-\hat{y}_{s}|^{2}\bigg] &\leq 4k_{y}^{2}T\int_{0}^{t}{\E\bigg[\sup_{s\in\left[0,u\right]}|y_{s}-\hat{y}_{s}|^{2}\bigg]du} + 4k_{y}^{2}T^{2}\sup_{u\in\left[0,T\right]}\E\big[|\hat{y}_{u}-\bar{y}_{u}|^{2}\big] \nonumber\\[3pt]
&+ 8\xi_{y}^{2}T\sup_{u\in\left[0,T\right]}\E\big[|y_{u}-\hat{y}_{u}|\big] + 8\xi_{y}^{2}T\sup_{u\in\left[0,T\right]}\E\big[|\hat{y}_{u}-\bar{y}_{u}|\big].
\end{align}
Applying Gronwall's inequality, we get
\begin{align}\label{eq3.25.3}
\E\bigg[\sup_{t\in\left[0,T\right]}|y_{t}-\hat{y}_{t}|^{2}\bigg] &\leq e^{4k_{y}^{2}T^{2}}\bigg(4k_{y}^{2}T^{2}\sup_{t\in\left[0,T\right]}\E\big[|\hat{y}_{t}-\bar{y}_{t}|^{2}\big] + 8\xi_{y}^{2}T\sup_{t\in\left[0,T\right]}\E\big[|\hat{y}_{t}-\bar{y}_{t}|\big] \nonumber\\[1pt]
&+ 8\xi_{y}^{2}T\sup_{t\in\left[0,T\right]}\E\big[|y_{t}-\hat{y}_{t}|\big]\bigg).
\end{align}
On the other hand, we know that
\begin{equation}\label{eq3.25.4}
\sup_{t\in\left[0,T\right]}\E\big[|y_{t}-\bar{y}_{t}|^{2}\big] \leq 2\sup_{t\in\left[0,T\right]}\E\left[|y_{t}-\hat{y}_{t}|^{2}\right] + 2\sup_{t\in\left[0,T\right]}\E\left[|\hat{y}_{t}-\bar{y}_{t}|^{2}\right].
\end{equation}
Substituting into \eqref{eq3.25.4} with the upper bound in \eqref{eq3.25.3}, we get
\begin{align}\label{eq3.25.5}
\sup_{t\in\left[0,T\right]}\E\big[|y_{t}-\bar{y}_{t}|^{2}\big] &\leq 2\Big(1+4k_{y}^{2}T^{2}e^{4k_{y}^{2}T^{2}}\Big)\sup_{t\in\left[0,T\right]}\E\big[|\hat{y}_{t}-\bar{y}_{t}|^{2}\big] + 16\xi_{y}^{2}Te^{4k_{y}^{2}T^{2}}\sup_{t\in\left[0,T\right]}\E\big[|\hat{y}_{t}-\bar{y}_{t}|\big] \nonumber\\[3pt]
&+ 16\xi_{y}^{2}Te^{4k_{y}^{2}T^{2}}\sup_{t\in\left[0,T\right]}\E\big[|y_{t}-\hat{y}_{t}|\big].
\end{align}
The convergence of the three terms on the right-hand side of \eqref{eq3.25.5} is a consequence of Lemma A.3 and Theorem 4.2 in \cite{Lord:2010}, which concludes the proof.
\qquad
\end{proof}

\subsection{The four-dimensional system}\label{subsec:system}

Even though weak convergence is important when estimating expectations of payoffs, strong convergence plays a crucial role in multilevel Monte Carlo methods and may be required for complex path-dependent derivatives. First, we prove the uniform mean square convergence of the stopped discretized spot FX rate process in \eqref{eq2.12}.

\begin{proposition}\label{Prop3.4.1} Let $L_{s}>S_{0}$, $L_{v}>v_{0}$, $L_{d}>g^{d}_{0}$, $L_{f}>g^{f}_{0}>\kappa^{-1}$, and define the stopping time
\begin{equation}\label{eq3.28}
\tau = \inf\!\big\{t\geq 0 :\,S_{t}\geq L_{s} \text{ or }\, v_{t}\geq L_{v} \text{ or }\, \hat{v}_{t}\geq L_{v} \text{ or }\, \hat{g}^{d}_{t}\geq L_{d} \text{ or }\, \hat{g}^{f}_{t}\geq L_{f} \text{ or }\, g^{f}_{t}\leq \kappa^{-1}\big\}.
\end{equation}
Under assumptions {\rm ($\mathcal{A}$1)} to {\rm ($\mathcal{A}$3)}, the stopped process converges uniformly in $L^{2}$, i.e.,
\begin{equation}\label{eq3.29}
\plim_{\delta t\to0}\E\bigg[\sup_{t\in[0,T]}\left|S_{t\wedge\tau}-\bar{S}_{t\wedge\tau}\right|^{2}\bigg] = 0.
\end{equation}
\end{proposition}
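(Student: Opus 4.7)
The plan is to express $S_{t\wedge\tau} - \bar S_{t\wedge\tau}$ as an It\^o integral from the SDEs for $S$ and $\bar S$, square it, apply the Cauchy--Schwarz inequality in time on the drift and Doob's and the BDG inequalities on the diffusion, and close via Gronwall's lemma. The key observation is that, although $\bar S$ is not part of the stopping condition, on $\{u\leq\tau\}$ the factors $S$, $\bar r^d - \bar r^f$, $\bar\sigma(\cdot,\bar S)$, and $\sqrt{\bar v}$ are all pathwise bounded (by $L_s$; by $L_d + L_f + 2h_{max}$, using $\bar g^{d,f}_t = \hat g^{d,f}_{t_n}$ and the stopping on the $\hat g^{d,f}$; by $\sigma_{max}$; and by $\sqrt{L_v}$, respectively). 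Provided the integrands are decomposed so that $\bar S$ never appears as an unbounded multiplicative factor, these pathwise bounds under $\tau$ will suffice.

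The central algebraic step is the telescoping identity
\begin{equation*}
(r^d_u - r^f_u)S_u - (\bar r^d_u - \bar r^f_u)\bar S_u = S_u\bigl[(g^d_u - \bar g^d_u) - (g^f_u - \bar g^f_u)\bigr] + (\bar r^d_u - \bar r^f_u)(S_u - \bar S_u),
\end{equation*}
and its analogue for the diffusion coefficient,
\begin{equation*}
\sigma(u,S_u)\sqrt{v_u}\,S_u - \bar\sigma(u,\bar S_u)\sqrt{\bar v_u}\,\bar S_u = S_u\bigl[\sigma(u,S_u)\sqrt{v_u} - \bar\sigma(u,\bar S_u)\sqrt{\bar v_u}\bigr] + \bar\sigma(u,\bar S_u)\sqrt{\bar v_u}\,(S_u - \bar S_u).
\end{equation*}
On $\{u\leq\tau\}$, each second summand is a bounded multiple of $S - \bar S$ and feeds directly into Gronwall, while each first summand is $S$ (bounded by $L_s$) multiplied either by a CIR discretisation error, or by $\sigma\sqrt{v} - \bar\sigma\sqrt{\bar v}$. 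I would split the latter as $\sigma(\sqrt v - \sqrt{\bar v}) + \sqrt{\bar v}(\sigma - \bar\sigma)$, using $|\sqrt v - \sqrt{\bar v}|^2 \leq |v - \bar v|$ together with Proposition \ref{Prop3.3.5} for the first piece, and, for $\sigma - \bar\sigma$, the chain $\sigma(u,S_u) - \sigma(t_n,S_{t_n}) + \sigma(t_n,S_{t_n}) - \sigma(t_n,\bar S_{t_n})$, which by Assumption ($\mathcal{A}$2) is dominated by $A(\delta t)^\alpha + B|S_u - S_{t_n}| + B|S_{t_n} - \bar S_{t_n}|$. The last contribution is a Gronwall term at grid points, while the time-regularity $\E\bigl[\Ind_{u\leq\tau}|S_u - S_{t_n}|^2\bigr] = \mathcal{O}(\delta t)$ follows from the SDE for $S$, the pathwise bounds on $S,\sigma,v$ under $\tau$, and the uniform second-moment bounds on $r^d - r^f$ from Propositions \ref{Prop3.3.3}--\ref{Prop3.3.4}.

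Setting $\phi(t) = \E\bigl[\sup_{s\leq t}(S_{s\wedge\tau} - \bar S_{s\wedge\tau})^2\bigr]$ and collecting the estimates above gives
\begin{equation*}
\phi(t) \leq C\!\int_0^t\!\phi(u)\,du + \varepsilon(\delta t),
\end{equation*}
with $\varepsilon(\delta t)\to 0$ as $\delta t\to 0$, whence Gronwall's lemma yields the claim. The main obstacle I anticipate is the careful bookkeeping of the many error terms and the verification that, in every decomposition term, the factor multiplying a vanishing increment is pathwise bounded on $\{u\leq\tau\}$; the telescoping identities above are precisely chosen so that this is the case. A preliminary step also required is the extension of Proposition \ref{Prop3.3.5} to the foreign short-rate approximation $\bar g^f$, whose SDE carries the quanto correction, which should proceed along the same lines using Proposition \ref{Prop3.3.4} and Assumption ($\mathcal{A}$1) to absorb the extra drift.
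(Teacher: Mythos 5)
Your skeleton matches the paper's: you decompose $S_{t\wedge\tau}-\bar S_{t\wedge\tau}$ via the integral representations, use the stopping time to obtain pathwise bounds on the multiplicative factors, apply Cauchy--Schwarz/Doob/It\^o isometry, split $\sigma\sqrt v - \bar\sigma\sqrt{\bar v}$, invoke ($\mathcal{A}$2) together with the $O(\delta t)$ time regularity of $S$ under $\tau$, and close with Gronwall. The telescoping identities and the handling of the $v$, $g^{d}$ discretization errors via Proposition \ref{Prop3.3.5} are essentially what the paper does.

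The genuine gap is your treatment of the foreign rate. You relegate it to a ``preliminary step also required'' in which Proposition \ref{Prop3.3.5} is extended to $\bar g^{f}$ ``along the same lines,'' absorbing the quanto drift via ($\mathcal{A}$1) and Proposition \ref{Prop3.3.4}. That cannot be done independently of the main estimate, for two reasons. First, the quanto drift contains $\bar\sigma\big(u,\bar S_{u}\big)\sqrt{\bar v_{u}\bar g^{f}_{u}}$, so the difference $g^{f}-\hat g^{f}$ generates, among other things, an integral of $\big|\sigma(u,S_{u})-\bar\sigma\big(u,\bar S_{u}\big)\big|^{2}$, which by ($\mathcal{A}$2) feeds back $|S_{u}-\bar S_{u}|^{2}$. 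The $g^{f}$ error therefore cannot be settled \emph{before} the $S$ Gronwall; it must be coupled with it (the paper runs a nested Gronwall: first for $\E\big[\sup_{t}|g^{f}_{t\wedge\tau}-\hat g^{f}_{t\wedge\tau}|^{2}\big]$, picking up a term in $\E\big[\int_0^{s\wedge\tau}|\sigma-\bar\sigma|^{2}du\big]$, then substituted back into the main inequality). Second, an $L^{2}$ Gronwall for $g^{f}-\hat g^{f}$ produces the diffusion term $\E\big[\big(\sqrt{g^{f}_{u}}-\sqrt{\hat g^{f}_{u}}\big)^{2}\big]$, which is only dominated by $\kappa\,\E\big[\big(g^{f}_{u}-\hat g^{f}_{u}\big)^{2}\big]$ because the stopping time $\tau$ enforces the lower bound $g^{f}_{u\wedge\tau}\geq\kappa^{-1}$; the H\"older bound $|\sqrt a-\sqrt b|^{2}\leq|a-b|$ used inside Lord et al.'s Lemma A.3 (the engine of Proposition \ref{Prop3.3.5}) gives only $L^{1}$ control and would not close an $L^{2}$ loop. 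Your proposal never mentions the $\kappa^{-1}$ condition in the definition of $\tau$, which is precisely what makes the $g^{f}$ part tractable; without it the argument as you've written it does not close.
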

\begin{proof}
The absolute difference between the original and the discretized stopped processes can be bounded from above as follows,
\begin{align*}
\big|S_{t\wedge\tau}-\bar{S}_{t\wedge\tau}\big| &\leq \bigg|\int_{0}^{t\wedge\tau}{\hspace{-1pt}h(u)\big(S_{u}-\bar{S}_{u}\big)du}\,\bigg| + \bigg|\int_{0}^{t\wedge\tau}{\hspace{-1pt}\big(g^{d}_{u}-\bar{g}^{d}_{u}\big)S_{u}\,du}\,\bigg| + \bigg|\int_{0}^{t\wedge\tau}{\hspace{-1pt}\bar{g}^{d}_{u}\big(S_{u}-\bar{S}_{u}\big)du}\,\bigg| \\[2pt]
&\hspace{-2.5em} + \bigg|\int_{0}^{t\wedge\tau}{\hspace{-1pt}\big(g^{f}_{u}-\bar{g}^{f}_{u}\big)S_{u}\,du}\,\bigg|
+ \bigg|\int_{0}^{t\wedge\tau}{\hspace{-1pt}\bar{g}^{f}_{u}\big(S_{u}-\bar{S}_{u}\big)du}\,\bigg| + \bigg|\int_{0}^{t\wedge\tau}{\hspace{-1pt}\bar{\sigma}\big(u,\bar{S}_{u}\big)\sqrt{\bar{v}_{u}}\,\big(S_{u}-\bar{S}_{u}\big)dW_{u}^{s}}\,\bigg| \\[2pt]
&\hspace{-2.5em} + \bigg|\int_{0}^{t\wedge\tau}{\hspace{-1pt}\Big(\sigma\big(u,S_{u}\big)-\bar{\sigma}\big(u,\bar{S}_{u}\big)\Big)\sqrt{\bar{v}_{u}}\,S_{u}\,dW_{u}^{s}}\,\bigg| + \bigg|\int_{0}^{t\wedge\tau}{\hspace{-1pt}\sigma\big(u,S_{u}\big)\Big(\sqrt{v_{u}}-\sqrt{\bar{v}_{u}}\Big)S_{u}\,dW_{u}^{s}}\,\bigg|.
\end{align*}
Squaring both sides, taking the supremum over all $t \in [0,s]$, where $0 \leq s \leq T$, and then employing the Cauchy--Schwarz inequality leads to the upper bound
\begin{align}\label{eq3.29.1}
\sup_{t\in[0,s]}\big|S_{t\wedge\tau}-\bar{S}_{t\wedge\tau}\big|^{2} &\leq 8\big(L_{d}^{2}+L_{f}^{2}+4h_{max}^{2}\big)T\int_{0}^{s\wedge\tau}{\hspace{-1pt}\big(S_{u}-\bar{S}_{u}\big)^{2}du} + 8L_{s}^{2}T\int_{0}^{T}{\hspace{-1pt}\big(g^{d}_{u}-\bar{g}^{d}_{u}\big)^{2} du} \nonumber\\[-1pt]
&\hspace{-1em} + 8\sup_{t \in [0,s]}\bigg|\int_{0}^{t\wedge\tau}\hspace{-1pt}\sigma\big(u,S_{u}\big)\Big(\sqrt{v_{u}}-\sqrt{\bar{v}_{u}}\Big)S_{u}\,dW_{u}^{s}\,\bigg|^{2} + 16L_{s}^{2}T\int_{0}^{T}{\hspace{-1pt}\big(\hat{g}^{f}_{u}-\bar{g}^{f}_{u}\big)^{2} du} \nonumber\\[0pt]
&\hspace{-1em} + 8\sup_{t \in [0,s]}\bigg|\int_{0}^{t\wedge\tau}{\hspace{-1pt}\bar{\sigma}\big(u,\bar{S}_{u}\big)\sqrt{\bar{v}_{u}}\,\big(S_{u}-\bar{S}_{u}\big)dW_{u}^{s}}\,\bigg|^{2} + 16L_{s}^{2}T\int_{0}^{s\wedge\tau}{\hspace{-1pt}\big(g^{f}_{u}-\hat{g}^{f}_{u}\big)^{2} du} \nonumber\\[0pt]
&\hspace{-1em} + 8\sup_{t \in [0,s]}\bigg|\int_{0}^{t\wedge\tau}{\hspace{-1pt}\Big(\sigma\big(u,S_{u}\big)-\bar{\sigma}\big(u,\bar{S}_{u}\big)\Big)\sqrt{\bar{v}_{u}}\,S_{u}\,dW_{u}^{s}}\,\bigg|^{2}.
\end{align}
Note that
\begin{equation}\label{eq3.29.2}
\int_{0}^{s\wedge\tau}{\hspace{-1pt}\big(g^{f}_{u}-\hat{g}^{f}_{u}\big)^{2} du} \leq \int_{0}^{s}{\hspace{-1pt}\big(g^{f}_{u\wedge\tau}-\hat{g}^{f}_{u\wedge\tau}\big)^{2} du} \leq T\sup_{t\in[0,s]}\big|g^{f}_{t\wedge\tau}-\hat{g}^{f}_{t\wedge\tau}\big|^{2}.
\end{equation}
Hence, taking expectations in \eqref{eq3.29.1}, using Fubini's theorem, Doob's martingale inequality and the It\^o isometry, and upon noticing that a stopped martingale is also a martingale, we obtain
\begin{align}\label{eq3.30}
\E\bigg[\sup_{t\in[0,s]}\big|S_{t\wedge\tau}-\bar{S}_{t\wedge\tau}\big|^{2}\bigg] &\leq 8L_{s}^{2}T^{2}\sup_{t\in\left[0,T\right]}\E\left[\big|g^{d}_{t}-\bar{g}^{d}_{t}\big|^{2}\right] + 32\sigma_{max}^{2}L_{s}^{2}T\sup_{t\in\left[0,T\right]}\E\Big[\big|v_{t}-\bar{v}_{t}\big|\Big] \nonumber \\[1pt]
&\hspace{-8.5em} + 16L_{s}^{2}T^{2}\sup_{t\in\left[0,T\right]}\E\left[\big|\hat{g}^{f}_{t}-\bar{g}^{f}_{t}\big|^{2}\right] + \Big[8\big(L_{d}^{2}+L_{f}^{2}+4h_{max}^{2}\big)T+32\sigma_{max}^{2}L_{v}\Big]\E\bigg[\int_{0}^{s\wedge\tau}{\hspace{-3pt}\big|S_{u}-\bar{S}_{u}\big|^{2}du}\bigg] \nonumber \\[2pt]
&\hspace{-8.5em} + 32L_{s}^{2}L_{v}\E\bigg[\int_{0}^{s\wedge\tau}{\big|\sigma\big(u,S_{u}\big)-\bar{\sigma}\big(u,\bar{S}_{u}\big)\big|^{2}du}\bigg] + 16L_{s}^{2}T^{2}\E\bigg[\sup_{t\in[0,s]}\big|g^{f}_{t\wedge\tau}-\hat{g}^{f}_{t\wedge\tau}\big|^{2}\bigg].
\end{align}
First, we bound the last expectation on the right-hand side of \eqref{eq3.30} from above. From \eqref{eq3.22.1} and \eqref{eq3.23.1}, since $|g^{f}_{t\wedge\tau}-\hat{g}^{f}_{t\wedge\tau}|\leq|g^{f}_{t\wedge\tau}-\tilde{g}^{f}_{t\wedge\tau}|$, we get
\begin{align}\label{eq3.30.1}
\big|g^{f}_{t\wedge\tau}-\hat{g}^{f}_{t\wedge\tau}\big| &\leq \bigg|-k_{f}\int_{0}^{t\wedge\tau}{\hspace{-1pt}\big(g^{f}_{u}-\hat{g}^{f}_{u}\big)du} - k_{f}\int_{0}^{t\wedge\tau}{\hspace{-1pt}\big(\hat{g}^{f}_{u}-\bar{g}^{f}_{u}\big)du} + \xi_{f}\int_{0}^{t\wedge\tau}{\hspace{-1pt}\Big(\sqrt{g^{f}_{u}}-\sqrt{\hat{g}^{f}_{u}}\hspace{1.5pt}\Big)dW^{f}_{u}} \nonumber\\[2pt]
&\hspace{-1em}+ \xi_{f}\int_{0}^{t\wedge\tau}{\hspace{-1pt}\Big(\sqrt{\hat{g}^{f}_{u}}-\sqrt{\bar{g}^{f}_{u}}\hspace{1.5pt}\Big)dW^{f}_{u}} - \rho_{s\hspace{-.7pt}f}\xi_{f}\int_{0}^{t\wedge\tau}{\hspace{-1pt}\sigma\big(u,S_{u}\big)\sqrt{v_{u}}\hspace{1pt}\Big(\sqrt{g^{f}_{u}}-\sqrt{\hat{g}^{f}_{u}}\hspace{1.5pt}\Big)du} \nonumber\\[2pt]
&\hspace{-1em}- \rho_{s\hspace{-.7pt}f}\xi_{f}\int_{0}^{t\wedge\tau}{\hspace{-1pt}\sigma\big(u,S_{u}\big)\sqrt{v_{u}}\hspace{1pt}\Big(\sqrt{\hat{g}^{f}_{u}}-\sqrt{\bar{g}^{f}_{u}}\hspace{1.5pt}\Big)du} - \rho_{s\hspace{-.7pt}f}\xi_{f}\int_{0}^{t\wedge\tau}{\hspace{-1pt}\sigma\big(u,S_{u}\big)\sqrt{\bar{g}^{f}_{u}}\hspace{1pt}\Big(\sqrt{v_{u}}-\sqrt{\bar{v}_{u}}\hspace{1.5pt}\Big)du}\hspace{1pt} \nonumber\\[2pt]
&\hspace{-1em}- \rho_{s\hspace{-.7pt}f}\xi_{f}\int_{0}^{t\wedge\tau}{\hspace{-1pt}\sqrt{\bar{v}_{u}\bar{g}^{f}_{u}}\hspace{1pt}\Big(\sigma\big(u,S_{u}\big)-\bar{\sigma}\big(u,\bar{S}_{u}\big)\Big)du}\bigg|.
\end{align}
Squaring both sides, taking the supremum over all $t\in[0,s']$, where $0\leq s'\leq s$, and employing the Cauchy--Schwarz inequality, then taking expectations and using Doob's martingale inequality and Fubini's theorem, we deduce that
\begin{align*}
\E\bigg[\sup_{t\in[0,s']}\big|g^{f}_{t\wedge\tau}-\hat{g}^{f}_{t\wedge\tau}\big|^{2}\bigg] &\leq 8k_{f}^{2}T\int_{0}^{s'}{\E\Big[\big(g^{f}_{u}-\hat{g}^{f}_{u}\big)^{2}\Ind_{u<\hspace{1pt}\tau}\hspace{-1pt}\Big]du} + 8k_{f}^{2}T\int_{0}^{T}{\E\Big[\big(\hat{g}^{f}_{u}-\bar{g}^{f}_{u}\big)^{2}\Big]du} \\[1pt]
&\hspace{-7.5em}+ 32\xi_{f}^{2}\int_{0}^{s'}{\E\bigg[\Big(\sqrt{g^{f}_{u}}-\sqrt{\hat{g}^{f}_{u}}\hspace{1.5pt}\Big)^{2}\Ind_{u<\hspace{1pt}\tau}\hspace{-1pt}\bigg]du} + 32\xi_{f}^{2}\int_{0}^{T}{\E\Big[\big|\hat{g}^{f}_{u}-\bar{g}^{f}_{u}\big|\Big]du} \\[2pt]
&\hspace{-7.5em}+ 8\rho_{s\hspace{-.7pt}f}^{2}\xi_{f}^{2}\sigma_{max}^{2}L_{v}T\int_{0}^{s'}{\E\bigg[\Big(\sqrt{g^{f}_{u}}-\sqrt{\hat{g}^{f}_{u}}\hspace{1.5pt}\Big)^{2}\Ind_{u<\hspace{1pt}\tau}\hspace{-1pt}\bigg]du} + 8\rho_{s\hspace{-.7pt}f}^{2}\xi_{f}^{2}\sigma_{max}^{2}L_{v}T\int_{0}^{T}{\E\Big[\big|\hat{g}^{f}_{u}-\bar{g}^{f}_{u}\big|\hspace{-1pt}\Big]du} \\[2pt]
&\hspace{-7.5em}+ 8\rho_{s\hspace{-.7pt}f}^{2}\xi_{f}^{2}\sigma_{max}^{2}L_{f}T\int_{0}^{T}{\E\Big[\big|v_{u}-\bar{v}_{u}\big|\Big]du} + 8\rho_{s\hspace{-.7pt}f}^{2}\xi_{f}^{2}L_{v}L_{f}T\E\bigg[\int_{0}^{s'\hspace{-1pt}\wedge\tau}{\hspace{-1pt}\big|\sigma\big(u,S_{u}\big)-\bar{\sigma}\big(u,\bar{S}_{u}\big)\big|^{2}du}\bigg].
\end{align*}
However,
\begin{equation}\label{eq3.30.2}
\Big(\sqrt{g^{f}_{u}}-\sqrt{\hat{g}^{f}_{u}}\hspace{1.5pt}\Big)^{2}\Ind_{u<\hspace{1pt}\tau} \leq \Big(\sqrt{g^{f}_{u\wedge\tau}}-\sqrt{\hat{g}^{f}_{u\wedge\tau}}\hspace{1.5pt}\Big)^{2} \leq \kappa\big(g^{f}_{u\wedge\tau}-\hat{g}^{f}_{u\wedge\tau}\big)^{2},
\end{equation}
and hence, since $s'\leq s$,
\begin{align}\label{eq3.30.3}
\E\bigg[\sup_{t\in[0,s']}\big|g^{f}_{t\wedge\tau}-\hat{g}^{f}_{t\wedge\tau}\big|^{2}\bigg] &\leq \big(8k_{f}^{2}T+32\xi_{f}^{2}\kappa+8\rho_{s\hspace{-.7pt}f}^{2}\xi_{f}^{2}\sigma_{max}^{2}L_{v}\kappa T\big)\int_{0}^{s'}{\E\bigg[\sup_{t\in[0,u]}\big|g^{f}_{t\wedge\tau}-\hat{g}^{f}_{t\wedge\tau}\big|^{2}\bigg]du} \nonumber\\[1pt]
&\hspace{-7.5em}+ 8\rho_{s\hspace{-.7pt}f}^{2}\xi_{f}^{2}L_{v}L_{f}T\E\bigg[\int_{0}^{s\wedge\tau}{\hspace{-1pt}\big|\sigma\big(u,S_{u}\big)-\bar{\sigma}\big(u,\bar{S}_{u}\big)\big|^{2}du}\bigg] + 8\rho_{s\hspace{-.7pt}f}^{2}\xi_{f}^{2}\sigma_{max}^{2}L_{f}T^{2}\sup_{t\in[0,T]}\E\Big[\big|v_{t}-\bar{v}_{t}\big|\Big] \nonumber\\[3pt]
&\hspace{-7.5em}+ \big(32\xi_{f}^{2}T+8\rho_{s\hspace{-.7pt}f}^{2}\xi_{f}^{2}\sigma_{max}^{2}L_{v}T^{2}\big)\sup_{t\in[0,T]}\E\Big[\big|\hat{g}^{f}_{t}-\bar{g}^{f}_{t}\big|\Big] + 8k_{f}^{2}T^{2}\sup_{t\in[0,T]}\E\Big[\big(\hat{g}^{f}_{t}-\bar{g}^{f}_{t}\big)^{2}\Big].
\end{align}
Therefore, applying Gronwall's inequality to \eqref{eq3.30.3}, since $s\leq T$, we get
\begin{align}\label{eq3.30.4}
\E\bigg[\sup_{t\in[0,s]}\big|g^{f}_{t\wedge\tau}-\hat{g}^{f}_{t\wedge\tau}\big|^{2}\bigg] &\leq e^{\beta_{1}T}\bigg\{8\rho_{s\hspace{-.7pt}f}^{2}\xi_{f}^{2}L_{v}L_{f}T\E\bigg[\int_{0}^{s\wedge\tau}{\hspace{-1pt}\big|\sigma\big(u,S_{u}\big)-\bar{\sigma}\big(u,\bar{S}_{u}\big)\big|^{2}du}\bigg] \nonumber\\[3pt]
&\hspace{-2.5em}+ 8\rho_{s\hspace{-.7pt}f}^{2}\xi_{f}^{2}\sigma_{max}^{2}L_{f}T^{2}\sup_{t\in[0,T]}\E\Big[\big|v_{t}-\bar{v}_{t}\big|\Big] + 8k_{f}^{2}T^{2}\sup_{t\in[0,T]}\E\Big[\big(\hat{g}^{f}_{t}-\bar{g}^{f}_{t}\big)^{2}\Big] \nonumber\\[1pt]
&\hspace{-2.5em}+ \big(32\xi_{f}^{2}T+8\rho_{s\hspace{-.7pt}f}^{2}\xi_{f}^{2}\sigma_{max}^{2}L_{v}T^{2}\big)\sup_{t\in[0,T]}\E\Big[\big|\hat{g}^{f}_{t}-\bar{g}^{f}_{t}\big|\Big]\bigg\},
\end{align}
where
\begin{equation*}
\beta_{1} = 8k_{f}^{2}T+32\xi_{f}^{2}\kappa+8\rho_{s\hspace{-.7pt}f}^{2}\xi_{f}^{2}\sigma_{max}^{2}L_{v}\kappa T.
\end{equation*}
Second, assumption {\rm ($\mathcal{A}$2)} on the leverage function ensures that
\begin{align}\label{eq3.30.5}
\E\bigg[\int_{0}^{s\wedge\tau}{\hspace{-1pt}\big|\sigma\big(u,S_{u}\big)-\bar{\sigma}\big(u,\bar{S}_{u}\big)\big|^{2}du}\bigg] &\leq 3A^{2}T(\delta t)^{2\alpha}+ 3B^{2}\E\bigg[\int_{0}^{T}{\hspace{-1pt}\big|S_{t}-S_{\bar{t}}\big|^{2}\Ind_{t<\hspace{1pt}\tau}dt}\bigg] \nonumber\\[2pt]
&\hspace{0em} + 3B^{2}\E\bigg[\int_{0}^{s\wedge\tau}{\hspace{-1pt}\sup_{t \in [0,u]}\big|S_{t}-\bar{S}_{t}\big|^{2}du}\bigg],
\end{align}
where $\bar{t}=\delta t\floor*{\frac{t}{\delta t}}$. Furthermore, we know that
\begin{equation}\label{eq3.30.6}
\int_{0}^{s\wedge\tau}{\hspace{-1pt}\big|S_{u}-\bar{S}_{u}\big|^{2}du} \leq \int_{0}^{s\wedge\tau}{\hspace{-1pt}\sup_{t \in [0,u]}\left|S_{t}-\bar{S}_{t}\right|^{2}du} \leq \int_{0}^{s}{\sup_{t \in [0,u]}\left|S_{t\wedge\tau}-\bar{S}_{t\wedge\tau}\right|^{2}du}.
\end{equation}
Substituting back into \eqref{eq3.30} with \eqref{eq3.30.4} -- \eqref{eq3.30.6}, we deduce that
\begin{align}\label{eq3.30.7}
\E\bigg[\sup_{t\in[0,s]}\big|S_{t\wedge\tau}-\bar{S}_{t\wedge\tau}\big|^{2}\bigg] &\leq \beta_{2}\int_{0}^{s}{\E\bigg[\sup_{t\in[0,u]}\big|S_{t\wedge\tau}-\bar{S}_{t\wedge\tau}\big|^{2}\bigg]du} + \beta_{3}(\delta t)^{2\alpha} + \beta_{4}\sup_{t\in\left[0,T\right]}\E\Big[\big|v_{t}-\bar{v}_{t}\big|\Big] \nonumber\\[3pt]
&\hspace{-1em}+ \beta_{5}\sup_{t\in\left[0,T\right]}\E\left[\big|g^{d}_{t}-\bar{g}^{d}_{t}\big|^{2}\right] + \beta_{6}\sup_{t\in[0,T]}\E\Big[\big|\hat{g}^{f}_{t}-\bar{g}^{f}_{t}\big|\Big] + \beta_{7}\sup_{t\in\left[0,T\right]}\E\left[\big|\hat{g}^{f}_{t}-\bar{g}^{f}_{t}\big|^{2}\right] \nonumber\\[1pt]
&\hspace{-1em}+ \beta_{8}\int_{0}^{T}{\hspace{-1pt}\E\Big[\big|S_{t}-S_{\bar{t}}\big|^{2}\Ind_{t<\hspace{1pt}\tau}\Big]dt},
\end{align}
where the constants $\beta_{i}$, for $2\leq i\leq8$, are defined below.
\begin{gather*}
\beta_{2} = 32\sigma_{max}^{2}L_{v} + 8\big(L_{d}^{2}+L_{f}^{2}+4h_{max}^{2}\big)T + 96B^{2}L_{s}^{2}L_{v}\big(1+4\rho_{s\hspace{-.7pt}f}^{2}\xi_{f}^{2}L_{f}T^{3}e^{\beta_{1}T}\big), \nonumber\\[4pt]
\beta_{3} = 96A^{2}L_{s}^{2}L_{v}\big(1+4\rho_{s\hspace{-.7pt}f}^{2}\xi_{f}^{2}L_{f}T^{3}e^{\beta_{1}T}\big)T,
\hspace{4pt}\beta_{4} = 32\sigma_{max}^{2}L_{s}^{2}T\big(1+4\rho_{s\hspace{-.7pt}f}^{2}\xi_{f}^{2}L_{f}T^{3}e^{\beta_{1}T}\big), \nonumber\\[4pt]
\beta_{5} = 8L_{s}^{2}T^{2},
\hspace{4pt}\beta_{6} = 128\xi_{f}^{2}L_{s}^{2}\big(4+\rho_{s\hspace{-.7pt}f}^{2}\sigma_{max}^{2}L_{v}T\big)T^{3}e^{\beta_{1}T},
\hspace{4pt}\beta_{7} = 16L_{s}^{2}T^{2}\big(1+8k_{f}^{2}T^{2}e^{\beta_{1}T}\big), \nonumber\\[4pt]
\beta_{8} = 96B^{2}L_{s}^{2}L_{v}\big(1+4\rho_{s\hspace{-.7pt}f}^{2}\xi_{f}^{2}L_{f}T^{3}e^{\beta_{1}T}\big).
\end{gather*}
Therefore, applying Gronwall's inequality to \eqref{eq3.30.7}, we obtain
\begin{align}\label{eq3.30.8}
\E\bigg[\sup_{t\in[0,T]}\big|S_{t\wedge\tau}-\bar{S}_{t\wedge\tau}\big|^{2}\bigg] &\leq e^{\beta_{2}T}\bigg\{\beta_{3}(\delta t)^{2\alpha} + \beta_{4}\sup_{t\in\left[0,T\right]}\E\Big[\big|v_{t}-\bar{v}_{t}\big|\Big] + \beta_{5}\sup_{t\in\left[0,T\right]}\E\left[\big|g^{d}_{t}-\bar{g}^{d}_{t}\big|^{2}\right] \nonumber\\[0pt]
&\hspace{-7.5em}+ \beta_{6}\sup_{t\in[0,T]}\E\Big[\big|\hat{g}^{f}_{t}-\bar{g}^{f}_{t}\big|\Big] + \beta_{7}\sup_{t\in\left[0,T\right]}\E\left[\big|\hat{g}^{f}_{t}-\bar{g}^{f}_{t}\big|^{2}\right] + \beta_{8}\int_{0}^{T}{\hspace{-1pt}\E\Big[\big|S_{t}-S_{\bar{t}}\big|^{2}\Ind_{t<\hspace{1pt}\tau}\Big]dt}\bigg\}.
\end{align}
The convergence of the first term on the right-hand side as $\delta t\!\to\!0$ is trivial, whereas the convergence of the next two terms is a consequence of Proposition \ref{Prop3.3.5}. We now show the convergence of the $L^{2}$ difference between $\hat{g}^{f}$ and $\bar{g}^{f}$.

Suppose that $t\in[t_{n},t_{n+1})$. Since $|\hat{g}^{f}_{t}-\bar{g}^{f}_{t}|\leq|\tilde{g}^{f}_{t}-\tilde{g}^{f}_{t_{n}}|$ and using the fact that $2\sqrt{|ab|}\leq |a|+|b|$, we can bound the $L^{2}$ difference from above as follows:
\begin{align*}
\big|\hat{g}^{f}_{t}-\bar{g}^{f}_{t}\big|^{2} &\leq
\Big(k_{f}\theta_{f}\delta t + 0.5|\rho_{s\hspace{-.7pt}f}|\xi_{f}\sigma_{max}\delta t\hspace{1pt}\hat{v}_{t_{n}} + \big(k_{f}+ 0.5|\rho_{s\hspace{-.7pt}f}|\xi_{f}\sigma_{max}\big)\delta t\hspace{1pt}\hat{g}_{t_{n}}^{f} +\xi_{f}\sqrt{\hat{g}_{t_{n}}^{f}}\hspace{1pt}\big|W^{f}_{t}-W^{f}_{t_{n}}\big|\Big)^{\hspace{-1pt}2} \\[4pt]
&\hspace{-2.5em}\leq 4k_{f}^{2}\theta_{f}^{2}(\delta t)^{2} + \rho_{s\hspace{-.7pt}f}^{2}\xi_{f}^{2}\sigma_{max}^{2}(\delta t)^{2}\hat{v}_{t_{n}}^{2} + \big(2k_{f}+ |\rho_{s\hspace{-.7pt}f}|\xi_{f}\sigma_{max}\big)^{2}(\delta t)^{2}\big(\hat{g}_{t_{n}}^{f}\big)^{2} + 4\xi_{f}^{2}\hat{g}_{t_{n}}^{f}\big(W^{f}_{t}-W^{f}_{t_{n}}\big)^{2}.
\end{align*}
Hence,
\begin{align}\label{eq3.30.9}
\sup_{t\in\left[0,T\right]}\E\left[\big|\hat{g}^{f}_{t}-\bar{g}^{f}_{t}\big|^{2}\right] &\leq
4k_{f}^{2}\theta_{f}^{2}(\delta t)^{2} + \rho_{s\hspace{-.7pt}f}^{2}\xi_{f}^{2}\sigma_{max}^{2}(\delta t)^{2}\sup_{0\leq n\leq N}\E\big[\hat{v}_{t_{n}}^{2}\big] + 4\xi_{f}^{2}\delta t\sup_{0\leq n\leq N}\E\big[\hat{g}_{t_{n}}^{f}\big] \nonumber\\[1pt]
&\hspace{0em}+ \big(2k_{f}+ |\rho_{s\hspace{-.7pt}f}|\xi_{f}\sigma_{max}\big)^{2}(\delta t)^{2}\sup_{0\leq n\leq N}\E\Big[\big(\hat{g}_{t_{n}}^{f}\big)^{2}\Big].
\end{align}
The convergence of the term on the left-hand side follows from Propositions \ref{Prop3.3.3} and \ref{Prop3.3.4}.

Finally, the integrand in the last term in \eqref{eq3.30.8} can be bounded from above as follows:
\begin{align}\label{eq3.30.10}
\E\Big[\big|S_{t}-S_{\bar{t}}\big|^{2}\Ind_{t<\hspace{1pt}\tau}\Big] &= \E\left[\bigg|\int_{\bar{t}}^{t}{\big[g^{d}_{u}-g^{f}_{u}+h(u)\big]S_{u}\hspace{1pt}du} + \int_{\bar{t}}^{t}{\sigma(u,S_{u})\sqrt{v_{u}}\hspace{1pt}S_{u}\hspace{1pt}dW_{u}^{s}}\bigg|^{2}\Ind_{t<\hspace{1pt}\tau}\right] \nonumber\\[0pt]
&\hspace{-4.75em} \leq 4\delta t\E\left[\int_{\bar{t}}^{t}{\Big[\big(g^{d}_{u}\big)^{2}+\big(g^{f}_{u}\big)^{2}+h(u)^{2}\Big]S_{u}^{2}\Ind_{u<\hspace{1pt}\tau}du}\right] + 4\E\left[\bigg|\int_{\bar{t}}^{t}{\sigma(u,S_{u})\sqrt{v_{u}}\hspace{1pt}S_{u}\Ind_{u<\hspace{1pt}\tau}dW_{u}^{s}}\bigg|^{2}\right] \nonumber\\[2pt]
&\hspace{-4.75em} \leq 4L_{s}^{2}(\delta t)^{2}\bigg\{\sup_{u\in[0,T]}\E\left[\big(g^{d}_{u}\big)^{2}\right] + \sup_{u\in[0,T]}\E\left[\big(g^{f}_{u}\big)^{2}\right] + 4h_{max}^{2}\bigg\} + 4\sigma_{max}^{2}L_{s}^{2}L_{v}\delta t.
\end{align}
The right-hand side is independent of the time $t$ and tends to zero as $\delta t\!\to\!0$ from Propositions \ref{Prop3.3.3} and \ref{Prop3.3.4}, which concludes the proof. 
\qquad
\end{proof}

Next, we prove the uniform convergence in probability of the discretized spot FX rate process.

\begin{proposition}\label{Prop3.4.2} If $2k_{f}\theta_{f}>\xi_{f}^{2}$ and under assumptions {\rm ($\mathcal{A}$1)} to {\rm ($\mathcal{A}$3)}, $\bar{S}$ converges uniformly in probability, i.e.,
\begin{equation}\label{eq3.31}
\plim_{\delta t\to0} \Prob\bigg(\sup_{t\in[0,T]}\big|S_{t}-\bar{S}_{t}\big|>\epsilon\bigg) = 0 \,,\hspace{1em} \forall\hspace{.5pt}\epsilon>0.
\end{equation}
\end{proposition}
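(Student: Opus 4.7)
The plan is to exploit Proposition~\ref{Prop3.4.1} by reducing the unstopped difference to the stopped one via the decomposition
\begin{equation*}
\Prob\!\bigg(\sup_{t\in[0,T]}\big|S_t-\bar S_t\big|>\epsilon\bigg)\leq\Prob(\tau<T)+\Prob\!\bigg(\sup_{t\in[0,T]}\big|S_{t\wedge\tau}-\bar S_{t\wedge\tau}\big|>\epsilon\bigg),
\end{equation*}
where $\tau$ is the stopping time defined in \eqref{eq3.28} for truncation levels $L_s,L_v,L_d,L_f,\kappa$ to be fixed below. By Markov's inequality and Proposition~\ref{Prop3.4.1}, once these levels are fixed, the second term vanishes as $\delta t\to 0$; it therefore suffices to show that, given any $\epsilon'>0$, the levels can be chosen so that $\Prob(\tau<T)<\epsilon'$ uniformly in $\delta t$.

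To this end I would apply a union bound over the six events defining $\tau$. Markov's inequality combined with the $\delta t$-uniform moment bounds of Propositions~\ref{Prop3.3.3} and~\ref{Prop3.3.4} handles the events $\{\sup_t v_t\geq L_v\}$, $\{\sup_t \hat v_t\geq L_v\}$, $\{\sup_t \hat g^d_t\geq L_d\}$ and $\{\sup_t \hat g^f_t\geq L_f\}$, each of which can be made arbitrarily small by choosing the corresponding level large, uniformly in $\delta t$. For $\{\sup_t S_t\geq L_s\}$ I would first derive a polynomial moment bound on $\sup_{t\in[0,T]}S_t$ by working with $\log S_t$ obtained from It\^o's formula applied to \eqref{eq2.1}: the boundedness of $\sigma$ and $h_{d,f}$ from {\rm ($\mathcal{A}$1)} and {\rm ($\mathcal{A}$3)}, the $L^{p}$ bounds on $v$, $g^d$ and $g^f$, and the BDG inequality applied to $\int_0^t\sigma(u,S_u)\sqrt{v_u}\,dW^s_u$ together yield $\E[\sup_{t\in[0,T]}|\log S_t|^p]<\infty$ for every $p\geq 1$, whence $\Prob(\sup_t S_t\geq L_s)\to 0$ as $L_s\to\infty$, uniformly in $\delta t$.

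The delicate term is $\Prob(\inf_{t\in[0,T]}g^f_t\leq\kappa^{-1})$, and this is where the Feller condition $2k_f\theta_f>\xi_f^{2}$ enters. The quanto drift $-\rho_{sf}\xi_f\sigma(t,S_t)\sqrt{v_tg^f_t}$ in \eqref{eq2.1} destroys the pure CIR structure, so I would either perform a Girsanov change of measure that cancels this term and reduces $g^f$ to a standard CIR process under an equivalent measure, or compare $g^f$ pathwise with an auxiliary CIR process having the same parameters, in order to conclude that $g^f$ does not hit zero almost surely on $[0,T]$. Continuity of $g^f$ then gives $\inf_{t\in[0,T]}g^f_t>0$ a.s., and monotone convergence yields $\Prob(\inf_t g^f_t\leq\kappa^{-1})\to 0$ as $\kappa\to\infty$. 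Fixing the levels in this order makes $\Prob(\tau<T)<\epsilon'$ uniformly in $\delta t$, and the conclusion follows by sending $\delta t\to 0$.

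The main obstacle is precisely this last step: transferring the non-attainment of zero from the classical Feller condition to the actual domestic risk-neutral dynamics of $g^f$ despite the state-dependent quanto correction, and doing so with a bound that is genuinely independent of $\delta t$ (which it is, since it only concerns the continuous-time process $g^f$). Everything else reduces to a routine combination of Markov's inequality, the BDG inequality, and the previously established moment and convergence results.
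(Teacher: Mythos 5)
Your overall structure coincides with the paper's: you decompose via the stopping time $\tau$ from \eqref{eq3.28}, apply Proposition~\ref{Prop3.4.1} and Markov's inequality to the stopped part, take a union bound over the six events defining $\tau$, and control the variance and rate events through the $\delta t$-uniform moment bounds of Propositions~\ref{Prop3.3.3} and~\ref{Prop3.3.4}; the spot event is handled via a moment bound on $\sup_t|\log S_t|$, just as the paper does (it needs only $p=1$, via Doob's inequality and the It\^o isometry). The one place you depart is the delicate last term, $\Prob(\tau_\kappa<T)$ with $\tau_\kappa$ the hitting time of $\kappa^{-1}$ by $g^f$, and there both of the methods you sketch have a real gap.

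The pathwise comparison with a CIR process of the same parameters goes the wrong way when $\rho_{s\hspace{-.7pt}f}>0$: the quanto drift $-\rho_{s\hspace{-.7pt}f}\xi_f\sigma(t,S_t)\sqrt{v_tg^f_t}$ is then negative, so the drift of $g^f$ lies \emph{below} that of the auxiliary CIR process, and the comparison theorem yields $g^f_t\leq y_t$, which does not prevent $g^f$ from reaching zero. The Girsanov route requires the exponential density with kernel $\rho_{s\hspace{-.7pt}f}\sigma(t,S_t)\sqrt{v_t}$ to be a true martingale; Novikov's condition amounts to $\E\big[\exp\{\tfrac{1}{2}\rho_{s\hspace{-.7pt}f}^{2}\sigma_{max}^{2}\int_0^T v_t\,dt\}\big]<\infty$, which by Proposition~\ref{Prop3.3.1} holds for all $T$ only when $k\geq|\rho_{s\hspace{-.7pt}f}|\zeta$ — an extra restriction absent from the hypotheses of the proposition. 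The paper sidesteps both issues with a Lyapunov-type argument: applying It\^o's formula to $F(x)=x^{-\alpha}$ with $\alpha=(2k_f\theta_f-\xi_f^2)/(2\xi_f^2)$, the cancellation $\alpha k_f\theta_f-\tfrac{1}{2}\alpha(1+\alpha)\xi_f^2=\tfrac{1}{2}\alpha^2\xi_f^2$ produces a dominating negative drift term of order $(g^f)^{-(1+\alpha)}$ near zero, against which the quanto contribution (order $(g^f)^{-(1/2+\alpha)}$) and the mean-reversion term (order $(g^f)^{-\alpha}$) are of strictly lower order; a pointwise bound of the form $x-c_px^p-c_qx^q\geq -C$ for $p,q\in(0,1)$, combined with the $L^{1+\alpha}$ bound on $v$ from Proposition~\ref{Prop3.3.3}, then yields $\E\big[F\big(g^f_{T\wedge\tau_\kappa}\big)\big]\leq C$ uniformly in $\kappa$, hence $\Prob(\tau_\kappa<T)\leq C\kappa^{-\alpha}$, valid for all $T$ and every sign of $\rho_{s\hspace{-.7pt}f}$. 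You correctly flagged this as the obstacle, but neither Girsanov nor comparison closes it in full generality; the Lyapunov function is the missing idea.
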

\begin{proof}
Fix $\epsilon>0$ and note that we have the following inclusion of events,
\begin{align*}
\bigg\{\omega : \sup_{t \in [0,T]}\big|S_{t}(\omega)-\bar{S}_{t}(\omega)\big| > \epsilon\bigg\} &\subseteq \bigg\{\omega : \sup_{t \in [0,T]}\big|S_{t}(\omega)-\bar{S}_{t}(\omega)\big| > \epsilon,\,\tau(\omega) \geq T\bigg\} \\[1pt]
&\hspace{.1em} \cup \bigg\{\omega : \sup_{t \in [0,T]}\big|S_{t}(\omega)-\bar{S}_{t}(\omega)\big| > \epsilon,\,\tau(\omega) < T\bigg\}.
\end{align*}
Hence,
\begin{equation}\label{eq3.31.1}
\bigg\{\sup_{t\in0,T]}\big|S_{t}-\bar{S}_{t}\big| > \epsilon\bigg\} \subseteq \bigg\{\sup_{t\in[0,T]}\big|S_{t\wedge\tau}-\bar{S}_{t\wedge\tau}\big| > \epsilon\bigg\} \cup \big\{\tau \hspace{-1pt}< T\big\}.
\end{equation}
In terms of probabilities of events, the previous inclusion implies that
\begin{align}\label{eq3.32}
\Prob\bigg(\sup_{t\in[0,T]}\big|S_{t}-\bar{S}_{t}\big| > \epsilon \bigg) \leq \Prob\bigg(\sup_{t\in[0,T]}\big|S_{t\wedge\tau}-\bar{S}_{t\wedge\tau}\big| > \epsilon \bigg) + \Prob\big(\tau \hspace{-1pt}< T\big).
\end{align}
The convergence in probability of the stopped process is an immediate consequence of Proposition \ref{Prop3.4.1} and Markov's inequality. Moreover, from the definition of the stopping time in \eqref{eq3.28}, we get
\begin{align}\label{eq3.33}
\big\{\tau \hspace{-1pt}< T\big\} &\subseteq \bigg\{\sup_{t\in[0,T]}v_{t}\geq L_{v}\bigg\} \cup \bigg\{\sup_{t\in[0,T]}\hat{v}_{t}\geq L_{v}\bigg\} \cup \bigg\{\sup_{t\in[0,T]}\hat{g}^{d}_{t}\geq L_{d}\bigg\} \cup \bigg\{\sup_{t\in[0,T]}\hat{g}^{f}_{t}\geq L_{f}\bigg\} \nonumber \\[1pt]
&\cup \bigg\{\sup_{t\in[0,T]}S_{t}\geq L_{s}\bigg\} \cup \big\{\tau_{\kappa} < T\big\},
\end{align}
where we define
\begin{equation}\label{eq3.33.1}
\tau_{\kappa} = \inf\!\big\{t\geq 0 :\, g^{f}_{t}\leq\kappa^{-1}\big\}.
\end{equation}
However, assuming that $L_{s}>1$, we have that
\begin{equation}\label{eq3.33.2}
\bigg\{\sup_{t\in[0,T]}S_{t}\geq L_{s}\bigg\} = \bigg\{\sup_{t\in[0,T]}x_{t} \geq \log L_{s}\bigg\} \subseteq \bigg\{\sup_{t\in[0,T]}|x_{t}| \geq \log L_{s}\bigg\}.
\end{equation}
Therefore, using \eqref{eq3.33}, \eqref{eq3.33.2} and Markov's inequality, we find an upper bound
\begin{align}\label{eq3.33.3}
\Prob\big(\tau \hspace{-1pt}< T\big) &\leq L_{v}^{-1}\E\bigg[\sup_{t\in[0,T]}v_{t}\bigg] + L_{v}^{-1}\E\bigg[\sup_{t\in[0,T]}\hat{v}_{t}\bigg] + L_{d}^{-1}\E\bigg[\sup_{t\in[0,T]}\hat{g}^{d}_{t}\bigg] + L_{f}^{-1}\E\bigg[\sup_{t\in[0,T]}\hat{g}^{f}_{t}\bigg] \nonumber\\[1pt]
&+ \big(\log L_{s}\big)^{-1}\E\bigg[\sup_{t\in[0,T]}|x_{t}|\bigg] + \Prob\big(\tau_{\kappa}<T\big).
\end{align}
First, recall the formula for the logarithm of the spot FX rate process,
\begin{align}\label{eq3.33.4}
x_{t} = x_{0} + \int_{0}^{t}{\Big(g^{d}_{u} - g^{f}_{u} + h(u) - \frac{1}{2}\sigma^{2}\big(u,S_{u}\big)v_{u}\Big) du} + \int_{0}^{t}{\sigma\big(u,S_{u}\big)\sqrt{v_{u}}\,dW^{s}_{u}}.
\end{align}
Taking the supremum over all $t\in[0,T]$ and using the fact that $a<a^{2}+1$, then taking expectations and employing Doob's martingale inequality and the It\^o isometry, we get
\begin{equation}\label{eq3.33.5}
\E\!\bigg[\sup_{t\in[0,T]}|x_{t}|\bigg] \leq 1 + |x_{0}| + 2h_{max}T + T\!\E\!\bigg[\sup_{t\in[0,T]}g^{d}_{t}\bigg] + T\!\E\!\bigg[\sup_{t\in[0,T]}g^{f}_{t}\bigg] + 4.5\sigma_{max}^{2}T\!\E\!\bigg[\sup_{t\in[0,T]}v_{t}\bigg].
\end{equation}
Second, we prove the almost sure positivity of the process $g^{f}$ under the Feller condition, i.e., when $2k_{f}\theta_{f}>\xi_{f}^{2}$. To this end, let $\alpha=(2k_{f}\theta_{f}-\xi_{f}^{2})/2\xi_{f}^{2}$ and define the function $F:(0,\infty)\mapsto\RR$ by $F(x) = x^{-\alpha}$. Using It\^o's formula, we have that
\begin{align}\label{eq3.33.6}
\E\Big[F\big(g_{T\wedge\hspace{1pt}\tau_{\kappa}}^{f}\big)\Big] &= F\big(g_{0}^{f}\big) - \E\int_{0}^{T\wedge\hspace{1pt}\tau_{\kappa}}{\hspace{-1pt}\alpha\big(g_{u}^{f}\big)^{-(1+\alpha)}\big(k_{f}\theta_{f}-k_{f}g^{f}_{u}-\rho_{s\hspace{-.7pt}f}\xi_{f}\sigma\big(u,S_{u}\big)\sqrt{v_{u}g^{f}_{u}}\hspace{1pt}\big)du} \nonumber\\[2pt]
&\hspace{-1em}+ \frac{1}{2}\hspace{.5pt}\E\int_{0}^{T\wedge\hspace{1pt}\tau_{\kappa}}{\hspace{-1pt}\alpha(1+\alpha)\xi_{f}^{2}\big(g_{u}^{f}\big)^{-(1+\alpha)}du}
- \E\int_{0}^{T\wedge\hspace{1pt}\tau_{\kappa}}{\hspace{-1pt}\alpha\xi_{f}\big(g_{u}^{f}\big)^{-(0.5+\alpha)}dW^{f}_{u}}.
\end{align}
Note that
\begin{equation*}
\alpha k_{f}\theta_{f} - \frac{1}{2}\hspace{1pt}\alpha(1+\alpha)\xi_{f}^{2} = \frac{1}{4}\hspace{1pt}\alpha\big[4k_{f}\theta_{f}-2\xi_{f}^{2}-2\alpha\xi_{f}^{2}\big] = \frac{1}{2}\hspace{1pt}\alpha^{2}\xi_{f}^{2},
\end{equation*}
and also that
\begin{equation*}
\E\int_{0}^{T}{\alpha^{2}\xi_{f}^{2}\big(g_{u}^{f}\big)^{-(1+2\alpha)}\Ind_{u<\hspace{.5pt}\tau_{\kappa}}du} \leq \alpha^{2}\xi_{f}^{2}\kappa^{1+2\alpha}\hspace{.5pt}T < \infty,
\end{equation*}
so the stochastic integral on the right-hand side of \eqref{eq3.33.6} is a true martingale. Hence,
\begin{align}\label{eq3.33.7}
\E\Big[F\big(g_{T\wedge\hspace{1pt}\tau_{\kappa}}^{f}\big)\Big] &\leq F\big(g_{0}^{f}\big) - \frac{1}{2}\hspace{1pt}\alpha^{2}\xi_{f}^{2}\E\int_{0}^{T}{\big(g_{u}^{f}\big)^{-(1+\alpha)}\Ind_{u<\hspace{.5pt}\tau_{\kappa}}du} + \alpha k_{f}\E\int_{0}^{T}{\big(g_{u}^{f}\big)^{-\alpha}\Ind_{u<\hspace{.5pt}\tau_{\kappa}}du} \nonumber\\[1pt]
&+ \alpha|\rho_{s\hspace{-.7pt}f}|\xi_{f}\sigma_{max}\E\int_{0}^{T}{v_{u}^{0.5}\big(g_{u}^{f}\big)^{-(0.5+\alpha)}\Ind_{u<\hspace{.5pt}\tau_{\kappa}}du}.
\end{align}
Employing Fubini's theorem and H\"older's inequality, we deduce that
\begin{align}\label{eq3.33.8}
\E\Big[F\big(g_{T\wedge\hspace{1pt}\tau_{\kappa}}^{f}\big)\Big] &\leq F\big(g_{0}^{f}\big) - \frac{1}{2}\hspace{1pt}\alpha^{2}\xi_{f}^{2}\int_{0}^{T}\hspace{-1.5pt}\bigg(\E\hspace{-1pt}\Big[\big(g_{u}^{f}\big)^{-(1+\alpha)}\Ind_{u<\hspace{.5pt}\tau_{\kappa}}\Big] - 2\alpha^{-1}k_{f}\xi_{f}^{-2}\E\hspace{-1pt}\Big[\big(g_{u}^{f}\big)^{-(1+\alpha)}\Ind_{u<\hspace{.5pt}\tau_{\kappa}}\Big]^{\frac{\alpha}{1+\alpha}} \nonumber\\[3pt]
&- 2\alpha^{-1}|\rho_{s\hspace{-.7pt}f}|\xi_{f}^{-1}\sigma_{max}\sup_{t\in[0,T]}\E\hspace{-1pt}\Big[v_{t}^{1+\alpha}\Big]^{\frac{1}{2(1+\alpha)}}\E\hspace{-1pt}\Big[\big(g_{u}^{f}\big)^{-(1+\alpha)}\Ind_{u<\hspace{.5pt}\tau_{\kappa}}\Big]^{\frac{1+2\alpha}{2(1+\alpha)}}\bigg)du.
\end{align}
However, the moments of the square-root process are bounded by Proposition \ref{Prop3.3.3}. Furthermore, if $p,q\in(0,1)$, $c_{p,q}\geq0$ and $c=c_{p}+c_{q}>0$, then for all $x\geq0$ we have
\begin{align*}
x - c_{p}x^{p} - c_{q}x^{q} &= \frac{c_{p}}{c}\big(x-cx^{p}\big) + \frac{c_{q}}{c}\big(x-cx^{q}\big) \nonumber\\[2pt]
&= c_{p}c^{\frac{p}{1-p}}\Big[\hspace{.5pt}c^{-\frac{1}{1-p}}x-\Big(c^{-\frac{1}{1-p}}x\Big)^{p}\hspace{1pt}\Big] + c_{q}c^{\frac{q}{1-q}}\Big[\hspace{.5pt}c^{-\frac{1}{1-q}}x-\Big(c^{-\frac{1}{1-q}}x\Big)^{q}\hspace{1pt}\Big] \nonumber\\[2pt]
&\geq - c_{p}c^{\frac{p}{1-p}} - c_{q}c^{\frac{q}{1-q}}.
\end{align*}
Therefore, the integrand in \eqref{eq3.33.8} is bounded from below by a constant and thus
\begin{equation}\label{eq3.33.9}
\E\Big[F\big(g_{T\wedge\hspace{1pt}\tau_{\kappa}}^{f}\big)\Big] \leq C,
\end{equation}
for some constant $C$ independent of $\kappa$. Since $g^{f}$ has continuous paths, $g_{\tau_{\kappa}}^{f}=\kappa^{-1}$ and $F(g_{\tau_{\kappa}}^{f})=\kappa^{\alpha}$. Hence, from \eqref{eq3.33.9} and the positivity of $F$, we deduce that
\begin{equation}\label{eq3.33.10}
\Prob\big(\tau_{\kappa}<T\big) = \kappa^{-\alpha}\E\Big[F\big(g_{\tau_{\kappa}}^{f}\big)\Ind_{\tau_{\kappa}<\hspace{1pt}T}\Big] \leq \kappa^{-\alpha}\E\Big[F\big(g_{T\wedge\hspace{1pt}\tau_{\kappa}}^{f}\big)\Big] \leq C\kappa^{-\alpha}.
\end{equation}
Taking the limit as $\delta t\!\to\!0$ in \eqref{eq3.32}, using the upper bounds derived in \eqref{eq3.33.3}, \eqref{eq3.33.5} and \eqref{eq3.33.10}, and then employing Propositions \ref{Prop3.3.3} and \ref{Prop3.3.4}, the conclusion follows from the fact that we can choose $L_{s}$, $L_{v}$, $L_{d}$, $L_{f}$ and $\kappa$ arbitrarily large.
\qquad
\end{proof}

Many models with stochastic volatility dynamics, including the Heston model, have the undesirable feature of moment instability, i.e., moments of order higher than 1 can explode in finite time \cite{Andersen:2007}. This can cause problems in practice, for instance when computing the arbitrage-free price of an option with a superlinear payoff. Furthermore, establishing the existence of moments of order higher than 1 of the process and its approximation is an important ingredient in the convergence analysis (see \cite{Higham:2002}). Since the most popular FX contracts grow at most linearly in the FX rate and their risk-neutral valuation involves computing the expected discounted payoff, it is useful to study the finiteness of moments under discounting. Define $R$ to be the discounted exchange rate process,
\begin{align}\label{eq3.34.2}
R_{t} = S_{0}\exp\bigg\{\hspace{-1pt}-\int_{0}^{t}{r^{f}_{u}\,du} - \frac{1}{2}\int_{0}^{t}{\sigma^{2}(u,S_{u})v_{u}\,du} + \int_{0}^{t}{\sigma(u,S_{u})\sqrt{v_{u}}\,dW_{u}^{s}}\bigg\}\hspace{.5pt},
\end{align}
and let $\bar{R}$ be its continuous-time approximation,
\begin{align}\label{eq3.34.3}
\bar{R}_{t} = S_{0}\exp\bigg\{\hspace{-1pt}-\int_{0}^{t}{\bar{r}^{f}_{u}\,du} - \frac{1}{2}\int_{0}^{t}{\bar{\sigma}^{2}\big(u,\bar{S}_{u}\big)\bar{v}_{u}\,du} + \int_{0}^{t}{\bar{\sigma}\big(u,\bar{S}_{u}\big)\sqrt{\bar{v}_{u}}\,dW^{s}_{u}}\bigg\}\hspace{.5pt}.
\end{align}
The next result establishes a lower bound on the explosion time of moments of order higher than 1 of the discounted process and is an important step in proving Theorem \ref{Thm2.1} (Theorem 1 in \cite{Mariapragassam:2016}), which, in turn, plays a key role in the calibration of the model.

\begin{proposition}\label{Prop3.4.3} Let $\alpha\geq1$. Under assumptions {\rm ($\mathcal{A}$1)} and {\rm ($\mathcal{A}$3)}, if $T<T^{*}$, there exists $\omega_{1}>\alpha$ such that, for all $\omega \in [1,\omega_{1})$, the following holds:
\begin{equation}\label{eq3.35}
\sup_{t\in[0,T]}\E\big[R_{t}^{\hspace{1pt}\omega}\big] < \infty,
\end{equation}
where $\varphi(\alpha)=\alpha+\sqrt{(\alpha-1)\alpha}\hspace{1pt}$, $\zeta=\xi\sigma_{max}$ and $T^{*}$ is as given below.
\begin{enumerate}[(1)]
\item{When $k<\varphi(\alpha)\zeta$,
\begin{equation}\label{eq3.35.1}
T^{*} = \frac{2}{\sqrt{\varphi(\alpha)^{2}\zeta^{2}-k^{2}}}\bigg[\frac{\pi}{2}+\arctan\bigg(\frac{k}{\sqrt{\varphi(\alpha)^{2}\zeta^{2}-k^{2}}}\bigg)\bigg].
\end{equation}}
\item{When $k\geq\varphi(\alpha)\zeta$,
\begin{equation}\label{eq3.35.2}
T^{*} = \infty\hspace{.5pt}.
\end{equation}}
\end{enumerate}
\end{proposition}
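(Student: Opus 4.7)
\medskip

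The plan is to exploit the stochastic exponential structure of $R_t^\omega$ and reduce the problem, via a combination of H\"older's inequality and a Girsanov change of measure, to bounding exponential moments of the CIR variance process, for which Proposition \ref{Prop3.3.1} applies. Writing $M_t=\int_0^t \sigma(u,S_u)\sqrt{v_u}\,dW^s_u$ and $\langle M\rangle_t=\int_0^t \sigma^2(u,S_u)v_u\,du$, I first represent
\begin{equation*}
R_t^\omega = S_0^\omega\,\exp\Big\{-\omega\!\int_0^t r^f_u\,du\Big\}\cdot \mathcal{E}(\omega M)_t\cdot \exp\Big\{\tfrac{\omega^2-\omega}{2}\langle M\rangle_t\Big\},
\end{equation*}
where $\mathcal{E}(\omega M)$ is a positive local martingale. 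Assumption ($\mathcal{A}$3) together with $g^f\geq 0$ gives $-\omega\int_0^t r^f_u\,du\leq \omega h_{max}T$, and ($\mathcal{A}$1) gives $\langle M\rangle_t\leq \sigma_{max}^2\int_0^t v_u\,du$. These bounds turn the problem into one of controlling the coupling between $\mathcal{E}(\omega M)$ and an exponential integral of $v$.

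Next, I apply H\"older's inequality with conjugate exponents $p,q>1$ (to be chosen):
\begin{equation*}
\E[R_t^\omega]\leq S_0^\omega e^{\omega h_{max}T}\,\E\big[\mathcal{E}(\omega M)_t^p\big]^{1/p}\,\E\Big[\exp\Big\{\tfrac{q(\omega^2-\omega)\sigma_{max}^2}{2}\!\int_0^t v_u\,du\Big\}\Big]^{1/q}.
\end{equation*}
The first factor is handled by the identity $\mathcal{E}(\omega M)^p=\mathcal{E}(p\omega M)\exp\{\tfrac{p(p-1)\omega^2}{2}\langle M\rangle\}$; provided $\mathcal{E}(p\omega M)$ is a true martingale, a Girsanov change of measure under which $\tilde W^s_t=W^s_t-p\omega\int_0^t\sigma\sqrt{v}\,du$ is standard Brownian gives $v$ perturbed dynamics with effective mean-reversion $k-p\omega\rho_{sv}\xi\sigma(t,S_t)$, and the Ikeda--Watanabe comparison theorem dominates $v$ by a CIR-like process $v^{*}$ with worst-case parameter $k-p\omega\zeta$. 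Applying Proposition \ref{Prop3.3.1} to $v^{*}$ with $\lambda_1=p(p-1)\omega^2\sigma_{max}^2/2$ bounds the first factor below its explosion threshold. The second factor is handled directly by Proposition \ref{Prop3.3.1} with $\lambda_2=q(\omega^2-\omega)\sigma_{max}^2/2$, $k_y=k$, $\xi_y=\xi$. Optimising the choice of $(p,q)$ to maximise the minimum of the two resulting thresholds yields, after algebra, the formula involving $\varphi(\alpha)=\alpha+\sqrt{\alpha(\alpha-1)}$ (which is precisely the larger root of $x^2-2\alpha x+\alpha=0$). Finally, continuity of the threshold in the exponent allows me to pass from the condition $T<T^{*}(\alpha)$ to existence of $\omega_1>\alpha$ with $T<T^{*}(\omega)$ for all $\omega\in[1,\omega_1)$.

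The principal obstacle is twofold. First, justifying the Girsanov change of measure: $\mathcal{E}(p\omega M)$ is only a priori a local martingale, and Novikov's criterion on $[0,T]$ is precisely what we are trying to prove. I would circumvent this by a localization argument -- stop $\mathcal{E}(p\omega M)$ at the hitting times $\tau_n=\inf\{t:v_t\geq n\}$, perform the change of measure and the comparison bound on the stopped processes (for which everything is bounded and Novikov holds trivially), and then pass $n\to\infty$ by monotone convergence, using Proposition \ref{Prop3.3.3} to rule out blow-up of $v$. Second, the optimisation over $(p,q)$ to recover the exact constant $\varphi(\alpha)$ rather than a looser bound requires a delicate matching of the two H\"older thresholds, exploiting the algebraic identity $\varphi(\alpha)(2\alpha-\varphi(\alpha))=\alpha$, and will be the most technical part of the computation.
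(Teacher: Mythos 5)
Your route diverges from the paper's at the H\"older split, and it has a gap. The paper bakes the prefactor $p\omega$ into $M_t = p\omega\int_0^t\sigma(u,S_u)\sqrt{v_u}\,dW^s_u$ and then writes $L_t^\omega = S_0^\omega e^{\omega h_{max}t}\exp\{\tfrac{1}{p}(M_t - \tfrac{1}{2}\langle M\rangle_t) + \tfrac{a}{2}\int_0^t\sigma^2 v_u\,du\}$ with $a=p\omega^2-\omega$, so that after H\"older the first factor is $\E[\exp\{M_t - \tfrac{1}{2}\langle M\rangle_t\}]^{1/p}$ --- a stochastic exponential raised to the power \emph{one}, hence $\leq 1$ unconditionally by the supermartingale property of non-negative local martingales. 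No Girsanov change and no comparison theorem are needed; the only binding constraint comes from the second factor via Proposition~\ref{Prop3.3.1}, and minimizing $\max\{\omega^2 p^2,\,\omega p(\omega p-1)/(p-1)\}$ over $p>1$ gives $\varphi(\omega)^2$ at $p^*=1+\sqrt{1-1/\omega}$. Your split keeps $\mathcal{E}(\omega M)_t^p$ in the first factor, which is \emph{not} a unit-power stochastic exponential, and this is precisely what forces you into the measure-change machinery.

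The gap in your plan is substantive. After the Girsanov change, $v$ has mean-reversion $k - p\omega\rho_{sv}\xi\sigma(t,S_t)$, which you dominate in the worst case by $k - p\omega\zeta$. But this quantity can be negative, and indeed below $-\sqrt{p(p-1)}\omega\zeta$, in which case $2\lambda_1\xi^2 - k_y^2 < 0$ and the explosion-time formula of Proposition~\ref{Prop3.3.1} is not even real; that proposition rests on the Andersen--Piterbarg Riccati analysis for a mean-\emph{reverting} square-root process with $k_y>0$, and a mean-averting process ($k_y<0$) explodes on a different, hyperbolic timescale not covered by the statement. Concretely, take $\omega=2$, $k=1$, $\zeta=1$: the $p\approx 1.21$ that equates your two thresholds gives $k_y\approx -1.41$ while $\sqrt{2\lambda_1}\xi\approx 1.0$, so $2\lambda_1\xi^2 - k_y^2\approx -1$. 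Moreover, even in the regime where the formula would formally apply, your first-factor threshold $T^*_1$ carries $k-p\omega\zeta$ in \emph{both} the square root and the arctan argument, whereas the proposition's $T^*$ carries $k$; matching these two expressions is not the same calculation as matching the leading quadratic coefficients to $\varphi(\alpha)^2$, and you have not shown that any choice of $(p,q)$ recovers the stated formula.
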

\begin{proof}
First, we find it convenient to define a new stochastic process $L$ by
\begin{equation}\label{eq3.36}
L_{t} \equiv S_{0}\exp\left\{h_{max}t - \hspace{1pt}\frac{1}{2}\int_{0}^{t}{\sigma^{2}(u,S_{u})v_{u}\,du} + \int_{0}^{t}{\sigma(u,S_{u})\sqrt{v_{u}}\,dW_{u}^{s}}\right\}.
\end{equation}
As $R_{t}\leq L_{t}$ for all $t \in [0,T]$, it suffices to prove the finiteness of the supremum over $t$ of
\begin{align}\label{eq3.37}
\E\big[L_{t}^{\omega}\big] &= S_{0}^{\omega}\E\bigg[\exp\bigg\{\omega h_{max}t + \omega\int_{0}^{t}{\sigma(u,S_{u})\sqrt{v_{u}}\,dW_{u}^{s}} - \frac{\omega}{2}\int_{0}^{t}{\sigma^{2}(u,S_{u})v_{u}\,du}\bigg\}\bigg].
\end{align}
Second, suppose that $T<T^{*}$, with $T^{*}$ defined in \eqref{eq3.35.1} -- \eqref{eq3.35.2}. If $k<\varphi(\alpha)\zeta$, since $\varphi(\cdot)$ is increasing on $[1,\infty)$ and by a continuity argument, we can find $\omega_{1}>\alpha$ such that, for all $\omega\in(\alpha,\omega_{1})$,
\begin{equation}\label{eq3.37.0.1}
k<\varphi(\omega)\zeta \hspace{5pt}\text{ and }\hspace{5pt} T<\frac{2}{\sqrt{\varphi(\omega)^{2}\zeta^{2}-k^{2}}}\bigg[\frac{\pi}{2}+\arctan\bigg(\frac{k}{\sqrt{\varphi(\omega)^{2}\zeta^{2}-k^{2}}}\bigg)\bigg].
\end{equation}
If $k=\varphi(\alpha)\zeta$, since $\varphi(\cdot)$ is strictly increasing on $[1,\infty)$ and
\begin{equation*}
\lim_{\omega\hspace{1pt}\downarrow\hspace{1pt}\alpha^{+}}\hspace{1pt}\frac{2}{\sqrt{\varphi(\omega)^{2}\zeta^{2}-k^{2}}}\bigg[\frac{\pi}{2}+\arctan\bigg(\frac{k}{\sqrt{\varphi(\omega)^{2}\zeta^{2}-k^{2}}}\bigg)\bigg] = \infty,
\end{equation*}
we can find $\omega_{1}>\alpha$ such that \eqref{eq3.37.0.1} holds for all $\omega\in(\alpha,\omega_{1})$. Finally, if $k>\varphi(\alpha)\zeta$, by a continuity argument, we can find $\omega_{1}>\alpha$ such that, for all $\omega\in(\alpha,\omega_{1})$,
\begin{equation}\label{eq3.37.0.3}
k>\varphi(\omega)\zeta.
\end{equation}
Third, fix $\omega\in(\alpha,\omega_{1})$ and let the functions $f_{\omega},g_{\omega} : (1,\infty) \mapsto (1,\infty)$ be defined by
\begin{equation*}
f_{\omega}(x) = \omega^{2}x^{2}
\end{equation*}
and
\begin{equation*}
g_{\omega}(x) = \frac{\omega x}{x-1}\big(\omega x-1\big) = \big(\omega+\sqrt{(\omega-1)\omega}\hspace{1pt}\big)^{2} + \frac{\omega^{2}}{x-1}\big(x-1-\sqrt{1-1/\omega}\hspace{1pt}\big)^{2}.
\end{equation*}
The first function is strictly increasing, whereas the second function attains a global minimum at $x^{*}=1+\sqrt{1-1/\omega}\hspace{1pt}$. Furthermore, they are equal at $x^{*}$, so
\begin{equation}\label{eq3.38.1}
\inf_{x>1}\,\max\big\{f_{\omega}(x)\hspace{1pt},\hspace{1pt} g_{\omega}(x)\big\} = g_{\omega}(x^{*}) = \big(\omega+\sqrt{(\omega-1)\omega}\hspace{1pt}\big)^{2}.
\end{equation}
Consider the H\"older pair $(p,q)$ satisfying $q=p/(p-1)$ such that
\begin{equation}\label{eq3.38.2}
p = 1+\sqrt{\frac{\omega-1}{\omega}} \hspace{5pt}\text{ and }\hspace{5pt} q = 1+\sqrt{\frac{\omega}{\omega-1}}\hspace{1pt}.
\end{equation}
This particular H\"older pair ensures an optimal lower bound $T^{*}$ on the explosion time. Next, define the quantity $a = p\omega^{2}-\omega$ and introduce the stochastic process
\begin{equation*}
M_{t} = p\omega\int_{0}^{t}{\sigma(u,S_{u})\sqrt{v_{u}}\,dW_{u}^{s}}
\end{equation*}
with quadratic variation
\begin{equation*}
\langle M\rangle_{t} = p^{2}\omega^{2}\int_{0}^{t}{\sigma^{2}(u,S_{u})v_{u}\,du}.
\end{equation*}
Then we can rewrite \eqref{eq3.37} as follows:
\begin{align}\label{eq3.39.1}
\E\big[L_{t}^{\omega}\big] = S_{0}^{\omega}e^{\omega h_{max}t}\E\bigg[\exp\bigg\{\frac{1}{p}\left[M_{t} - \frac{1}{2}\langle M\rangle_{t}\right] + \frac{a}{2}\int_{0}^{t}{\sigma^{2}(u,S_{u})v_{u}\,du}\bigg\}\bigg].
\end{align}
Applying H\"older's inequality with the pair $(p,q)$ from \eqref{eq3.38.2} and taking the supremum over $[0,T]$,
\begin{align}\label{eq3.39.2}
\sup_{t \in [0,T]}\E\big[L_{t}^{\omega}\big] &\leq S_{0}^{\omega}e^{\omega h_{max}T}\sup_{t \in [0,T]}\E\bigg[\exp\bigg\{M_{t} - \frac{1}{2}\langle M\rangle_{t}\bigg\}\bigg]^{\frac{1}{p}} \nonumber \\[0pt]
&\times \E\bigg[\exp\bigg\{\frac{1}{2}\hspace{1pt}q\omega\big(p\omega-1\big)\sigma_{max}^{2}\int_{0}^{T}{v_{u}\,du}\bigg\}\bigg]^{\frac{1}{q}}.
\end{align}
The stochastic exponential is a martingale if Novikov's condition is satisfied, i.e.,
\begin{align*}
\E\bigg[\exp\bigg\{\,\frac{1}{2}\langle M\rangle_{T}\bigg\}\bigg] \leq \E\bigg[\exp\bigg\{\frac{1}{2}\hspace{1pt}p^{2}\omega^{2}\sigma_{max}^{2}\int_{0}^{T}{v_{u}\,du}\bigg\}\bigg] < \infty.
\end{align*}
The finiteness of the two expectations in \eqref{eq3.39.2} follows from \eqref{eq3.37.0.1} -- \eqref{eq3.38.2} and Proposition \ref{Prop3.3.1}. The extension to the interval $[1,\omega_{1})$ follows from Jensen's inequality.
\qquad
\end{proof}

\begin{proposition}\label{Prop3.4.4} Let $\alpha\geq1$. Under assumptions {\rm ($\mathcal{A}$1)} and {\rm ($\mathcal{A}$3)}, if $T<T^{*}$, there exists $\omega_{2}>\alpha$ such that, for all $\omega \in [1,\omega_{2})$ and $\delta_{T}<k^{-1}$, the following holds:
\begin{equation}\label{eq3.40}
\sup_{\delta t \in (0,\delta_{T})}\hspace{1.5pt}\sup_{t \in [0,T]} \E\big[(\bar{R}_{t})^{\hspace{.5pt}\omega}\big] < \infty,
\end{equation}
where $\varphi(\alpha)=\alpha+\sqrt{(\alpha-1)\alpha}\hspace{1pt}$, $\zeta=\xi\sigma_{max}$ and $T^{*}$ is as given below.
\begin{enumerate}[(1)]
\item{When $k\leq\frac{1}{2}\hspace{1pt}\varphi(\alpha)\zeta$,
\begin{equation}\label{eq3.40.1}
T^{*} = \frac{1}{\varphi(\alpha)\zeta-k}\hspace{1pt}.
\end{equation}}
\item{When $k>\frac{1}{2}\hspace{1pt}\varphi(\alpha)\zeta$,
\begin{equation}\label{eq3.40.2}
T^{*} = \frac{4k}{\varphi(\alpha)^{2}\zeta^{2}}\hspace{1pt}.
\end{equation}}
\end{enumerate}
\end{proposition}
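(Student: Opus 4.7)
The plan is to imitate the proof of Proposition \ref{Prop3.4.3} verbatim, substituting Theorem \ref{Thm3.3.2} (the uniform-in-$\delta t$ exponential integrability of the FTE interpolant) for Proposition \ref{Prop3.3.1} at every point where the latter was invoked. Two observations drive the argument. First, since $\bar{g}^{f}_{u}\geq 0$ and $|h_{f}(u)|\leq h_{max}$ by assumption ($\mathcal{A}$3), one has $\bar{r}^{f}_{u}\geq -h_{max}$, so $\bar{R}_{t}\leq \bar{L}_{t}$, where
\begin{equation*}
\bar{L}_{t}\equiv S_{0}\exp\Big\{h_{max}\hspace{1pt}t-\tfrac{1}{2}\hspace{-1pt}\int_{0}^{t}{\bar{\sigma}^{2}(u,\bar{S}_{u})\bar{v}_{u}\,du}+\int_{0}^{t}{\bar{\sigma}(u,\bar{S}_{u})\sqrt{\bar{v}_{u}}\,dW_{u}^{s}}\Big\},
\end{equation*}
so it suffices to bound $\sup_{\delta t}\sup_{t\in[0,T]}\E[\bar{L}_{t}^{\omega}]$ uniformly. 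Second, the optimal H\"older pair $(p,q)$ of \eqref{eq3.38.2} is designed so that $p^{2}\omega^{2}$ and $q\omega(p\omega-1)$ take the common value $\varphi(\omega)^{2}$ at the minimising point, meaning a single input from Theorem \ref{Thm3.3.2} will control both Novikov's condition and the surviving H\"older factor.

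Concretely, I would first invoke continuity of $\varphi:[1,\infty)\to[1,\infty)$ in a manner identical to that following \eqref{eq3.37.0.1} to select $\omega_{2}>\alpha$ such that $T<T^{*}_{\omega}$ for every $\omega\in(\alpha,\omega_{2})$, where $T^{*}_{\omega}$ denotes the expression \eqref{eq3.40.1}--\eqref{eq3.40.2} with $\alpha$ replaced by $\omega$. Fixing $\omega\in(\alpha,\omega_{2})$ and setting $M_{t}=p\omega\int_{0}^{t}\bar{\sigma}(u,\bar{S}_{u})\sqrt{\bar{v}_{u}}\,dW_{u}^{s}$, rewrite
\begin{equation*}
\E\big[\bar{L}_{t}^{\omega}\big]=S_{0}^{\omega}e^{\omega h_{max}t}\,\E\Big[\exp\big\{\tfrac{1}{p}\big(M_{t}-\tfrac{1}{2}\langle M\rangle_{t}\big)\big\}\,\exp\big\{\tfrac{1}{2}(p\omega^{2}-\omega)\textstyle\int_{0}^{t}\bar{\sigma}^{2}(u,\bar{S}_{u})\bar{v}_{u}\,du\big\}\Big],
\end{equation*}
and apply H\"older's inequality with the pair $(p,q)$ of \eqref{eq3.38.2}; the exponent of the second factor then collapses to $\tfrac{1}{2}\varphi(\omega)^{2}\sigma_{max}^{2}\int_{0}^{T}\bar{v}_{u}\,du$.

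The single input from Theorem \ref{Thm3.3.2} is taken with $\lambda=\tfrac{1}{2}\varphi(\omega)^{2}\sigma_{max}^{2}$, $k_{y}=k$ and $\xi_{y}=\xi$: a short computation shows $\sqrt{0.5\lambda}\,\xi=\tfrac{1}{2}\varphi(\omega)\zeta$ and that the two candidate thresholds \eqref{eq3.8.1}--\eqref{eq3.8.2} reduce to $(\varphi(\omega)\zeta-k)^{-1}$ and $4k\,\varphi(\omega)^{-2}\zeta^{-2}$ respectively, matching \eqref{eq3.40.1}--\eqref{eq3.40.2} exactly. Since $T<T^{*}_{\omega}$ and $\delta_{T}<k^{-1}$, Theorem \ref{Thm3.3.2} delivers $\sup_{\delta t\in(0,\delta_{T})}\E\big[\exp\{\lambda\int_{0}^{T}\bar{v}_{u}\,du\}\big]<\infty$, which simultaneously (i) verifies Novikov's condition for $M$ via $\langle M\rangle_{T}\leq \varphi(\omega)^{2}\sigma_{max}^{2}\int_{0}^{T}\bar{v}_{u}\,du$, making $\mathcal{E}(M)$ a true martingale for each $\delta t$ and trivialising the first H\"older factor, and (ii) bounds the second H\"older factor uniformly in $\delta t\in(0,\delta_{T})$. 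The extension from $\omega\in(\alpha,\omega_{2})$ down to $\omega\in[1,\alpha]$ then follows from Jensen's inequality.

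The main obstacle is purely algebraic rather than probabilistic: one must verify carefully that the optimal H\"older pair of \eqref{eq3.38.2} is engineered so that $p^{2}\omega^{2}$ and $q\omega(p\omega-1)$ collapse to the same value $\varphi(\omega)^{2}$, and that this common $\lambda=\tfrac{1}{2}\varphi(\omega)^{2}\sigma_{max}^{2}$ plugged into Theorem \ref{Thm3.3.2} reproduces exactly the thresholds \eqref{eq3.40.1}--\eqref{eq3.40.2}. Once this bookkeeping is in place, Theorem \ref{Thm3.3.2} slots seamlessly into the role played by Proposition \ref{Prop3.3.1} in Proposition \ref{Prop3.4.3}, and no new stochastic-analysis input is required.
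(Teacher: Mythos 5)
Your proposal is correct and takes essentially the same approach as the paper: the paper's proof of Proposition \ref{Prop3.4.4} simply introduces $\bar{L}$, selects $\omega_{2}$ by the same continuity argument you describe, and then states ``we argue as in Proposition \ref{Prop3.4.3} and use Theorem \ref{Thm3.3.2}.'' You have correctly carried out the bookkeeping the paper leaves implicit, in particular verifying that $p^{2}\omega^{2}=q\omega(p\omega-1)=\varphi(\omega)^{2}$ for the H\"older pair of \eqref{eq3.38.2}, and that with $\lambda=\tfrac{1}{2}\varphi(\omega)^{2}\sigma_{max}^{2}$, $k_{y}=k$, $\xi_{y}=\xi$ the thresholds \eqref{eq3.8.1}--\eqref{eq3.8.2} reduce exactly to \eqref{eq3.40.1}--\eqref{eq3.40.2}.
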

\begin{proof}
For convenience, define a new stochastic process $\bar{L}$ by
\begin{align}\label{eq3.41}
\bar{L}_{t} &\equiv S_{0}\exp\bigg\{h_{max}t - \hspace{1pt}\frac{1}{2}\int_{0}^{t}{\bar{\sigma}^{2}\big(u,\bar{S}_{u}\big)\bar{v}_{u}\,du} + \int_{0}^{t}{\bar{\sigma}\big(u,\bar{S}_{u}\big)\sqrt{\bar{v}_{u}}\,dW^{s}_{u}}\bigg\}.
\end{align}
Since $\bar{R}_{t} \leq \bar{L}_{t}$ for all $t \in [0,T]$, it suffices to prove the finiteness of the supremum over $t$ and $\delta t$ of
\begin{align}\label{eq3.42}
\E\big[(\bar{L}_{t})^{\omega}\big] = S_{0}^{\omega}\E\bigg[\exp\bigg\{\omega h_{max}t + \omega\int_{0}^{t}{\bar{\sigma}\big(u,\bar{S}_{u}\big)\sqrt{\bar{v}_{u}}\,dW^{s}_{u}} - \frac{\omega}{2}\int_{0}^{t}{\bar{\sigma}^{2}\big(u,\bar{S}_{u}\big)\bar{v}_{u}\,du}\bigg\}\bigg].
\end{align}
Suppose that $T<T^{*}$, with $T^{*}$ from \eqref{eq3.40.1} -- \eqref{eq3.40.2}. If $k\leq\frac{1}{2}\hspace{1pt}\varphi(\alpha)\zeta$, by a continuity argument, we can find $\omega_{2}>\alpha$ such that, for all $\omega\in(\alpha,\omega_{2})$,
\begin{equation}\label{eq3.42.1}
k<\frac{1}{2}\hspace{1pt}\varphi(\omega)\zeta \hspace{5pt}\text{ and }\hspace{5pt} T<\frac{1}{\varphi(\omega)\zeta-k}\hspace{1pt}.
\end{equation}
On the other hand, if $k>\frac{1}{2}\hspace{1pt}\varphi(\alpha)\zeta$, by a continuity argument, we can find $\omega_{2}>\alpha$ such that, for all $\omega\in(\alpha,\omega_{2})$,
\begin{equation}\label{eq3.42.2}
k>\frac{1}{2}\hspace{1pt}\varphi(\omega)\zeta \hspace{5pt}\text{ and }\hspace{5pt} T<\frac{4k}{\varphi(\omega)^{2}\zeta^{2}}\hspace{1pt}.
\end{equation}
Henceforth, we argue as in Proposition \ref{Prop3.4.3} and use Theorem \ref{Thm3.3.2} to deduce the finiteness of the supremum over $t$ and $\delta t$ of \eqref{eq3.42}.
\qquad
\end{proof}

Higham et al. \cite{Higham:2002} proved that for a locally Lipschitz SDE, the boundedness of the $p$th moments of the exact and numerical solutions, for some $p>2$, ensures the strong mean square convergence of the Euler--Maruyama method. The existence of moment bounds for explicit Euler approximations, however, remained an open problem. Recently, Hutzenthaler et al. \cite{Hutzenthaler:2011} studied SDEs with superlinearly growing coefficients and proved strong and weak divergence in $L^{p}$ for all $p\geq1$, and hence that the moment-bound assumption is not satisfied. To the best of our knowledge, the uniform boundedness of moments of order greater than 1 of discretization schemes for the Heston model and extensions thereof has not yet been established -- this gap in the literature was also identified in Kloeden and Neuenkirch \cite{Kloeden:2012} -- and our Proposition \ref{Prop3.4.4} is the first result to address this issue.

Since the typical payoff of a FX contract grows at most linearly in the exchange rate, it suffices to know the strong convergence of the discounted process in $L^{1}$ to deduce the convergence of the time-discretization error to zero. The following theorem can be generalized to the $L^{\alpha}$ case relatively easily for all $\alpha\geq1$, upon noticing that the critical time $T^{*}$ from \eqref{eq3.35.1} -- \eqref{eq3.35.2} is always greater than the one from \eqref{eq3.40.1} -- \eqref{eq3.40.2}.

\begin{theorem}\label{Thm3.4.6} Under assumptions {\rm ($\mathcal{A}$1)} to {\rm ($\mathcal{A}$3)}, if $2k_{f}\theta_{f}>\xi_{f}^{2}$ and $T<T^{*}$, where $\zeta=\xi\sigma_{max}$ and
\begin{equation}\label{eq3.43}
T^{*} \equiv \hspace{1pt}\frac{4k}{\zeta^{2}}\hspace{.5pt}\Ind_{\zeta<\hspace{.5pt}2k} \hspace{1pt}+\hspace{3pt} \frac{1}{\zeta-k}\hspace{.5pt}\Ind_{\zeta\geq\hspace{.5pt}2k}\hspace{1pt},
\end{equation}
the discounted process converges strongly in $L^{1}$, i.e.,
\begin{equation}\label{eq3.45}
\plim_{\delta t \to 0}\hspace{1.5pt}\sup_{t \in [0,T]} \E\Big[\big|R_{t}-\bar{R}_{t}\big|\Big] = 0.
\end{equation}
\end{theorem}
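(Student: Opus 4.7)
The plan is to combine uniform $L^p$-boundedness of $R$ and $\bar R$ for some $p>1$ (Propositions~\ref{Prop3.4.3} and~\ref{Prop3.4.4}) with a stopping-time truncation reducing the problem to an event on which all integrands are bounded, so that the strong mean-square results of Propositions~\ref{Prop3.3.5} and~\ref{Prop3.4.1} apply. Applying Propositions~\ref{Prop3.4.3} and~\ref{Prop3.4.4} with $\alpha=1$, and noting that the critical time \eqref{eq3.43} coincides with the one from \eqref{eq3.40.1}--\eqref{eq3.40.2} and is dominated by \eqref{eq3.35.1}--\eqref{eq3.35.2}, the hypothesis $T<T^*$ furnishes $p\in(1,\min(\omega_1,\omega_2))$ and $M<\infty$ with
\begin{equation*}
\sup_{\delta t\in(0,\delta_T)}\,\sup_{t\in[0,T]}\E\bigl[R_t^p+\bar R_t^p\bigr]\leq M.
\end{equation*}

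Fix thresholds $L_s,L_v,L_d,L_f,\kappa$ and let $\tau$ be as in \eqref{eq3.28}. Splitting according to $\{\tau\geq T\}$ and $\{\tau<T\}$ and applying H\"older's inequality with conjugate exponent $q=p/(p-1)$ to the second piece,
\begin{equation*}
\E\bigl[|R_t-\bar R_t|\bigr]\leq \E\bigl[|R_{t\wedge\tau}-\bar R_{t\wedge\tau}|\bigr] + 2M^{1/p}\,\Prob(\tau<T)^{1/q}.
\end{equation*}
As in the proof of Proposition~\ref{Prop3.4.2}, Markov's inequality together with Propositions~\ref{Prop3.3.3} and~\ref{Prop3.3.4} shows that $\Prob(\tau<T)$ can be made smaller than any prescribed quantity by choosing the thresholds large, uniformly in $\delta t$; hence for any $\epsilon>0$ the tail piece is bounded by $\epsilon$ for an appropriate choice of thresholds.

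For the stopped piece, write $R_{t\wedge\tau}=S_0 e^{X_{t\wedge\tau}}$ and $\bar R_{t\wedge\tau}=S_0 e^{\bar X_{t\wedge\tau}}$ for the associated log-exponents. Using $|e^x-e^y|\leq(e^x+e^y)|x-y|$ together with Cauchy--Schwarz,
\begin{equation*}
\E\bigl[|R_{t\wedge\tau}-\bar R_{t\wedge\tau}|\bigr]\leq \E\bigl[(R_{t\wedge\tau}+\bar R_{t\wedge\tau})^2\bigr]^{1/2}\,\E\bigl[(X_{t\wedge\tau}-\bar X_{t\wedge\tau})^2\bigr]^{1/2}.
\end{equation*}
On $[0,\tau]$ the volatility integrands are bounded by $\sigma_{max}\sqrt{L_v}$ and the drift contribution $-r^f_u$ is bounded above by $h_{max}$ (since $g^f_u\geq0$); completing the square and invoking Novikov's criterion, $R_{t\wedge\tau}/S_0$ and $\bar R_{t\wedge\tau}/S_0$ are, up to a deterministic prefactor, true martingales stopped at $\tau$, so the first factor is controlled by a constant. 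For the second factor, I decompose $X_{t\wedge\tau}-\bar X_{t\wedge\tau}$ into its three natural integrals over $[0,t\wedge\tau]$ and apply Cauchy--Schwarz, Fubini and the It\^o isometry; this reduces matters to the $L^2$-convergence on $[0,\tau]$ of $\bar v_u\to v_u$ (Proposition~\ref{Prop3.3.5}), of $\bar g^f_u\to g^f_u$ (following the estimates in the proof of Proposition~\ref{Prop3.4.1}, together with \eqref{eq3.30.9}), and of $\bar\sigma(u,\bar S_u)\to\sigma(u,S_u)$ (from assumption~$(\mathcal{A}2)$ combined with Proposition~\ref{Prop3.4.1}).

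The main obstacle is the stochastic-integral contribution involving $\sigma(u,S_u)\sqrt{v_u}-\bar\sigma(u,\bar S_u)\sqrt{\bar v_u}$, because $x\mapsto\sqrt{x}$ is only $1/2$-H\"older at zero and $\bar\sigma$ is piecewise constant in time. Splitting the integrand as $(\sigma(u,S_u)-\bar\sigma(u,\bar S_u))\sqrt{v_u}+\bar\sigma(u,\bar S_u)(\sqrt{v_u}-\sqrt{\bar v_u})$ isolates the two effects: the first term is handled by assumption~$(\mathcal{A}2)$, Proposition~\ref{Prop3.4.1} and a time-discretisation estimate of the form \eqref{eq3.30.10}, while the second is handled via $|\sqrt{a}-\sqrt{b}|^2\leq|a-b|$ and Proposition~\ref{Prop3.3.5}. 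Collecting the estimates yields $\sup_{t\in[0,T]}\E[|R_{t\wedge\tau}-\bar R_{t\wedge\tau}|]\to0$ as $\delta t\to0$ for each fixed choice of thresholds, and combining with the uniform tail bound gives the conclusion.
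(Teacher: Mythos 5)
Your argument is correct, but it takes a genuinely different route from the paper's. The paper's proof is much shorter: it splits $\E[|R_t-\bar R_t|]$ on the event $A_t=\{|R_t-\bar R_t|>\epsilon\}$, bounds the piece on $A_t^c$ by $\epsilon$, and handles the piece on $A_t$ by H\"older's inequality against the moment bounds of Propositions~\ref{Prop3.4.3} and~\ref{Prop3.4.4}, together with the convergence in probability $\sup_t|R_t-\bar R_t|\to 0$, which the paper obtains by noting that $R$ and $\bar R$ solve the same equations as $S$ and $\bar S$ but with the domestic short rate set to zero, so Proposition~\ref{Prop3.4.2} applies verbatim. Your version instead splits on $\{\tau\geq T\}$ vs.\ $\{\tau<T\}$, controls the tail exactly as in the paper (H\"older plus the Markov/moment bounds that feed into Proposition~\ref{Prop3.4.2}), and then performs a direct $L^2$-analysis of the stopped difference $R_{t\wedge\tau}-\bar R_{t\wedge\tau}$ via $|e^x-e^y|\leq(e^x+e^y)|x-y|$ and Cauchy--Schwarz. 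That stopped analysis is essentially a re-derivation of Proposition~\ref{Prop3.4.1} carried out at the level of the log-discounted exponent rather than the spot FX rate itself, so it costs considerably more work than the paper's one-line appeal to convergence in probability, and the intermediate $L^2$-estimate on the stopped log-difference is not reused elsewhere. The substance is sound; the only wording issue is that the prefactor $\exp\{-2\int_0^{t\wedge\tau}r^f_u\,du+\int_0^{t\wedge\tau}\sigma^2 v_u\,du\}$ multiplying the stopped Dol\'eans exponential is random but bounded on $\{u<\tau\}$ (by $e^{(2h_{max}+\sigma_{max}^2 L_v)T}$), not deterministic as you write; the bound on $\E[R_{t\wedge\tau}^2]$ then follows from $\E[M_{t\wedge\tau}]=1$ for the stopped exponential martingale exactly as you intend.
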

\begin{proof}
Fix $\epsilon > 0$ and define the event $A = \Big\{\big|R_{t}-\bar{R}_{t}\big| > \epsilon\Big\}$, then
\begin{align*}
\sup_{t \in [0,T]} \E\Big[\big|R_{t}-\bar{R}_{t}\big|\Big] \leq \sup_{t \in [0,T]} \E\Big[\big|R_{t}-\bar{R}_{t}\big| \Ind_{A^{c}}\Big] + \sup_{t \in [0,T]} \E\Big[\big|R_{t}-\bar{R}_{t}\big| \Ind_{A}\Big].
\end{align*}
Hence,
\begin{align*}
\sup_{t \in [0,T]} \E\Big[\big|R_{t}-\bar{R}_{t}\big|\Big] \leq \epsilon + \sup_{t \in [0,T]} \E\big[R_{t}\Ind_{A}\big] + \sup_{t \in [0,T]} \E\big[\hspace{.5pt}\bar{R}_{t}\Ind_{A}\big].
\end{align*}
Choosing some $1 \hspace{-1.25pt}<\hspace{-1pt} \omega \hspace{-1pt}<\hspace{-1.1pt} \min\left\{\omega_{1},\omega_{2}\right\}$ and applying H\"older's inequality to the two expectations on the right-hand side with the pair $(p,q) = \big(\omega,\frac{\omega}{\omega-1}\big)$ returns the following upper bound:
\begin{equation*}
\sup_{t\in[0,T]}\E\Big[\big|R_{t}-\bar{R}_{t}\big|\Big] \leq \epsilon + \bigg\{\sup_{t\in[0,T]}\E\big[R_{t}^{\hspace{1pt}\omega}\big]^{\frac{1}{\omega}} + \sup_{t\in[0,T]}\E\big[(\bar{R}_{t})^{\hspace{.5pt}\omega}\big]^{\frac{1}{\omega}}\bigg\} \sup_{t\in[0,T]}\Prob\Big(\big|R_{t}-\bar{R}_{t}\big|>\epsilon\Big)^{1-\frac{1}{\omega}}.
\end{equation*}
Note that if $\zeta\geq2k$, then
\begin{equation*}
\frac{2}{\sqrt{\zeta^{2}-k^{2}}}\bigg[\frac{\pi}{2}+\arctan\bigg(\frac{k}{\sqrt{\zeta^{2}-k^{2}}}\bigg)\bigg] >
\frac{\sqrt{3}}{\sqrt{\zeta^{2}-k^{2}}} \geq
\frac{1}{\sqrt{\zeta^{2}-k^{2}}}\hspace{1pt}\sqrt{\frac{\zeta+k}{\zeta-k}} =
\frac{1}{\zeta-k}\hspace{1pt}.
\end{equation*}
Moreover, if $k<\zeta<2k$, then
\begin{equation*}
\frac{2}{\sqrt{\zeta^{2}-k^{2}}}\bigg[\frac{\pi}{2}+\arctan\bigg(\frac{k}{\sqrt{\zeta^{2}-k^{2}}}\bigg)\bigg] >
\frac{2}{\sqrt{\zeta^{2}-k^{2}}} \geq
\frac{4}{\sqrt{\zeta^{2}-k^{2}}}\hspace{1pt}\sqrt{\frac{k^{2}}{\zeta^{2}}\bigg(1-\hspace{1pt}\frac{k^{2}}{\zeta^{2}}\bigg)} =
\frac{4k}{\zeta^{2}}\hspace{1pt}.
\end{equation*}
Therefore, Propositions \ref{Prop3.4.3} and \ref{Prop3.4.4} (with $\alpha=1$) ensure the boundedness of moments of order $\omega$ of the discounted process and its approximation. Furthermore, the convergence in probability of the discounted process is a simple consequence of Proposition \ref{Prop3.4.2}, by taking the domestic short rate to be zero. Finally, taking $\epsilon$ sufficiently small leads to the conclusion. 
\qquad
\end{proof}

\subsection{Option valuation}\label{subsec:option}

We now examine the convergence of Monte Carlo estimators for computing FX option prices when the dynamics of the exchange rate are governed by the Heston--2CIR\scalebox{.9}{\raisebox{.5pt}{++}} SLV model and assumptions {\rm ($\mathcal{A}$1)} to {\rm ($\mathcal{A}$3)} are satisfied. As an aside, note that we discussed in Section \ref{sec:setup} how other derivative pricing models, including popular models in equity markets, can be formulated as special cases. First, we consider European options.

\begin{theorem}\label{Thm4.4.1} Let $P = \E\Big[e^{-\int_{0}^{T}{r^{d}_{t} dt}}\big(K-S_{T}\big)^{+}\Big]$ be the arbitrage-free price of a European put option and $\hspace{1.5pt}\bar{P} = \E\Big[e^{-\int_{0}^{T}{\bar{r}^{d}_{t} dt}}\big(K-\bar{S}_{T}\big)^{+}\Big]$ its approximation. If $2k_{f}\theta_{f}>\xi_{f}^{2}$, then
\begin{equation}\label{eq4.1}
\plim_{\delta t \to 0} \left|P - \bar{P}\right| = 0.
\end{equation}
\end{theorem}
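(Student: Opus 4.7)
The crucial structural feature is that the put payoff is bounded: $0\leq (K-S_T)^+\leq K$ and $0\leq(K-\bar S_T)^+\leq K$. Moreover, under assumption $(\mathcal{A}3)$ together with the non-negativity of $g^{d}$ and $\bar g^{d}$, one has $r^{d}_{t}\geq -h_{max}$ and $\bar r^{d}_{t}\geq -h_{max}$, so both discount factors are bounded above by $e^{h_{max}T}$. Therefore the whole integrand on either side is dominated by $Ke^{h_{max}T}$, and a bounded-convergence argument will suffice; no high moment bound on $S_T$ is needed, and in particular the condition $T<T^{*}$ of Theorem~\ref{Thm3.4.6} is not required here.

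I would split the error into a ``discount'' part and a ``payoff'' part via the triangle inequality,
\begin{align*}
\big|P-\bar P\big|
&\leq K\,\E\!\left[\,\Big|e^{-\int_{0}^{T}r^{d}_{t}\,dt}-e^{-\int_{0}^{T}\bar r^{d}_{t}\,dt}\Big|\,\right]
+ e^{h_{max}T}\,\E\!\left[\,\big|(K-S_{T})^{+}-(K-\bar S_{T})^{+}\big|\,\right].
\end{align*}
For the discount part, the mean value theorem applied to $x\mapsto e^{-x}$ on the interval $[-h_{max}T,\infty)$ together with $r^{d}-\bar r^{d}=g^{d}-\bar g^{d}$ gives
\begin{equation*}
\Big|e^{-\int_{0}^{T}r^{d}_{t}\,dt}-e^{-\int_{0}^{T}\bar r^{d}_{t}\,dt}\Big|
\leq e^{h_{max}T}\int_{0}^{T}\big|g^{d}_{t}-\bar g^{d}_{t}\big|\,dt,
\end{equation*}
and by Fubini and the Cauchy--Schwarz inequality this is controlled by $T^{3/2}e^{h_{max}T}\big(\sup_{t\in[0,T]}\E[(g^{d}_{t}-\bar g^{d}_{t})^{2}]\big)^{1/2}$, which tends to zero by Proposition~\ref{Prop3.3.5} applied to the CIR process $g^{d}$.

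For the payoff part, the function $x\mapsto(K-x)^{+}$ is globally $1$-Lipschitz with range in $[0,K]$, so
\begin{equation*}
\big|(K-S_{T})^{+}-(K-\bar S_{T})^{+}\big|\leq \min\!\big(|S_{T}-\bar S_{T}|,\,K\big)\leq K.
\end{equation*}
Proposition~\ref{Prop3.4.2} (which requires the Feller condition $2k_{f}\theta_{f}>\xi_{f}^{2}$ and assumptions $(\mathcal{A}1)$--$(\mathcal{A}3)$, all of which are in force) gives $\bar S_{T}\to S_{T}$ in probability as $\delta t\to 0$, so the random variable above converges to zero in probability while remaining bounded by $K$. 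The bounded convergence theorem then delivers the desired convergence to zero of the second expectation, and combining the two bounds finishes the proof. The only mildly subtle point is the passage from in-probability convergence of $\bar S_{T}$ to an expectation bound, which the deterministic dominating constant $K$ resolves immediately; the Feller condition enters only through its role in Proposition~\ref{Prop3.4.2}.
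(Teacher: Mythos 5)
Your proof is correct and uses the same top-level decomposition as the paper (discount-factor error plus payoff error), and the discount term is handled by essentially the same observation (the paper uses $|e^{-x}-e^{-y}|\leq|x-y|$ for $x,y\geq0$ and Fubini; you use the mean value theorem and Cauchy--Schwarz, both landing on Proposition~\ref{Prop3.3.5}). Where you genuinely diverge is the payoff term. You exploit the global $1$-Lipschitz property of $x\mapsto(K-x)^{+}$ together with the deterministic bound $K$, so the error is dominated by $\min(|S_T-\bar S_T|,K)$ and bounded convergence plus Proposition~\ref{Prop3.4.2} finishes it — a clean two-line argument. The paper instead introduces the events $A=\{S_T<K\}$ and $\bar A=\{\bar S_T<K\}$ and decomposes the expectation over the four intersections $A\cap\bar A$, $A\cap\bar A^{c}$, $A^{c}\cap\bar A$, $A^{c}\cap\bar A^{c}$, then bounds each piece by a combination of $\Prob(|S_T-\bar S_T|\geq\delta)$ and small-interval probabilities $\Prob(K-\delta<S_T<K)$, etc. The two routes are logically equivalent here, and yours is more economical; the paper's heavier event decomposition is really the template it reuses for the barrier option proofs (Theorems~\ref{Thm4.4.4}--\ref{Thm4.4.7}), where the indicators are genuinely path-dependent and the Lipschitz shortcut is no longer available. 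Your observation that $T<T^{*}$ is not needed is correct and consistent with the theorem statement.
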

\begin{proof}
A simple string of inequalities gives the following upper bound:
\begin{align}\label{eq4.1'}
\left|P - \bar{P}\right| &\leq \E\Big[\big|\big\{e^{-\int_{0}^{T}{r^{d}_{t} dt}}-e^{-\int_{0}^{T}{\bar{r}^{d}_{t} dt}}\big\}\big(K-S_{T}\big)^{+} + e^{-\int_{0}^{T}{\bar{r}^{d}_{t} dt}}\big\{\hspace{-1pt}\big(K-S_{T}\big)^{+}-\big(K-\bar{S}_{T}\big)^{+}\big\}\big|\Big] \nonumber \\[2pt]
&\leq Ke^{h_{max}T}\E\Big[\big|e^{-\int_{0}^{T}{g^{d}_{t} dt}}-e^{-\int_{0}^{T}{\bar{g}^{d}_{t} dt}}\big|\Big] + e^{h_{max}T}\E\left[\big|\big(K-S_{T}\big)^{+}-\big(K-\bar{S}_{T}\big)^{+}\big|\right].
\end{align}
However, for any non-negative numbers $x$ and $y$, $\left|e^{-x}-e^{-y}\right| \leq |x-y|$, and so we can use Fubini's theorem to obtain an upper bound for the first expectation,
\begin{align}\label{eq4.2}
\sup_{t\in[0,T]}\E\Big[\big|e^{-\int_{t}^{T}{g^{d}_{u} du}}-e^{-\int_{t}^{T}{\bar{g}^{d}_{u} du}}\big|\Big] &\leq \sup_{t\in[0,T]}\int_{t}^{T}{\E\left[|g^{d}_{u}-\bar{g}^{d}_{u}|\right]}du 
\hspace{1pt}\leq\hspace{1pt} T\hspace{-1pt}\sup_{t \in [0,T]}\E\left[|g^{d}_{t}-\bar{g}^{d}_{t}|\right].
\end{align}
The right-hand side tends to zero by Proposition \ref{Prop3.3.5}. Define the two events $A = \big\{S_{T}<K\big\}$ and $\bar{A} = \big\{\bar{S}_{T}<K\big\}$, and denote the last expectation in \eqref{eq4.1'} by $\mathbb{J}$. Then
\begin{align*}
\mathbb{J} = \E\Big[\big|\big(K-S_{T}\big)^{+}-\big(K-\bar{S}_{T}\big)^{+}\big|\big(\Ind_{A\cap \bar{A}}+\Ind_{A\cap \bar{A}^{c}}+\Ind_{A^{c}\cap \bar{A}}+\Ind_{A^{c}\cap \bar{A}^{c}}\big)\Big].
\end{align*}
Therefore,
\begin{align}\label{eq4.3}
\mathbb{J} &\leq \E\Big[\big|S_{T} - \bar{S}_{T}\big|\Ind_{A\cap \bar{A}}\Big] + \E\Big[\big(K-S_{T}\big)\Ind_{A\cap \bar{A}^{c}}\Big] + \E\Big[\big(K-\bar{S}_{T}\big)\Ind_{A^{c}\cap \bar{A}}\Big] \nonumber \\[3pt]
&\leq \E\Big[\big|S_{T} - \bar{S}_{T}\big|\Ind_{A\cap \bar{A}}\Big] + K\Prob\left(A\cap \bar{A}^{c}\right) + K\Prob\left(A^{c}\cap \bar{A}\right).
\end{align}
Let $\delta$ be an arbitrary positive number, then we have the following inclusion of events:
\begin{align*}
A\cap \bar{A}^{c} &= \Big(\big\{S_{T}\leq K-\delta\big\} \cup \big\{K-\delta < S_{T} < K\big\}\Big) \cap \big\{\bar{S}_{T} \geq K\big\} \\[3pt]
&\subseteq \Big(\big\{S_{T}\leq K-\delta\big\} \cap \big\{\bar{S}_{T} \geq K\big\}\Big) \cup \big\{K-\delta < S_{T} < K\big\} \\[3pt]
&\subseteq \Big\{\big|S_{T}-\bar{S}_{T}\big| \geq \delta\Big\} \cup \big\{K-\delta < S_{T} < K\big\}.
\end{align*}
In terms of probabilities of events, we have
\begin{equation}\label{eq4.4}
\Prob\left(A\cap \bar{A}^{c}\right) \leq \Prob\Big(\big|S_{T}-\bar{S}_{T}\big| \geq \delta\Big) + \Prob\big(K-\delta < S_{T} < K\big),\hspace{1em} \forall \delta > 0.
\end{equation}
We can bound the second probability from above in a similar fashion,
\begin{align}\label{eq4.5}
A^{c}\cap \bar{A} &\subseteq \Big\{\big|S_{T}-\bar{S}_{T}\big| \geq \delta\Big\} \cup \big\{K \leq S_{T} < K+\delta\big\} \nonumber \\[4pt]
\Rightarrow \Prob\left(A^{c}\cap \bar{A}\right) &\leq \Prob\Big(\big|S_{T}-\bar{S}_{T}\big| \geq \delta\Big) + \Prob\big(K \leq S_{T} < K+\delta\big),\hspace{1em} \forall \delta > 0.
\end{align}
For a suitable choice of $\delta$, the last terms on the right-hand side of \eqref{eq4.4} and \eqref{eq4.5} can be made arbitrarily small, whereas the first terms tend to zero by Proposition \ref{Prop3.4.2}. Therefore, the two probabilities in \eqref{eq4.3} converge to zero as $\delta t \to 0$. Finally, fix $\epsilon > 0$ and let $B = \big\{|\hspace{1pt}S_{T}-\bar{S}_{T}| > \epsilon\big\}$. We can bound the expectation on the right-hand side of \eqref{eq4.3} as follows:
\begin{align}\label{eq4.6}
\E\Big[\big|S_{T}-\bar{S}_{T}\big|\Ind_{A\cap \bar{A}}\Big] &\leq \E\Big[\big|S_{T}-\bar{S}_{T}\big|\Ind_{A\cap \bar{A}}\Ind_{B^{c}}\Big] + \E\Big[\big|S_{T}-\bar{S}_{T}\big|\Ind_{A\cap \bar{A}}\Ind_{B}\Big] \nonumber \\[4pt]
&\leq K\Prob\Big(\big|S_{T}-\bar{S}_{T}\big| > \epsilon\Big) + \epsilon.
\end{align}
Taking the limit as $\delta t \to 0$, employing Proposition \ref{Prop3.4.2} and making use of the fact that $\epsilon$ can be made arbitrarily small leads to the conclusion. 
\qquad
\end{proof}

\begin{theorem}\label{Thm4.4.2} Let $C = \E\Big[e^{-\int_{0}^{T}{r^{d}_{t} dt}}\big(S_{T}-K\big)^{+}\Big]$ be the arbitrage-free price of a European call and $\hspace{1.5pt}\bar{C} = \E\Big[e^{-\int_{0}^{T}{\bar{r}^{d}_{t} dt}}\big(\bar{S}_{T}-K\big)^{+}\Big]$ its approximation. If $2k_{f}\theta_{f}>\xi_{f}^{2}$ and $T<T^{*}$, with $T^{*}$ from \eqref{eq3.43}, then
\begin{equation}\label{eq4.7}
\plim_{\delta t \to 0} \left|C - \bar{C}\right| = 0.
\end{equation}
\end{theorem}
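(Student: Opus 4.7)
The cleanest route is via put--call parity. The identity $(S_T-K)^+ - (K-S_T)^+ = S_T - K$ gives
\begin{equation*}
C - P = \E\big[D^d_T S_T\big] - K\,\E\big[D^d_T\big],
\end{equation*}
and similarly $\bar{C} - \bar{P} = \E\big[\bar{D}^d_T \bar{S}_T\big] - K\,\E\big[\bar{D}^d_T\big]$, where $D^d_T = e^{-\int_0^T r^d_t\,dt}$ and $\bar{D}^d_T = e^{-\int_0^T \bar{r}^d_t\,dt}$. A direct computation shows that the domestic discount factor cancels exactly the $r^d$-drift of $S$, so that $D^d_T S_T = R_T$ with $R_T$ the process defined in \eqref{eq3.34.2}; the same identity holds at the discrete level, $\bar{D}^d_T \bar{S}_T = \bar{R}_T$ via \eqref{eq3.34.3}. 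Subtracting and applying the triangle inequality then yields
\begin{equation*}
\big|C - \bar{C}\big| \;\leq\; \big|P - \bar{P}\big| \;+\; \E\big|R_T - \bar{R}_T\big| \;+\; K\,\E\big|D^d_T - \bar{D}^d_T\big|,
\end{equation*}
and the plan is to show that each of the three terms on the right-hand side vanishes as $\delta t \to 0$.

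The first term tends to zero by Theorem \ref{Thm4.4.1}, which only requires the Feller condition. The second is precisely the conclusion of Theorem \ref{Thm3.4.6}, and this is where the explosion-time condition $T < T^*$ from \eqref{eq3.43} enters the argument: it guarantees the uniform $L^\omega$ bounds on $R_T$ and $\bar{R}_T$ from Propositions \ref{Prop3.4.3} and \ref{Prop3.4.4} needed to absorb the unboundedness of the call payoff. For the third term, I would note that $g^d, \bar{g}^d \geq 0$ and $|h_d| \leq h_{max}$ imply $\min\!\big(\int_0^T r^d_t\,dt,\,\int_0^T \bar{r}^d_t\,dt\big) \geq -h_{max}T$, so the elementary bound $|e^{-a} - e^{-b}| \leq e^{h_{max} T}|a-b|$ applies with $a - b = \int_0^T (g^d_t - \bar{g}^d_t)\,dt$. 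Fubini's theorem and the Cauchy--Schwarz inequality then give $\E|D^d_T - \bar{D}^d_T| \leq e^{h_{max} T}\,T\,\sup_{t \in [0,T]} \E|g^d_t - \bar{g}^d_t|$, which tends to zero by Proposition \ref{Prop3.3.5}.

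The substantive content of the theorem therefore resides entirely in Theorem \ref{Thm3.4.6}; the restriction $T < T^*$ appears here but not in Theorem \ref{Thm4.4.1} because the put payoff is uniformly bounded by $K$ whereas the call payoff is not, so we need uniform moment bounds of some order $\omega > 1$ to close the argument. The only real obstacle is bookkeeping: one must verify the algebraic identity $D^d_T S_T = R_T$ (and its discrete analogue) to put the problem in a form where Theorem \ref{Thm3.4.6} is directly applicable. An alternative would be to mirror the proof of Theorem \ref{Thm4.4.1} verbatim, splitting $\E\big|D^d_T(S_T-K)^+ - \bar{D}^d_T(\bar{S}_T-K)^+\big|$ across the event $\{|S_T - \bar{S}_T| > \epsilon\}$, applying Proposition \ref{Prop3.4.2} on the good set and H\"older's inequality with Propositions \ref{Prop3.4.3}--\ref{Prop3.4.4} on the tail; this works equally well but essentially re-derives Theorem \ref{Thm3.4.6} in situ, so the parity decomposition is preferable.
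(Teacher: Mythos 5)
Your proposal is correct, and it takes a genuinely different route from the paper's. The paper's proof is short and self-contained: it rewrites $D^d_T(S_T-K)^+ = (R_T - K D^d_T)^+$ and $\bar D^d_T(\bar S_T-K)^+ = (\bar R_T - K\bar D^d_T)^+$, uses $|a^+ - b^+|\le|a-b|$, and then splits into exactly two terms controlled by \eqref{eq4.2} and Theorem~\ref{Thm3.4.6}; it never appeals to put--call parity or to the put result. You instead reduce the call to a put via parity and discharge the bounded part with Theorem~\ref{Thm4.4.1}, which yields three terms; this is a clean, modular argument, though it carries an extra dependency (Theorem~\ref{Thm4.4.1}) that the paper's argument avoids by replacing the parity identity and the put theorem with the single Lipschitz bound on the hinge function. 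You also correctly identify that the identity $D^d_T S_T = R_T$ (and its discrete analogue) is what makes $T<T^*$ enter through Theorem~\ref{Thm3.4.6}, and your treatment of the discount-factor difference reproduces \eqref{eq4.2}. One small point worth stating explicitly if you write this up: put--call parity requires $\E[D^d_T S_T]=\E[R_T]<\infty$ and $\E[\bar D^d_T\bar S_T]=\E[\bar R_T]<\infty$, which is precisely what Propositions~\ref{Prop3.4.3} and~\ref{Prop3.4.4} guarantee under $T<T^*$; in the paper's route this finiteness is absorbed silently into the Lipschitz step, whereas in your route it is an explicit precondition for the parity identity to make sense.
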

\begin{proof}
A simple string of inequalities gives the following upper bound:
\begin{align}\label{eq4.8}
\left|C - \bar{C}\right| &\leq \E\Big[\big|\big(R_{T}-Ke^{-\int_{0}^{T}{r^{d}_{t}dt}}\big)^{+} - \big(\hspace{.5pt}\bar{R}_{T}-Ke^{-\int_{0}^{T}{\bar{r}^{d}_{t}dt}}\big)^{+}\big|\Big] \nonumber \\[2pt]
&\leq Ke^{h_{max}T}\E\Big[\big|e^{-\int_{0}^{T}{g^{d}_{t} dt}}-e^{-\int_{0}^{T}{\bar{g}^{d}_{t} dt}}\big|\Big] + \E\Big[\big|R_{T} - \bar{R}_{T}\big|\Big].
\end{align}
The two expectations on the right-hand side tend to zero as $\delta t \to 0$ from \eqref{eq4.2} and Theorem \ref{Thm3.4.6}, respectively. 
\qquad
\end{proof}

Asian options depend on the average exchange rate over a predetermined time period. Because the average is less volatile than the underlying rate, Asian options are usually less expensive than their European counterparts and are commonly used in currency and commodity markets, for instance, to reduce the foreign currency exposure of a corporation expecting payments in foreign currency. For any $0\leq s\leq t\leq T$, define the discount factors:
\begin{equation}\label{eq4.8'}
D_{s,\hspace{1pt}t}=e^{-\int_{s}^{t}{r^{d}_{u}du}} \hspace{3pt}\text{ and }\hspace{3pt} \bar{D}_{s,\hspace{1pt}t}=e^{-\int_{s}^{t}{\bar{r}^{d}_{u}du}}.
\end{equation}

\begin{theorem}\label{Thm4.4.3} Consider a fixed strike Asian option with arbitrage-free price
\begin{align*}
U &= \E\Big[e^{-\int_{0}^{T}{r^{d}_{t} dt}}\big[\psi(A(0,T)-K)\big]^{+}\Big], \\[2pt]
\hspace{1.5pt}\bar{U} &= \E\Big[e^{-\int_{0}^{T}{\bar{r}^{d}_{t} dt}}\big[\psi(\bar{A}(0,T)-K)\big]^{+}\Big].
\end{align*}
 If $2k_{f}\theta_{f}>\xi_{f}^{2}$ and $T<T^{*}$, with $T^{*}$ from \eqref{eq3.43}, then
\begin{equation}\label{eq4.9}
\plim_{\delta t \to 0} \left|U - \bar{U}\right| = 0.
\end{equation}
Here, $A(0,T)$ represents the arithmetic average and $\psi=\pm1$ depending on the payoff (call or put). For continuous monitoring, $A(0,T) = \frac{1}{T}\hspace{-.1em}\int_{0}^{T}{\hspace{-.1em}S_{t}\hspace{.1em}dt}$ and $\bar{A}(0,T) = \frac{1}{T}\hspace{-.1em}\int_{0}^{T}{\hspace{-.1em}\bar{S}_{t}\hspace{1pt}dt}$.
\end{theorem}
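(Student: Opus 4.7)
The plan is to mirror the European-call proof (Theorem \ref{Thm4.4.2}), combining the identity $e^{-\int_{0}^{T}r^{d}_{u}du}S_{t}=R_{t}\hspace{.5pt}D_{t,T}$ (which follows from $S_{t}=R_{t}\hspace{.5pt}e^{\int_{0}^{t}r^{d}_{u}du}$ and the definition of $D_{s,t}$ in \eqref{eq4.8'}) with Fubini's theorem to write
\begin{equation*}
D_{0,T}\,A(0,T) \;=\; \frac{1}{T}\int_{0}^{T}R_{t}\hspace{.5pt}D_{t,T}\,dt,
\end{equation*}
and analogously $\bar{D}_{0,T}\bar{A}(0,T)=\frac{1}{T}\int_{0}^{T}\bar{R}_{t}\bar{D}_{t,T}\,dt$. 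The put case ($\psi=-1$) can be handled first and very quickly: the payoff is bounded by $K$, so
\begin{equation*}
|U-\bar{U}|\,\leq\,K\,\E\big[|D_{0,T}-\bar{D}_{0,T}|\big]\,+\,e^{h_{max}T}\,\E\big[|A(0,T)-\bar{A}(0,T)|\wedge K\big],
\end{equation*}
the first term tending to zero by $|e^{-x}-e^{-y}|\leq|x-y|$ and Proposition \ref{Prop3.3.5}, the second by $|A-\bar{A}|\leq\sup_{t\in[0,T]}|S_{t}-\bar{S}_{t}|$ together with the uniform convergence in probability from Proposition \ref{Prop3.4.2} (using $E[X\wedge K]\leq\epsilon+K\Prob(X>\epsilon)$ and letting $\delta t\to0$ and then $\epsilon\to0$).

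For the call case ($\psi=+1$), applying the Lipschitz property $|x^{+}-y^{+}|\leq|x-y|$ and the triangle inequality gives
\begin{equation*}
|U-\bar{U}|\,\leq\,\frac{1}{T}\int_{0}^{T}\E\big[|R_{t}\hspace{.5pt}D_{t,T}-\bar{R}_{t}\bar{D}_{t,T}|\big]\,dt\,+\,K\,\E\big[|D_{0,T}-\bar{D}_{0,T}|\big].
\end{equation*}
The last term is handled exactly as in \eqref{eq4.2}. For the integrand of the first term, I would split
\begin{equation*}
|R_{t}\hspace{.5pt}D_{t,T}-\bar{R}_{t}\bar{D}_{t,T}|\,\leq\,D_{t,T}\,|R_{t}-\bar{R}_{t}|\,+\,\bar{R}_{t}\,|D_{t,T}-\bar{D}_{t,T}|,
\end{equation*}
and bound each piece in expectation. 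The first piece satisfies $\E[D_{t,T}|R_{t}-\bar{R}_{t}|]\leq e^{h_{max}T}\E[|R_{t}-\bar{R}_{t}|]$, which tends to zero uniformly in $t$ by Theorem \ref{Thm3.4.6}. The second piece is the main obstacle, since neither factor is uniformly bounded and we cannot simply pull $\bar{R}_{t}$ out.

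The trick for the cross-term is Hölder's inequality with an exponent $p\in(1,\omega_{2})$ and conjugate $q$: Proposition \ref{Prop3.4.4} (with $\alpha=1$, available because $T<T^{*}$ and $\zeta$ satisfies the same threshold as in \eqref{eq3.43}) furnishes a uniform-in-$\delta t$ bound on $\sup_{t\in[0,T]}\E[\bar{R}_{t}^{\hspace{.5pt}p}]$, and the remaining $L^{q}$-norm can be controlled by exploiting that $|D_{t,T}-\bar{D}_{t,T}|\leq e^{h_{max}T}$ is bounded, hence
\begin{equation*}
\E\big[|D_{t,T}-\bar{D}_{t,T}|^{q}\big]\,\leq\,(e^{h_{max}T})^{q-1}\,\E\big[|D_{t,T}-\bar{D}_{t,T}|\big]\,\leq\,C\int_{0}^{T}\E\big[|g^{d}_{u}-\bar{g}^{d}_{u}|\big]\,du,
\end{equation*}
which vanishes uniformly in $t$ by Proposition \ref{Prop3.3.5}. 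Combining all bounds and integrating over $[0,T]$ yields $|U-\bar{U}|\to0$. The discrete-monitoring variant is immediate, since it replaces the integral average by a finite Riemann sum of the same integrand. The hardest part of the argument is precisely the Hölder step for the unbounded cross-term, which is where the moment bound of Proposition \ref{Prop3.4.4} becomes indispensable and explains why the condition $T<T^{*}$ is carried over from the European call case.
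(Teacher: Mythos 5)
Your proposal is correct and follows essentially the same route as the paper: Fubini to express the discounted average as $\frac{1}{T}\int_0^T D_{t,T}R_t\,dt$, the triangle inequality to split the strike term from the average term, and then a H\"older argument to control the ``unbounded times small'' cross-term. There are, however, two genuine variations worth noting. First, you treat the put case separately by exploiting that the payoff is bounded by $K$, which lets you dispense with the moment bounds (and hence with the $T<T^*$ hypothesis) for $\psi=-1$; the paper instead treats $\psi=\pm1$ uniformly via the $1$-Lipschitz property of $x\mapsto[\psi(x-K)]^+$, and so invokes the moment estimates in both cases even though the put does not require them. Second, in the product decomposition of $D_{t,T}R_t-\bar{D}_{t,T}\bar{R}_t$ you keep $\bar{R}_t$ with the discount-factor difference and invoke Proposition~\ref{Prop3.4.4}, whereas the paper keeps $R_t$ and invokes Proposition~\ref{Prop3.4.3}. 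Your choice is arguably the more natural one since the stated threshold $T^*$ in the theorem is exactly the one from Proposition~\ref{Prop3.4.4}; the paper's choice works because, as shown in the proof of Theorem~\ref{Thm3.4.6}, the threshold in \eqref{eq3.43} is more restrictive than the one in \eqref{eq3.35.1}--\eqref{eq3.35.2}. A further small difference: for the $L^{q}$ norm of the discount-factor difference you reduce to $L^{1}$ via the a.s.\ bound $|D_{t,T}-\bar{D}_{t,T}|\leq e^{h_{max}T}$, while the paper first factors out $e^{-\int_t^T h_d\,du}$ so the remaining quantity $|e^{-\int_t^T g^d_u\,du}-e^{-\int_t^T\bar{g}^d_u\,du}|$ lies in $[0,1]$ and the power inequality $|x|^{\gamma}\leq|x|$ applies directly. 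Both ways of handling the $L^{q}$ norm are correct.
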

\begin{proof}
The absolute difference can be bounded from above by
\begin{align*}
\left|U - \bar{U}\right| \leq \E\Big[\big|\big[\psi(D_{0,T}A(0,T)-KD_{0,T})\big]^{+} - \big[\psi(\bar{D}_{0,T}\bar{A}(0,T)-K\bar{D}_{0,T})\big]^{+}\big|\Big].
\end{align*}
Therefore, we end up with the following upper bound:
\begin{align}\label{eq4.10}
\left|U - \bar{U}\right| \leq Ke^{h_{max}T}\E\Big[\big|e^{-\int_{0}^{T}{g^{d}_{t} dt}}-e^{-\int_{0}^{T}{\bar{g}^{d}_{t} dt}}\big|\Big] + \E\Big[\big|D_{0,T}A(0,T) - \bar{D}_{0,T}\bar{A}(0,T)\big|\Big].
\end{align}
We deduced the convergence of the first expectation in \eqref{eq4.2}. Using Fubini's theorem,
\begin{align*}
\E\Big[\big|D_{0,T}A(0,T) \hspace{-.5pt}-\hspace{-.5pt} \bar{D}_{0,T}\bar{A}(0,T)\big|\Big] &\leq \frac{1}{T}\E\bigg[\int_{0}^{T}{\big|D_{0,T}S_{t} - \bar{D}_{0,T}\bar{S}_{t}\big|\hspace{1.5pt}dt}\bigg] \\[2pt]
& \leq \sup_{t\in[0,T]}\E\Big[\big|D_{t,T}R_{t} - \bar{D}_{t,T}\bar{R}_{t}\big|\Big].
\end{align*}
The triangle inequality leads to the following upper bound:
\begin{align}\label{eq4.12}
\sup_{t\in[0,T]}\E\Big[\big|D_{t,T}R_{t} - \bar{D}_{t,T}\bar{R}_{t}\big|\Big] &\leq e^{h_{max}T}\sup_{t \in [0,T]}\E\Big[R_{t}\big|e^{-\int_{t}^{T}{g^{d}_{u} du}}-e^{-\int_{t}^{T}{\bar{g}^{d}_{u} du}}\big|\Big] \nonumber\\[1pt] 
&+ e^{h_{max}T}\sup_{t \in [0,T]}\E\Big[\big|R_{t}-\bar{R}_{t}\big|\Big].
\end{align}
Since both $g^{d}$ and $\bar{g}^{d}$ are non-negative processes, for any $\gamma$ greater than one we have
\begin{equation*}
\big|e^{-\int_{t}^{T}{g^{d}_{u} du}}-e^{-\int_{t}^{T}{\bar{g}^{d}_{u} du}}\big|^{\gamma} \leq \big|e^{-\int_{t}^{T}{g^{d}_{u} du}}-e^{-\int_{t}^{T}{\bar{g}^{d}_{u} du}}\big|\hspace{1pt},\hspace{1em} \forall\hspace{1pt} t\in[0,T].
\end{equation*}
Applying H\"older's inequality to the first expectation on the right-hand side of \eqref{eq4.12} with the pair $(\omega,\gamma)$, where $1<\omega<\omega_{1}$ and $\gamma = \omega/(\omega\hspace{-1pt}-\hspace{-1pt}1)$, and using the last inequality, we find that
\begin{equation}\label{eq4.13}
\sup_{t \in [0,T]}\E\Big[R_{t}\big|e^{-\int_{t}^{T}{g^{d}_{u} du}}-e^{-\int_{t}^{T}{\bar{g}^{d}_{u} du}}\big|\Big] \leq \sup_{t \in [0,T]}\E\big[R_{t}^{\omega}\big]^{\frac{1}{\omega}} \sup_{t \in [0,T]}\E\Big[\big|e^{-\int_{t}^{T}{g^{d}_{u} du}}-e^{-\int_{t}^{T}{\bar{g}^{d}_{u} du}}\big|\Big]^{\frac{1}{\gamma}}.
\end{equation}
The convergence of the first term on the right-hand side of \eqref{eq4.12} is a consequence of \eqref{eq4.2} and Proposition \ref{Prop3.4.3} (with $\alpha=1$), whereas the convergence of the second term is due to Theorem \ref{Thm3.4.6}. In case of discrete monitoring or a floating strike, we follow the exact same steps. 
\qquad
\end{proof}

Barrier options continue to gain popularity in many over-the-counter markets, including the FX market. Their popularity can be explained by two key factors. First, barrier options are useful in limiting the risk exposure of an investor in the FX market. Second, they offer additional flexibility and can match an investor's view on the market for a lower price than a vanilla option.

\begin{theorem}\label{Thm4.4.4} Consider an up-and-out barrier call with arbitrage-free price
\begin{align*}
U &= \E\Big[e^{-\int_{0}^{T}{r^{d}_{t}dt}}\big(S_{T}-K\big)^{+}\Ind_{\left\{\sup_{t \in [0,T]}S_{t}\leq B\right\}}\Big], \\[2pt]
\bar{U} &= \E\Big[e^{-\int_{0}^{T}{\bar{r}^{d}_{t}dt}}\big(\bar{S}_{T}-K\big)^{+}\Ind_{\big\{\sup_{t \in [0,T]} \bar{S}_{t}\leq B\big\}}\Big],
\end{align*}
where $K$ is the strike price and $B$ is the barrier. If $2k_{f}\theta_{f}>\xi_{f}^{2}$, then
\begin{equation}\label{eq4.14}
\plim_{\delta t \to 0} \left|U - \bar{U}\right| = 0.
\end{equation}
\end{theorem}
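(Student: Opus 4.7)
The key observation is that the up-and-out barrier call has a payoff bounded by $B$: on the event $A=\{\sup_{t\in[0,T]} S_{t}\le B\}$ we have $S_{T}\le B$, and similarly for $\bar{S}$ on $\bar{A}=\{\sup_{t\in[0,T]} \bar{S}_{t}\le B\}$. This puts us in the same bounded-payoff regime as the European put of Theorem \ref{Thm4.4.1}, so no moment condition beyond $L^{1}$ is required and no restriction on $T$ appears. My plan is to mimic the put proof, with $\Ind_{S_{T}<K}$ replaced by the supremum indicator $\Ind_{A}$, and to invoke Proposition \ref{Prop3.4.2} (uniform convergence in probability on $[0,T]$) wherever Proposition \ref{Prop3.4.1} was used there at a single time.

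First I would split, as in \eqref{eq4.1'}, via
\begin{equation*}
|U-\bar{U}|\le Be^{h_{max}T}\,\E\Big[\big|e^{-\int_{0}^{T}{g^{d}_{t}dt}}-e^{-\int_{0}^{T}{\bar{g}^{d}_{t}dt}}\big|\Big]+e^{h_{max}T}\,\E\Big[\big|(S_{T}-K)^{+}\Ind_{A}-(\bar{S}_{T}-K)^{+}\Ind_{\bar{A}}\big|\Big].
\end{equation*}
The first expectation vanishes by \eqref{eq4.2} and Proposition \ref{Prop3.3.5}. For the second, I decompose over $(A\cap\bar{A})\cup(A\cap\bar{A}^{c})\cup(A^{c}\cap\bar{A})$ (dropping $A^{c}\cap\bar{A}^{c}$ where both payoffs vanish) and use $|(x-K)^{+}-(y-K)^{+}|\le|x-y|$ together with the bound $B$ on each payoff to obtain
\begin{equation*}
\E\Big[\big|S_{T}-\bar{S}_{T}\big|\Ind_{A\cap\bar{A}}\Big]+B\,\Prob(A\cap\bar{A}^{c})+B\,\Prob(A^{c}\cap\bar{A}).
\end{equation*}

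Next, for the probabilities I would argue by inclusion exactly as in \eqref{eq4.4}--\eqref{eq4.5}: if $A\cap\bar{A}^{c}$ occurs and $\sup_{t}S_{t}\le B-\delta$, then at any crossing time of $\bar{S}$ through $B$ we have $|S_{t}-\bar{S}_{t}|>\delta$, whence
\begin{equation*}
A\cap\bar{A}^{c}\subseteq\Big\{\sup_{t\in[0,T]}\big|S_{t}-\bar{S}_{t}\big|\ge\delta\Big\}\cup\Big\{B-\delta<\sup_{t\in[0,T]}S_{t}\le B\Big\},
\end{equation*}
with a symmetric inclusion for $A^{c}\cap\bar{A}$. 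The first event has vanishing probability as $\delta t\to 0$, for each fixed $\delta>0$, by Proposition \ref{Prop3.4.2}; the second can be made arbitrarily small by shrinking $\delta$, provided the law of $\sup_{t\in[0,T]}S_{t}$ has no atom at $B$. For the remaining term $\E[|S_{T}-\bar{S}_{T}|\Ind_{A\cap\bar{A}}]$, on $A\cap\bar{A}$ the integrand is bounded by $2B$, so a split over $\{|S_{T}-\bar{S}_{T}|>\epsilon\}$ and its complement as in \eqref{eq4.6} yields
\begin{equation*}
\E\Big[\big|S_{T}-\bar{S}_{T}\big|\Ind_{A\cap\bar{A}}\Big]\le 2B\,\Prob\big(|S_{T}-\bar{S}_{T}|>\epsilon\big)+\epsilon,
\end{equation*}
which vanishes by Proposition \ref{Prop3.4.2} and the arbitrariness of $\epsilon$.

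The main obstacle is the distributional-continuity step $\Prob(B-\delta<\sup_{t\in[0,T]}S_{t}\le B)\to 0$ as $\delta\to 0$, which replaces the implicit continuity of the law of $S_{T}$ at $K$ used in Theorem \ref{Thm4.4.1}. It should hold under mild regularity of the running supremum of the diffusion $S$, which has strictly positive instantaneous volatility $\sigma(t,S_{t})\sqrt{v_{t}}$ almost surely (since $v$ never reaches $0$ in law for positive times and $\sigma$ is only assumed to be Lipschitz and bounded, with practical calibration yielding a strictly positive floor). I would take this as a standing regularity assumption on the model, in line with the tacit assumption in the vanilla put case; everything else in the argument is a bounded-payoff analogue of the Theorem \ref{Thm4.4.1} proof.
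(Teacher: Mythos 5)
Your proposal follows essentially the same argument as the paper: the same decomposition of $|U-\bar U|$ into a discount-factor term (handled by \eqref{eq4.2}) and a payoff term, the same split over $A\cap\bar A$, $A\cap\bar A^c$, $A^c\cap\bar A$ with the barrier $B$ as the uniform bound, the same inclusion $A\cap\bar A^c\subseteq\{\sup_t|S_t-\bar S_t|\ge\delta\}\cup\{B-\delta<\sup_t S_t\le B\}$, and the same use of Proposition \ref{Prop3.4.2} for the probability bounds and for the split in \eqref{eq4.6}. The one place you go beyond the paper is in explicitly naming the assumption that $\Prob(B-\delta<\sup_{t\in[0,T]}S_t\le B)\to 0$ as $\delta\to0$, i.e.\ that the law of the running supremum has no atom at $B$. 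The paper's proof, like its proof of Theorem \ref{Thm4.4.1} (where the analogous no-atom condition on $S_T$ at $K$ is implicitly used), simply asserts that this term "can be made arbitrarily small" without comment; you are right that this is a tacit regularity assumption on the model, and flagging it is a genuine improvement in rigor, though it does not change the structure of the argument.
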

\begin{proof}
Define the events $A = \big\{\sup_{t \in [0,T]}S_{t} \leq B\big\}$ and $\bar{A} = \big\{\sup_{t \in [0,T]} \bar{S}_{t} \leq B\big\}$, then
\begin{align*}
\left|U-\bar{U}\right| &\leq \E\Big[\big|\big(D_{0,T}-\bar{D}_{0,T}\big)\big(S_{T}-K\big)^{+}\Ind_{A} 
+\hspace{1pt} \bar{D}_{0,T}\big\{\big(S_{T}-K\big)^{+}\Ind_{A}-\big(\bar{S}_{T}-K\big)^{+}\Ind_{\bar{A}}\big\}\big|\Big] \nonumber \\[3pt]
&\hspace{-2em}\leq \big(B-K\big)^{+}e^{h_{max}T}\E\Big[\big|e^{-\int_{0}^{T}{g^{d}_{t} dt}}-e^{-\int_{0}^{T}{\bar{g}^{d}_{t} dt}}\big|\Big] + e^{h_{max}T}\E\left[\big|\big(S_{T}-K\big)^{+}\Ind_{A}-\big(\bar{S}_{T}-K\big)^{+}\Ind_{\bar{A}}\big|\right].
\end{align*}
The first term tends to zero by \eqref{eq4.2} and we can rewrite the second term as follows:
\begin{align}\label{eq4.16}
&\E\Big[\big|\big(S_{T}-K\big)^{+}\big(\Ind_{A\cap \bar{A}^{c}}+\Ind_{A\cap \bar{A}}\big)-\big(\bar{S}_{T}-K\big)^{+}\big(\Ind_{A\cap \bar{A}}+\Ind_{A^{c}\cap \bar{A}}\big)\big|\Big] \nonumber \\[3pt]
&\hspace{4em}\leq \E\left[\big(S_{T}-K\big)^{+}\Ind_{A\cap \bar{A}^{c}}\right] + \E\Big[\big(\bar{S}_{T}-K\big)^{+}\Ind_{A^{c}\cap \bar{A}}\Big] + \E\Big[\big|S_{T}-\bar{S}_{T}\big|\Ind_{A\cap \bar{A}}\Big] \nonumber \\[3pt]
&\hspace{4em}\leq \big(B-K\big)^{+}\Big\{\Prob\left(A\cap \bar{A}^{c}\right)+\Prob\left(A^{c}\cap \bar{A}\right)\Big\} + \E\Big[\big|S_{T}-\bar{S}_{T}\big|\Ind_{A\cap \bar{A}}\Big].
\end{align}
We can bound the last expectation from above just as in \eqref{eq4.6} to find
\begin{equation}\label{eq4.17}
\E\Big[\big|S_{T}-\bar{S}_{T}\big|\Ind_{A\cap \bar{A}}\Big] \leq B\Prob\Big(\big|S_{T}-\bar{S}_{T}\big| > \epsilon\Big) + \epsilon, \hspace{1em}\forall\hspace{1pt} \epsilon>0.
\end{equation}
Therefore, the expectation converges to zero with the time step by Proposition \ref{Prop3.4.2}. Fixing $\delta>0$ and following the argument of Theorem 6.2 in \cite{Higham:2005} leads to
\begin{align*}
A\cap \bar{A}^{c} \subseteq \bigg\{\sup_{t \in [0,T]}\big|S_{t}-\bar{S}_{t}\big| \geq \delta\bigg\} \cup \bigg\{B-\delta < \sup_{t \in [0,T]}S_{t} \leq B\bigg\}.
\end{align*}
In terms of probabilities of events, we have
\begin{equation}\label{eq4.18}
\Prob\left(A\cap \bar{A}^{c}\right) \leq \Prob\bigg(\sup_{t \in [0,T]}\big|S_{t}-\bar{S}_{t}\big| \geq \delta\bigg) + \Prob\bigg(B-\delta < \sup_{t \in [0,T]}S_{t} \leq B\bigg).
\end{equation}
We can bound the second probability in \eqref{eq4.16} from above in a similar fashion,
\begin{align}\label{eq4.19}
\Prob\left(A^{c}\cap \bar{A}\right) &\leq \Prob\bigg(\sup_{t \in [0,T]}\big|S_{t}-\bar{S}_{t}\big| \geq \delta\bigg) + \Prob\bigg(B < \sup_{t \in [0,T]}S_{t} < B + \delta\bigg).
\end{align}
The conclusion follows from Proposition \ref{Prop3.4.2} since $\delta$ can be arbitrarily small. 
\qquad
\end{proof}

\begin{theorem}\label{Thm4.4.5} Consider any type of barrier put option with arbitrage-free price
\begin{align*}
U &= \E\left[e^{-\int_{0}^{T}{r^{d}_{t} dt}}\big(K-S_{T}\big)^{+}\Ind_{A}\right], \\[3pt]
\bar{U} &= \E\Big[e^{-\int_{0}^{T}{\bar{r}^{d}_{t} dt}}\big(K-\bar{S}_{T}\big)^{+}\Ind_{\bar{A}}\Big],
\end{align*}
where the events $A$ and $\bar{A}$ depend on the type of barrier. If $2k_{f}\theta_{f}>\xi_{f}^{2}$, then
\begin{equation}\label{eq4.20}
\plim_{\delta t \to 0} \left|U - \bar{U}\right| = 0.
\end{equation}
For instance, a down-and-in barrier is associated with the set $A = \big\{\inf_{t \in [0,T]}S_{t} \leq B\big\}$.
\end{theorem}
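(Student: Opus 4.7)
My plan is to combine the argument of Theorem~\ref{Thm4.4.1} (which handles the payoff kink at $K$) with that of Theorem~\ref{Thm4.4.4} (which handles the barrier indicator). The crucial simplification over Theorem~\ref{Thm4.4.4} is that the put payoff is uniformly bounded by $K$, so the bound $(B-K)^{+}$ used for the call is simply replaced by $K$ here, and no moment control on $S_{T}$ (hence no restriction $T<T^{*}$) is required. First I would split
\begin{align*}
|U-\bar{U}| &\leq Ke^{h_{max}T}\,\E\Big[\big|e^{-\int_{0}^{T}g^{d}_{t}dt}-e^{-\int_{0}^{T}\bar{g}^{d}_{t}dt}\big|\Big] + e^{h_{max}T}\,\E\Big[\big|(K-S_{T})^{+}\Ind_{A}-(K-\bar{S}_{T})^{+}\Ind_{\bar{A}}\big|\Big].
\end{align*}
The first term vanishes as $\delta t\to 0$ by \eqref{eq4.2} and Proposition~\ref{Prop3.3.5}.

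For the second term, decomposing on $A\cap\bar{A}$, $A\cap\bar{A}^{c}$ and $A^{c}\cap\bar{A}$ and using $(K-x)^{+}\leq K$ yields
\begin{equation*}
\E\big[\,\big|(K-S_{T})^{+}\Ind_{A}-(K-\bar{S}_{T})^{+}\Ind_{\bar{A}}\big|\,\big] \leq \E\Big[\big|(K-S_{T})^{+}-(K-\bar{S}_{T})^{+}\big|\,\Ind_{A\cap\bar{A}}\Big] + K\,\Prob(A\cap\bar{A}^{c}) + K\,\Prob(A^{c}\cap\bar{A}).
\end{equation*}
The first expectation can be controlled exactly as in \eqref{eq4.3}--\eqref{eq4.6}: splitting on $\{S_{T}<K\}$ and $\{\bar{S}_{T}<K\}$ produces at most the probabilities $\Prob(|S_{T}-\bar{S}_{T}|\geq\delta)+\Prob(K-\delta<S_{T}<K+\delta)$, which tend to zero via Proposition~\ref{Prop3.4.2} (for the first) and as $\delta\to 0$ (for the second).

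For the two barrier probabilities, the inclusion argument of Theorem~\ref{Thm4.4.4} generalises to each barrier type. For a down-and-in option, with $A=\{\inf_{t\in[0,T]}S_{t}\leq B\}$ and $\bar{A}=\{\inf_{t\in[0,T]}\bar{S}_{t}\leq B\}$, fixing $\delta>0$ gives
\begin{equation*}
(A\cap\bar{A}^{c})\cup(A^{c}\cap\bar{A}) \subseteq \Big\{\sup_{t\in[0,T]}|S_{t}-\bar{S}_{t}|\geq \delta\Big\} \cup \Big\{B-\delta<\inf_{t\in[0,T]}S_{t}<B+\delta\Big\}.
\end{equation*}
The analogous inclusions hold for the down-and-out, up-and-in and up-and-out cases, with $\sup_{t\in[0,T]}S_{t}$ in place of $\inf_{t\in[0,T]}S_{t}$ where appropriate. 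The probability of the first event vanishes by Proposition~\ref{Prop3.4.2}, while the second can be made arbitrarily small by choice of $\delta$.

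\paragraph{Main obstacle.}
The hard part, as in Theorems~\ref{Thm4.4.1} and~\ref{Thm4.4.4}, is the implicit input that the marginal law of $S_{T}$ has no atom at $K$ and that the law of $\inf_{t\in[0,T]}S_{t}$ (resp.\ $\sup_{t\in[0,T]}S_{t}$) has no atom at $B$, so that the small-band probabilities vanish with $\delta$. This is standard for the continuous semimartingale \eqref{eq2.12} under $v_{0}>0$, and I would invoke it in the same spirit as in \cite{Higham:2005}. Taking $\delta t\to 0$ and then $\delta\to 0$ completes the proof, uniformly across all four barrier types and independently of the condition $T<T^{*}$ thanks to the bounded payoff.
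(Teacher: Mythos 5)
Your proposal is correct and follows essentially the same route as the paper: the same initial split isolating the discount-factor difference, the same three-event decomposition of the payoff-with-indicator term, the reduction of the $A\cap\bar A$ piece to the Theorem~\ref{Thm4.4.1} argument, and the same Higham--Mao-style inclusion bounds for $\Prob(A\cap\bar A^{c})$ and $\Prob(A^{c}\cap\bar A)$ for each barrier type. Your explicit remark that the vanishing of the small-band probabilities relies on the marginals of $S_T$ and of the running extremum having no atom at $K$ and $B$ respectively is a fair point that the paper leaves implicit, but it does not change the argument.
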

\begin{proof}
An upper bound for the absolute difference can be obtained as follows:
\begin{align*}
\left|U\hspace{-1pt}-\bar{U}\right| &\leq\E\Big[\big|\big(D_{0,T}-\bar{D}_{0,T}\big)\big(K\hspace{-.5pt}-\hspace{-.5pt}S_{T}\big)^{+}\Ind_{A} +\hspace{.5pt} \hspace{1pt} \bar{D}_{0,T}\big\{\big(K\hspace{-.5pt}-\hspace{-.5pt}S_{T}\big)^{+}\Ind_{A}-\big(K\hspace{-.5pt}-\hspace{-.5pt}\bar{S}_{T}\big)^{+}\Ind_{\bar{A}}\hspace{-2pt}\big\}\big|\Big] \nonumber \\[4pt]
&\leq Ke^{h_{max}T}\E\Big[\big|e^{-\int_{0}^{T}{g^{d}_{t} dt}}-e^{-\int_{0}^{T}{\bar{g}^{d}_{t} dt}}\big|\Big] + e^{h_{max}T}\E\left[\big|\big(K-S_{T}\big)^{+}\Ind_{A}-\big(K-\bar{S}_{T}\big)^{+}\Ind_{\bar{A}}\big|\right].
\end{align*}
The first term tends to zero by \eqref{eq4.2} and we can bound the second term as in \eqref{eq4.16}:
\begin{align}\label{eq4.22}
\E\left[\big|\big(K-S_{T}\big)^{+}\Ind_{A}-\big(K-\bar{S}_{T}\big)^{+}\Ind_{\bar{A}}\big|\right] &\leq K\Big\{\Prob\left(A\cap \bar{A}^{c}\right)+\Prob\left(A^{c}\cap \bar{A}\right)\Big\} \nonumber \\[3pt]
&+ \E\left[\big|\big(K-S_{T}\big)^{+}-\big(K-\bar{S}_{T}\big)^{+}\big|\right].
\end{align}
The events $A$ and $\bar{A}$ differ with the barrier (down-and-in, down-and-out, up-and-in, up-and-out), however one can show in a similar way to \eqref{eq4.18} and \eqref{eq4.19} that
\begin{equation}\label{eq4.23}
\plim_{\delta t \to 0}\Prob\left(A\cap \bar{A}^{c}\right)=0 \; \text{ and } \; \plim_{\delta t \to 0}\Prob\left(A^{c}\cap \bar{A}\right)=0
\end{equation}
for any type of barrier. Finally, the convergence of the last term on the right-hand side of \eqref{eq4.22} was derived in Theorem \ref{Thm4.4.1}, which concludes the proof. 
\qquad
\end{proof}

\begin{theorem}\label{Thm4.4.6} Consider a down-and-in/out or up-and-in barrier call option with arbitrage-free price
\begin{align*}
U &= \E\left[e^{-\int_{0}^{T}{r^{d}_{t} dt}}\big(S_{T}-K\big)^{+}\Ind_{A}\right], \\[3pt]
\bar{U} &= \E\Big[e^{-\int_{0}^{T}{\bar{r}^{d}_{t} dt}}\big(\bar{S}_{T}-K\big)^{+}\Ind_{\bar{A}}\Big],
\end{align*}
where the events $A$ and $\bar{A}$ depend on the type of barrier. If $2k_{f}\theta_{f}>\xi_{f}^{2}$ and $T<T^{*}$, with $T^{*}$ from \eqref{eq3.43}, then
\begin{equation}\label{eq4.24}
\plim_{\delta t \to 0} \left|U - \bar{U}\right| = 0.
\end{equation}
\end{theorem}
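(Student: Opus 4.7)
The plan is to combine the techniques from Theorem \ref{Thm4.4.5} (which handled barrier-indicator mismatch for a bounded put payoff) with those from Theorem \ref{Thm4.4.2} (which controlled an unbounded call payoff via moment bounds on the discounted process). The essential novelty relative to Theorem \ref{Thm4.4.4} on the up-and-out call is that for down-and-in, down-and-out and up-and-in barriers the spot is not capped from above on the favourable event $A$, so the bounded-payoff trick used there fails; this is precisely why the assumption $T<T^*$ with $T^*$ as in \eqref{eq3.43} is imposed, as it unlocks Proposition \ref{Prop3.4.4}.

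First I would write $D_{0,T}S_{T}=R_{T}$ and $\bar{D}_{0,T}\bar{S}_{T}=\bar{R}_{T}$, and then, after adding and subtracting $(R_{T}-K\bar{D}_{0,T})^{+}\Ind_{A}$, bound
\begin{align*}
|U-\bar{U}| &\leq Ke^{h_{max}T}\E\Big[\big|e^{-\int_{0}^{T}{g^{d}_{t}dt}}-e^{-\int_{0}^{T}{\bar{g}^{d}_{t}dt}}\big|\Big] + \E\Big[\big|(R_{T}-K\bar{D}_{0,T})^{+}\Ind_{A}-(\bar{R}_{T}-K\bar{D}_{0,T})^{+}\Ind_{\bar{A}}\big|\Big].
\end{align*}
The first term is controlled by \eqref{eq4.2} and Proposition \ref{Prop3.3.5}. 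For the second term I would partition the indicators over $A\cap\bar{A}$, $A\cap\bar{A}^{c}$ and $A^{c}\cap\bar{A}$, and, using $|x^{+}-y^{+}|\leq|x-y|$ together with $(R_{T}-K\bar{D}_{0,T})^{+}\leq R_{T}$ and $(\bar{R}_{T}-K\bar{D}_{0,T})^{+}\leq\bar{R}_{T}$, reduce matters to showing that
\begin{equation*}
\E\big[|R_{T}-\bar{R}_{T}|\big] + \E\big[R_{T}\Ind_{A\cap\bar{A}^{c}}\big] + \E\big[\bar{R}_{T}\Ind_{A^{c}\cap\bar{A}}\big] \xrightarrow{\delta t\to0} 0,
\end{equation*}
where the first summand is dispatched by Theorem \ref{Thm3.4.6}.

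For the two barrier-mismatch terms I would pick $1<\omega<\min\{\omega_{1},\omega_{2}\}$ with $\omega_{1},\omega_{2}$ furnished by Propositions \ref{Prop3.4.3} and \ref{Prop3.4.4}, both applicable because the $T^*$ in \eqref{eq3.43} is dominated by the one in Proposition \ref{Prop3.4.3}, a fact already used in the proof of Theorem \ref{Thm3.4.6}. H\"older's inequality then gives
\begin{equation*}
\E\big[R_{T}\Ind_{A\cap\bar{A}^{c}}\big] \leq \sup_{t\in[0,T]}\E\big[R_{t}^{\omega}\big]^{1/\omega}\,\Prob\big(A\cap\bar{A}^{c}\big)^{1-1/\omega},
\end{equation*}
and analogously for $\E[\bar{R}_{T}\Ind_{A^{c}\cap\bar{A}}]$, with the moment factors finite and uniformly bounded in $\delta t$.

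It remains to show $\Prob(A\cap\bar{A}^{c})\to0$ and $\Prob(A^{c}\cap\bar{A})\to0$, which is a verbatim repeat of the arguments in Theorems \ref{Thm4.4.4} and \ref{Thm4.4.5}. For instance, for a down-and-in barrier $A=\{\inf_{t\in[0,T]}S_{t}\leq B\}$ one has
\begin{equation*}
A\cap\bar{A}^{c} \subseteq \Big\{\sup_{t\in[0,T]}|S_{t}-\bar{S}_{t}|\geq\delta\Big\} \cup \Big\{B<\inf_{t\in[0,T]}S_{t}\leq B+\delta\Big\}
\end{equation*}
for any $\delta>0$, with the first probability vanishing by Proposition \ref{Prop3.4.2} and the second made arbitrarily small by taking $\delta$ small, on the basis that the law of $\inf_{t\in[0,T]}S_{t}$ of the continuous diffusion $S$ has no atom at the barrier; the same template handles $A^{c}\cap\bar{A}$ and the up-and-in and down-and-out barriers. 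The main obstacle, beyond bookkeeping, is confirming the compatibility of the two moment-explosion times so that a single exponent $\omega>1$ serves for both $R$ and $\bar{R}$, which is exactly what the definition of $T^{*}$ in \eqref{eq3.43} ensures.
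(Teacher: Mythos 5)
Your proposal matches the paper's proof essentially step for step: the same decomposition over $A\cap\bar{A}$, $A\cap\bar{A}^{c}$, $A^{c}\cap\bar{A}$, control of the discount-factor mismatch via \eqref{eq4.2}, H\"older with an exponent $1<\omega<\min\{\omega_1,\omega_2\}$ supplied by Propositions~\ref{Prop3.4.3} and~\ref{Prop3.4.4} to handle $\E[R_T\Ind_{A\cap\bar{A}^{c}}]$ and $\E[\bar{R}_T\Ind_{A^{c}\cap\bar{A}}]$, and the limits \eqref{eq4.23} for the barrier-mismatch probabilities. One small slip: for the down-and-in case with $A=\{\inf_{t}S_{t}\leq B\}$, your displayed inclusion for $A\cap\bar{A}^{c}$ is vacuous (on $A$ one has $\inf_{t}S_{t}\leq B$, so $\{B<\inf_{t}S_{t}\leq B+\delta\}$ is disjoint from $A$); the correct set is $\{B-\delta\leq\inf_{t}S_{t}\leq B\}$, with $\{B<\inf_{t}S_{t}\leq B+\delta\}$ arising instead for $A^{c}\cap\bar{A}$ — the structure of the argument, and the conclusion, are otherwise unchanged.
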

\begin{proof}
An upper bound for the absolute difference can be obtained as follows:
\begin{align*}
\left|U-\bar{U}\right| &\leq \E\Big[\big|\big(R_{T}-KD_{0,T}\big)^{+} - \big(\bar{R}_{T}-K\bar{D}_{0,T}\big)^{+}\big|\Ind_{A\cap\bar{A}}\Big] \\[3pt]
&+ \E\Big[\big(R_{T}-KD_{0,T}\big)^{+}\Ind_{A\cap\bar{A}^{c}}\Big] + \E\Big[\big(\bar{R}_{T}-K\bar{D}_{0,T}\big)^{+}\Ind_{A^{c}\cap\bar{A}}\Big].
\end{align*}
Therefore, we end up with
\begin{align*}
\left|U-\bar{U}\right| \leq \E\!\Big[\big|R_{T}-\bar{R}_{T}\big|\Big] + K\hspace{-1pt}e^{h_{max}T}\E\!\Big[\big|e^{-\int_{0}^{T}{g^{d}_{t} dt}}-e^{-\int_{0}^{T}{\bar{g}^{d}_{t} dt}}\big|\Big] + \E\!\Big[R_{T}\Ind_{A\cap\bar{A}^{c}}\Big] + \E\!\Big[\bar{R}_{T}\Ind_{A^{c}\cap\bar{A}}\Big].
\end{align*}
The convergence of the first two terms on the right-hand side is a consequence of Theorem \ref{Thm3.4.6} and \eqref{eq4.2}, respectively. Applying H\"older's inequality with the pair $(\omega,\gamma)$ to the last two terms, where $1<\omega<\min\{\omega_{1},\omega_{2}\}$, we find that:
\begin{align*}
\E\Big[R_{T}\Ind_{A\cap\bar{A}^{c}}\Big] \leq \E\Big[\big(R_{T}\big)^{\omega}\Big]^{\frac{1}{\omega}}\Prob\left(A\cap \bar{A}^{c}\right)^{\frac{1}{\gamma}}
\end{align*}
and
\begin{align*}
\E\Big[\bar{R}_{T}\Ind_{A^{c}\cap\bar{A}}\Big] \leq \E\Big[\big(\hspace{.5pt}\bar{R}_{T}\big)^{\omega}\Big]^{\frac{1}{\omega}}\Prob\left(A^{c}\cap \bar{A}\right)^{\frac{1}{\gamma}}.
\end{align*}
Using Propositions \ref{Prop3.4.3} and \ref{Prop3.4.4} (with $\alpha=1$) and the limits in \eqref{eq4.23} concludes the proof. 
\qquad
\end{proof}

Among the most developed exotic derivatives in the FX market are the double barrier options.

\begin{theorem}\label{Thm4.4.7} Consider a double knock-out call option with arbitrage-free price
\begin{align*}
U &= \E\left[e^{-\int_{0}^{T}{r^{d}_{t} dt}}\big(S_{T}-K\big)^{+}\Ind_{\left\{\inf_{t \in [0,T]} S_{t} \geq L,\;\sup_{t \in [0,T]} S_{t} \leq B\right\}}\right], \\[3pt]
\bar{U} &= \E\Big[e^{-\int_{0}^{T}{\bar{r}^{d}_{t} dt}}\big(\bar{S}_{T}-K\big)^{+}\Ind_{\big\{\inf_{t \in [0,T]} \bar{S}_{t} \geq L,\;\sup_{t \in [0,T]} \bar{S}_{t} \leq B\big\}}\Big],
\end{align*}
where $K$ is the strike and $L$, $B$ are the lower and upper barriers, respectively. If $2k_{f}\theta_{f}>\xi_{f}^{2}$, then
\begin{equation}\label{eq4.26}
\plim_{\delta t \to 0} \left|U - \bar{U}\right| = 0.
\end{equation}
\end{theorem}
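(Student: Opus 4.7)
The plan is to mimic the structure of Theorem~\ref{Thm4.4.4} and Theorem~\ref{Thm4.4.5}, exploiting the fact that the double-barrier payoff is bounded (by $(B-K)^{+}$ on the survival event), so that no moment condition beyond the Feller condition on $g^{f}$ is needed and $T^{*}$ does not enter. Concretely, I will first split
\begin{equation*}
|U-\bar{U}| \leq (B-K)^{+}e^{h_{max}T}\,\E\!\left[\big|e^{-\int_{0}^{T}{g^{d}_{t}dt}}-e^{-\int_{0}^{T}{\bar{g}^{d}_{t}dt}}\big|\right] + e^{h_{max}T}\,\E\!\left[\big|\big(S_{T}-K\big)^{+}\Ind_{A}-\big(\bar{S}_{T}-K\big)^{+}\Ind_{\bar{A}}\big|\right],
\end{equation*}
where $A=\{\inf_{t\in[0,T]}S_{t}\geq L,\,\sup_{t\in[0,T]}S_{t}\leq B\}$ and $\bar{A}$ is the analogous event for the approximation. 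The first term tends to zero by \eqref{eq4.2}, exactly as in the previous barrier theorems.

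For the second term, I will decompose $\Ind_{A}-\Ind_{\bar{A}}$ via the partition $\{A\cap\bar{A}\}\cup\{A\cap\bar{A}^{c}\}\cup\{A^{c}\cap\bar{A}\}$ and obtain
\begin{equation*}
\E\!\left[\big|(S_{T}-K)^{+}\Ind_{A}-(\bar{S}_{T}-K)^{+}\Ind_{\bar{A}}\big|\right] \leq (B-K)^{+}\big\{\Prob(A\cap\bar{A}^{c})+\Prob(A^{c}\cap\bar{A})\big\} + \E\!\left[|S_{T}-\bar{S}_{T}|\Ind_{A\cap\bar{A}}\right],
\end{equation*}
using that on $A$, $(S_{T}-K)^{+}\leq(B-K)^{+}$ and similarly for $\bar{A}$. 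The last expectation is handled exactly as in \eqref{eq4.6}: for arbitrary $\epsilon>0$, split on $\{|S_{T}-\bar{S}_{T}|>\epsilon\}$ and use boundedness by $B$ on $A\cap\bar{A}$, so that Proposition \ref{Prop3.4.2} delivers convergence to zero.

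The main technical step is bounding the symmetric-difference probabilities. I will adapt the argument around \eqref{eq4.18}--\eqref{eq4.19}: for any $\delta>0$,
\begin{equation*}
A\cap\bar{A}^{c} \subseteq \bigg\{\sup_{t\in[0,T]}|S_{t}-\bar{S}_{t}|\geq\delta\bigg\} \cup \bigg\{L\leq\inf_{t\in[0,T]}S_{t}<L+\delta\bigg\} \cup \bigg\{B-\delta<\sup_{t\in[0,T]}S_{t}\leq B\bigg\},
\end{equation*}
and symmetrically
\begin{equation*}
A^{c}\cap\bar{A} \subseteq \bigg\{\sup_{t\in[0,T]}|S_{t}-\bar{S}_{t}|\geq\delta\bigg\} \cup \bigg\{L-\delta<\inf_{t\in[0,T]}S_{t}<L\bigg\} \cup \bigg\{B<\sup_{t\in[0,T]}S_{t}<B+\delta\bigg\}.
\end{equation*}
The first set on the right of each inclusion has vanishing probability as $\delta t\to 0$ by Proposition \ref{Prop3.4.2}; the remaining neighbourhoods of the barriers have probability that can be made arbitrarily small by choosing $\delta$ small, provided the laws of $\inf_{t}S_{t}$ and $\sup_{t}S_{t}$ are continuous at $L$ and $B$ respectively. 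Since $S$ solves a non-degenerate SDE with strictly positive diffusion coefficient whenever $v_{t}>0$, these laws have no atoms at the barriers. Letting first $\delta t\to 0$ and then $\delta,\epsilon\to 0$ completes the proof.

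The main obstacle is simply the verification that $\Prob(\inf_{t}S_{t}\in[L,L+\delta))$ and $\Prob(\sup_{t}S_{t}\in(B-\delta,B])$ vanish as $\delta\to 0$; this atomlessness property is used implicitly in Theorem \ref{Thm4.4.4} as well, so the same justification transfers verbatim. No new ingredient beyond Proposition \ref{Prop3.4.2} and equation \eqref{eq4.2} is required.
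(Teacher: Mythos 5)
Your proof is correct and follows essentially the same route as the paper: split off the discount-factor term via \eqref{eq4.2}, reduce the symmetric-difference events $A\cap\bar A^{c}$ and $A^{c}\cap\bar A$ to up-barrier and down-barrier crossing events, and invoke Proposition~\ref{Prop3.4.2} together with the smallness of barrier-neighbourhood probabilities for small $\delta$. The paper records only the single event inclusion reducing the double-barrier case to two single-barrier cases and then defers to Theorem~\ref{Thm4.4.4}; your version spells out those inclusions directly and makes explicit the atomlessness of the laws of $\inf_{t}S_{t}$ and $\sup_{t}S_{t}$ at $L$ and $B$, which the paper uses implicitly in Theorems~\ref{Thm4.4.1} and~\ref{Thm4.4.4} as well.
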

\begin{proof}
First, note that we have the following inclusion of events:
\begin{align*}
&\big\{\hspace{-.2em}\inf_{t \in [0,T]} S_{t} \geq L,\;\sup_{t \in [0,T]} S_{t} \leq B\big\} \cap \big\{\hspace{-.2em}\inf_{t \in [0,T]} \bar{S}_{t} \geq L,\;\sup_{t \in [0,T]} \bar{S}_{t} \leq B\big\}^{c} \\[2pt]
&\hspace{3em} \subseteq \Big(\big\{\hspace{-.3em}\sup_{t \in [0,T]} S_{t} \leq B\big\}\cap\big\{\hspace{-.3em}\sup_{t \in [0,T]} \bar{S}_{t} > B\big\}\Big) \cup \Big(\big\{\hspace{-.2em}\inf_{t \in [0,T]} S_{t} \geq L\big\}\cap\big\{\hspace{-.2em}\inf_{t \in [0,T]} \bar{S}_{t} < L\big\}\Big).
\end{align*}
The rest of the argument follows closely that of Theorem \ref{Thm4.4.4} and is thus omitted. 
\qquad
\end{proof}

\section{Numerical results}\label{sec:numerics}

In this section, we consider the pricing problem for exotic products, in particular, for structured notes embedding barrier features. Popular FX exotic products include the power reverse dual currency note (PRDC) \cite{Clark:2011}, especially long-term (e.g., 30 years), and the forward accumulator \cite{Wystup:2007}. Adding stochastic rates improves the pricing of both contracts for mid- to long-term expiries. Moreover, while the plain PRDC can be seen as a strip of vanilla options, adding stochastic-local volatility dynamics improves the pricing of some variations of the contract, like the trigger PRDC (a PRDC with a knock-out feature).

Henceforth, we consider an autocallable barrier dual currency note (ABDC) that is indexed on the EURUSD currency pair, with USD as the domestic currency and EUR as the foreign currency, with quarterly coupon payments. While barrier dual currency notes are typically short-term investments, we examine instead a mid-term alternative with an embedded autocallable note structure. For a nominal $N$ (in USD), a rate (strike) $K$ and an expiry $T$, the life cycle of the contract is described below.
\begin{enumerate}[(1)]
\item{A monthly fixing schedule is defined for the exchange rate $S$ such that:
\begin{itemize}
\item If $S$ crosses the up barrier $B_{\text{UO}}$, the contract is redeemed early and a coupon $\mathcal{C}_{\text{ER}}$ is paid. The early redemption time is denoted by $\tau_{\text{ER}}$ and can be infinite in case of no knock-out event.
\item If $S$ crosses the down barrier $B_{\text{DI}}$, a short put contract is activated at expiry. The knock-in time is denoted by $\tau_{\text{KI}}$.
\end{itemize}}
\item{If $S$ is above $B_{\text{DI}}$, a coupon $\mathcal{C}$ (in USD, in \% of the nominal) is paid quarterly at coupon dates $\left\{t_{i}\hspace{1pt},\hspace{2pt} i=1,\hdots,M\right\}$, for a total of $M$ periods, such that the accumulated coupon (in USD) at expiry is
\begin{equation}\label{eq6.1}
N\sum_{i=1}^{M}{\frac{D_{t_{i}}^{d}}{D_{T}^{d}}\hspace{1pt}\mathcal{C}\Ind_{t_{i}<\tau_{\text{ER}}}\Ind_{S_{t_{i}}>B_{\text{DI}}}}
+ N\frac{D_{\tau_{\text{ER}}}^{d}}{D_{T}^{d}}\hspace{1pt}\mathcal{C}_{\text{ER}}\Ind_{\tau_{\text{ER}}\leq T}.
\end{equation}}
\item{At expiry, if $S$ is below $K$, the down barrier has been activated and early redemption has not occurred, then the nominal is converted to EUR at the rate $K$.}
\end{enumerate}
This product would suit an investor who wants to outperform the money market account by taking the risk of conversion of the nominal value to the foreign currency. The autocall feature lowers the price of the product due to early termination in case the exchange rate crosses the up-and-out barrier, whereas the down-and-in barrier feature provides protection against the conversion risk at the expense of increasing the net present value (NPV). Hence, an investor would find this product attractive under the current volatile market conditions if they expected a less volatile market in the future, which would imply a lower NPV due to the increased chance of early redemption and conversion. Over a longer time horizon, a very stable market would be the best outcome for the investor, since they would get all the coupons without converting the nominal at maturity. If conversion occurs at expiry, the investor receives $N/K$ in EUR in exchange for the nominal $N$ in USD. If the investor converts this amount to USD, the loss becomes $N\big(S_{T}/K-1\big)$. Hence, the profit-and-loss (PnL) of the product is
\begin{equation}\label{eq6.2}
N\sum_{i=1}^{M}{\frac{D_{t_{i}}^{d}}{D_{T}^{d}}\hspace{1pt}\mathcal{C}\Ind_{t_{i}<\tau_{\text{ER}}}\Ind_{S_{t_{i}}>B_{\text{DI}}}}
+ N\frac{D_{\tau_{\text{ER}}}^{d}}{D_{T}^{d}}\hspace{1pt}\mathcal{C}_{\text{ER}}\Ind_{\tau_{\text{ER}}\leq T}
-\hspace{1pt} \frac{N}{K}(K-S_{T})^{+}\Ind_{\tau_{\text{KI}}\leq T<\tau_{\text{ER}}}.
\end{equation}
Hence, this product is a yield enhancement contract with no capital guarantee. The accrued coupon is sometimes converted at expiry with the nominal at the rate $K$. Finally, the NPV of the contract (in \% of the nominal) is
\begin{equation}\label{eq6.3}
\text{NPV} = \E\!\Bigg[\sum_{i=1}^{M}{D_{t_{i}}^{d}\mathcal{C}\Ind_{t_{i}<\tau_{\text{ER}}}\Ind_{S_{t_{i}}>B_{\text{DI}}}} + D_{\tau_{\text{ER}}}^{d}\mathcal{C}_{\text{ER}}\Ind_{\tau_{\text{ER}}\leq T} - \frac{D_{T}^{d}}{K}\left(K-S_{T}\right)^{+}\Ind_{\tau_{\text{KI}}\leq T<\tau_{\text{ER}}}\Bigg],
\end{equation}
where the expectation is taken under the risk-neutral measure. We assume the dynamics of $S$ under this measure as specified in \eqref{eq2.1}.

Next, consider the contract parameters from Table \ref{table6.1} as well as the calibrated model parameters and leverage function -- to the EURUSD market data from March 18, 2016 -- from \cite{Mariapragassam:2016}. Suppose that barriers are monitored monthly, at the fixing dates, and coupons are paid quarterly.
\begin{table}[htb]
\begin{center}
\caption{The contract parameters of an autocallable barrier dual currency note, where the nominal $N$ is in USD, the expiry $T$ is in years, the strike $K$ and the barriers $B_{\text{UO}}$ and $B_{\text{DI}}$ are in \% of $S_{0}$, and the coupons $\mathcal{C}$ and $\mathcal{C}_{\text{ER}}$ are in USD, in \% of $N$.}\label{table6.1}
\begin{tabularx}{\textwidth}{@{}YYYYYYYY@{}}
  \addlinespace[-5pt]
	\toprule[.1em]
  $N$ & $T$ & $S_{0}$ & $K$ & $B_{\text{UO}}$ & $B_{\text{DI}}$ & $\mathcal{C}$ & $\mathcal{C}_{\text{ER}}$ \\
  \midrule
  $100\hspace{1pt}000$ & $5$Y & $1.1271$ & $105$\% & $100$\% & $95$\% & $2.5$\% & $1.5$\% \\
	\bottomrule[.1em]
	\addlinespace[3pt]
\end{tabularx}
\end{center}
\end{table}

We employ the Monte Carlo simulation scheme defined in Section~\ref{sec:analysis} with $5\scalebox{0.85}{$\times$}10^{7}$ sample paths and $376$ time steps per year to price this contract, and the numerical results are displayed in Table \ref{table6.2}. Note that we computed the percentage change in premium (NPV) with respect to the reference model, i.e., the Heston--2CIR\scalebox{.9}{\raisebox{.5pt}{++}} SLV model.
\begin{table}[htb]\addtolength{\tabcolsep}{-4pt}
\begin{center}
\caption{The NPV of the contract in \% of $N$, the $95$\% Monte Carlo confidence interval, the early redemption and knock-in probabilities and the percentage change in NPV, for the autocallable barrier dual currency note specified in Table \ref{table6.1} under the 4-factor hybrid SLV model \eqref{eq2.1}, the 2-factor SLV model with deterministic rates and the LV model \eqref{eq2.2.c}.}\label{table6.2}
\begin{tabularx}{\textwidth}{@{}YYYYYY@{}}
\addlinespace[-5pt]
  \toprule[.1em]
	\textbf{Model} & \textbf{NPV} & \textbf{$95$\% CI} & \textbf{ER prob.} & \textbf{KI prob.} & \textbf{Change} \\
  \midrule
  \textbf{Hybrid SLV} & $1.7247$ & $(1.7228,1.7265)$ & $95.1$\% & $19.5$\% & -- \\[1pt]
  \textbf{SLV} & $1.7456$ & $(1.7438, 1.7474)$ & $95.1$\% & $19.5$\% & $+1.21$\% \\[1pt]
  \textbf{LV} & $1.2079$ & $(1.2062,1.2097)$ & $95.9$\% & $19.5$\% & $-29.96$\% \\[1pt]
  \bottomrule[.1em]
  \addlinespace[3pt]
\end{tabularx}
\end{center}
\end{table}

We infer from Table \ref{table6.2} that the price of the contract under the 2-factor SLV model is $1.21$\% higher than under the 4-factor SLV model, whereas the early redemption and the knock-in probabilities are the same under the two models. This suggests that the stochastic rates have a significant impact on the price of the contract even for a 5-year expiry. We also infer from Table \ref{table6.2} that the contract is highly underpriced and the early redemption probability is overestimated under the LV model \eqref{eq2.2.c}. Furthermore, we calibrated in \cite{Mariapragassam:2016} the 4-factor SLV and SV models to EURUSD market data from March 18, 2016, and observed an almost perfect fit to vanilla options with the former as opposed to a poor fit with the latter. In conclusion, we notice two things. First, pricing exotic products with barrier features under stochastic-local volatility dynamics is of paramount importance. It is a well-known fact that pure SV models underestimate the knock-out probability whereas pure LV models overestimate it, and the true price of the contract is believed to lie in between pure SV and pure LV model prices. In our case, a small difference in the probability of no knock-out, which is leveraged by the number of coupons detached during the life cycle of the contract, can lead to a big difference in the NPV, and this explains the need to use SLV models to improve the pricing performance. Second, adding stochastic rates to an SLV model is only relevant when pricing mid- to long-term structured notes, where the effect of the stochastic rates is accentuated. The impact of stochastic rates on the contract price is expected to increase in the presence of non-zero correlations between the spot FX rate and the short rates.

A small difference in the probability of no knock-out gives rise to a big difference in the NPV since it is leveraged by the number of coupons detached during the lifecycle of the contract.

We could extend the model \eqref{eq2.1} to multi-factor short rates when pricing exotic products where the rates appear explicitly in the payoff, for example, a spread option with the payoff
\begin{equation}\label{eq6.4}
\left[\left(\frac{S_{T}-S_{0}}{S_{0}}\right)-L_{T}-K\right]^{+},
\end{equation}
where $L_{T}$ the Libor rate at the fixing date $T$. In this case, the calibration algorithm described in Section~\ref{sec:setup} can still be applied, with a higher computational cost due to the more complex simulation scheme.

We conclude this section with an empirical convergence analysis of our Monte Carlo simulation scheme. Since the product can be decomposed into a linear combination of first-order exotics, the theoretical convergence (without a rate) of Monte Carlo estimators follows automatically from the analysis in Section~\ref{sec:analysis}. The data in Table \ref{table6.3} suggest a first-order convergence of the time-discretization error. As an aside, note that in the case of continuously monitored barriers, we could use Brownian bridge techniques to recover the first-order convergence.
\begin{table}[htb]
\begin{center}
\caption{The Monte Carlo estimates of the NPV of the contract specified in Table \ref{table6.1} for different numbers of time steps (per year) and $5\!\times\!10^{7}$ sample paths (for a standard deviation of $9.29\!\times\!10^{-4}$), the difference from the previous NPV estimate and the empirical convergence order.}\label{table6.3}
\begin{tabularx}{\textwidth}{@{}YYYY@{}}
	\addlinespace[-5pt]
	\toprule[.1em]
	\textbf{Time steps} & \textbf{NPV} & \textbf{Difference} & \textbf{Order} \\
	\midrule
	$12$ & $1.9081$ & -- & -- \\
	$24$ & $1.8110$ & $0.0971$ & -- \\
	$48$ & $1.7623$ & $0.0487$ & $0.996$ \\
	$96$ & $1.7388$ & $0.0234$ & $1.057$ \\
	$\hspace{-5pt}192$ & $1.7285$ & $0.0103$ & $1.184$ \\
	\bottomrule[.1em]
	\addlinespace[3pt]
\end{tabularx}
\end{center}
\end{table}

The empirical findings of this section demonstrate the importance of stochastic-local volatility dynamics as well as stochastic short rate dynamics for the pricing of long-dated exotic FX products. Furthermore, we verified the convergence of our Monte Carlo simulation scheme for one such product.

\section{Conclusions}\label{sec:conclusion}

Our aim was to establish the strong convergence of an Euler scheme for a hybrid stochastic-local volatility model. The only previous published work related to this problem that we are aware of is \cite{Higham:2005}, which proves the convergence of an Euler discretization with a reflection fix in the context of Heston's model and options with bounded payoffs. We established the strong $L^{1}$-convergence (without a rate) of the discounted exchange rate approximation, a result which can be generalized to the $L^{p}$ case relatively easily, for all $p\geq1$, albeit under a stronger condition on the maturity, and which is particularly useful when proving the convergence of Monte Carlo simulations for valuing options with unbounded payoffs.

The analysis carried out in this paper can be extended to other financial derivatives, including digital options, forward-start options and also double-no-touch binary options, to name just a few. Furthermore, we may consider a multi-factor extension of the short rate model, in which case the convergence analysis applies with some slight modifications of the proofs.

However, several unsettled questions remain, like the exact strong convergence rate of the full truncation scheme for the CIR process, or the strong convergence rate of schemes for the type of SDEs studied in this paper. On top of these being interesting and practically relevant questions in their own right, a sufficiently high order enables the use of multi-level simulation, as in \cite{Giles:2009}, with substantial efficiency improvements for the estimation of expected financial payoffs.

\bibliographystyle{siam}
\bibliography{references}

\end{document}